\providecommand{\keywords}[1]{\textbf{\textit{Keywords: }} #1}
\theoremstyle{plain}
\newtheorem{theorem}{Theorem}[]
\newtheorem{lemma}[theorem]{Lemma}
\newtheorem{coro}[theorem]{Corollary}
\theoremstyle{definition}
\newtheorem{defn}[theorem]{Definition}
\newtheorem{example}[theorem]{Example}
\newtheorem*{remark}{Remark}
\newcommand{\thmref}[1]{Theorem~\ref{#1}}
\newcommand{\lemref}[1]{Lemma~\ref{#1}}
\newcommand{\secref}[1]{\S~\ref{#1}}
\newcommand{\figref}[1]{Figure~\ref{#1}}
\newcommand{\ie}{{i.e.}}
\newcommand{\eg}{{e.g.}}
\newcommand*{\rom}[1]{\expandafter\@slowromancap\romannumeral #1@}
\newcommand{\wt}[1]{\widetilde{#1}}
\DeclareMathOperator{\tr}{Tr}
\newcommand{\ud}{\,\mathrm{d}}
\newcommand{\MM}{\mathbb{M}}
\newcommand{\HH}{\mathbb{H}}
\newcommand{\Real}{\mathbb{R}}
\newcommand{\Complex}{\mathbb{C}}
\newcommand{\ee}{\mathbb{E}}
\newcommand{\vect}[1]{\textbf{#1}}
\newcommand{\eps}{\epsilon}
\newcommand{\Abs}[1]{\left\lvert#1\right\rvert}
\newcommand{\Norm}[1]{\left\lVert#1\right\rVert}
\newcommand{\Inner}[2]{\left\langle#1, #2\right\rangle}
\def\bigl{\mathopen\big}
\def\bigr{\mathclose\big}
\newcommand{\unit}{\mathbb{I}}
\newcommand{\id}{\mathcal{I}}
\newcommand{\Bra}[1]{\left\langle#1\right\rvert}
\newcommand{\Ket}[1]{\left\lvert#1\right\rangle}
\newcommand{\hbt}{\mathcal{H}}
\newcommand{\dset}{\mathfrak{D}}
\newcommand{\dsetplus}{\mathfrak{D}^{+}} 
\newcommand\myeq[1]{\mathrel{\stackrel{\makebox[0pt]{\mbox{\normalfont\tiny #1}}}{=}}}
\newcommand\myle[1]{\mathrel{\stackrel{\makebox[0pt]{\mbox{\normalfont\tiny #1}}}{\le}}}
\newcommand{\chisq}{\chi^2_{\kappa}}
\newcommand{\chisqarg}[2]{\chi^2_{\kappa}\left(#1 \mid\mid #2\right)}
\newcommand{\hs}{HS}
\newcommand{\half}{1/2}
\newcommand{\Omegaall}[1]{\Omega_{#1}}
\newcommand{\sdpiratio}{\mathsf{R}}
\newcommand{\Kop}{\mathscr{K}}
\newcommand{\Kset}{\mathcal{K}}
\newcommand{\revise}[1]{{#1}}
\newcommand{\revisetwo}[1]{{#1}}
\begin{document}
\title{Tensorization of the strong data processing inequality for quantum chi-square divergences}

\author{Yu Cao} 
\email{yucao@math.duke.edu} 
\affiliation{Department of Mathematics, Duke University, Box 90320, Durham NC 27708, USA}

\author{Jianfeng Lu} 
\email{jianfeng@math.duke.edu}
\affiliation{Department of Mathematics, Duke University, Box 90320, Durham NC 27708, USA}
\affiliation{Department of Physics and Department of Chemistry, Duke University, Box 90320, Durham NC 27708, USA}

\begin{abstract}
  \revise{It is well-known that any quantum channel $\mathcal{E}$
    satisfies the data processing inequality (DPI), with respect to
    various \revisetwo{divergences}, \eg, quantum $\chi^2_{\kappa}$
    divergences and quantum relative entropy.  More specifically, the data
    processing inequality states that the divergence between two
    arbitrary quantum states $\rho$ and $\sigma$ does not increase
    under the action of any quantum channel $\mathcal{E}$.  For a
    fixed channel $\mathcal{E}$ and a state $\sigma$,
    \revisetwo{the divergence between output states $\mathcal{E}(\rho)$ and $\mathcal{E}(\sigma)$ might be strictly smaller than the divergence between input states $\rho$ and $\sigma$, which is
    characterized by the strong data processing inequality (SDPI).
 Among various input states $\rho$, the largest value of the rate of contraction is known as the SDPI constant.}   
    An important and widely studied property for classical channels is
    that SDPI constants tensorize. In this paper, we extend the
    tensorization property to the quantum regime: we establish the
    tensorization of SDPIs for the quantum $\chi^2_{\kappa_{1/2}}$ divergence
    for arbitrary quantum channels and also for a family of $\chisq$
    divergences (with $\kappa \ge \kappa_{1/2}$) for arbitrary quantum-classical channels.}
\end{abstract}

\keywords{strong data processing inequality, tensorization, quantum chi-square divergence}
\maketitle

\section{Introduction} 

\revise{
In information theory, the data processing inequality (DPI) has been an important property for \revisetwo{divergence measures} to possess operational meaning. For instance, DPI has been proved for quantum $\chisq$ divergences 
(see \eg{}, 
\cite[Thm. II.14]{lesniewski_monotone_1999} or
\cite[Thm. 4]{temme_divergence_2010}), among other divergences. 
More explicitly, for any quantum channel
$\mathcal{E}$ and for all quantum states $\rho, \sigma\in \dset_n$, we have
\begin{equation}
\label{eqn::spi}
\chisqarg{\mathcal{E}(\rho)}{\mathcal{E}(\sigma)} \le \chisqarg{\rho}{\sigma}.
\end{equation}
In the above, $\kappa$ is a real-valued positive function (see
\eqref{eqn::set_K} below); the definition of $\chisq$ divergences will
be postponed to \secref{subsec::quantum_chisq_divergences}, as it
involves some technicalities.

Compared with the DPI, the strong data processing inequality (SDPI) quantitatively and more
precisely characterizes the extent that quantum states contract under
the channel $\mathcal{E}$
\cite{anantharam_maximal_2013, Raginsky_strong_2016, polyanskiy_dissipation_2016,
  polyanskiy_strong_2017}.  Given any $(\mathcal{E}, \sigma)$-pair
where $\mathcal{E}$ is any quantum channel and $\sigma\in \dsetplus_n$
is any full-rank quantum state ($\dsetplus_n$ is the space of strictly
positive density matrices on a $n$-dimensional Hilbert space), if
there is a constant
$\eta_{\chisq}\left(\mathcal{E}, \sigma \right) \in [0, 1)$ such that
\begin{equation}
\label{eqn::sdpi}
\chisqarg{\mathcal{E}(\rho)}{\mathcal{E}(\sigma)} \le \eta_{\chisq}\left(\mathcal{E}, \sigma \right)  \chisqarg{\rho}{\sigma},\ \forall \rho \in \dset_n,
\end{equation}
then the quantum channel $\mathcal{E}$ is said to satisfy the
\emph{strong data processing inequality} (SDPI) for the quantum
$\chisq$ divergence and the smallest constant
$\eta_{\chisq}\left(\mathcal{E}, \sigma \right)$ such that
\eqref{eqn::sdpi} holds
is called the \emph{SDPI constant}.
Evidently,
\begin{equation}
\label{eqn::sdpi_const}
\eta_{\chisq}\left(\mathcal{E}, \sigma \right) = \sup_{\rho\in \dset_n:\ \rho \neq \sigma} \frac{\chisqarg{\mathcal{E}(\rho)}{\mathcal{E}(\sigma)}}{\chisqarg{\rho}{\sigma}}.
\end{equation}
Many applications of SDPIs can be found in
\eg{}, \cite[Sec. 2.3]{polyanskiy_strong_2017} and
\cite[Sec. \rom{5}]{Raginsky_strong_2016}.

It is common in quantum information theory to consider
high-dimensional quantum channels, formed by the tensor product of
low-dimensional quantum channels.  Except for very special cases, in
general, obtaining SDPI constants for high-dimensional quantum channels can be
rather challenging, even numerically. 
It is desirable if one could reduce the
problem of calculating the SDPI constant for a (global)
high-dimensional quantum channel, to calculating the SDPI constants of low-dimensional quantum channels. 
For a specific \revisetwo{divergence} (\eg{}, quantum $\chisq$ divergence in this work), if the SDPI constant for the high-dimensional channel is the maximum value of SDPI constants for these low-dimensional channels, we say that the SDPI constant for this \revisetwo{divergence} satisfies the \emph{tensorization property}.

Our main result in this work is that the SDPI constant for $\chisq$
tensorizes,
summarized in the following theorem. 
\begin{theorem}
\label{thm::chisq_tensorize}
Consider $N$ finite-dimensional quantum systems whose Hilbert spaces are $\hbt_j$ with dimension $n_j$ ($1\le j \le N$) 
and consider any density matrix $\sigma_j\in \dsetplus_{n_j}$ and any quantum channel $\mathcal{E}_j$ acting on $\hbt_j$, 
such that for all $1\le j \le N$, $\mathcal{E}_j (\sigma_j)\in \dsetplus_{n_j}$.
If either of the followings holds 
\begin{enumerate}[(i)]
\item $\kappa = \kappa_{1/2}$; 
\item $\kappa \ge \kappa_{1/2}$ and $\mathcal{E}_j$ are quantum-classical (QC) channels;
\end{enumerate}
then we have the tensorization of the SDPI constant for the quantum $\chisq$ divergence, \ie, 
\begin{equation}\label{eq:mainthm}
\eta_{\chisq}\left(\mathcal{E}_1\otimes \mathcal{E}_2 \otimes \cdots \otimes \mathcal{E}_{N}, \sigma_1\otimes \sigma_2 \otimes \cdots \otimes \sigma_{N}\right) = \max_{1\le j \le N} \eta_{\chisq}\left(\mathcal{E}_j, \sigma_j \right).
\end{equation}
\end{theorem}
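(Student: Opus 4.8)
The plan is to reduce everything to the bipartite case $N=2$ by a routine induction: a tensor product of QC channels is again QC, so the class in (ii) is preserved, and in (i) the choice $\kappa=\kappa_{1/2}$ is fixed throughout; grouping $\mathcal{E}_1\otimes\cdots\otimes\mathcal{E}_{N-1}$ as a single channel and invoking the induction hypothesis reduces the $N$-fold statement to the $2$-fold one. For $N=2$ the inequality ``$\ge$'' in \eqref{eq:mainthm} is the easy half: taking $\rho=\rho_j\otimes\sigma_{3-j}$ and applying the DPI \eqref{eqn::spi} to the channels ``append $\sigma_{3-j}$'' and ``$\tr_{3-j}$'' gives $\chisqarg{\rho_j\otimes\sigma_{3-j}}{\sigma_j\otimes\sigma_{3-j}}=\chisqarg{\rho_j}{\sigma_j}$, and likewise for the output states, so the ratio in \eqref{eqn::sdpi_const} for such $\rho$ is exactly the single-system ratio; taking the supremum over $\rho_j$ yields $\eta_{\chisq}(\mathcal{E}_1\otimes\mathcal{E}_2,\sigma_1\otimes\sigma_2)\ge\eta_{\chisq}(\mathcal{E}_j,\sigma_j)$ for $j=1,2$. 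Everything below concerns the reverse inequality for $N=2$.

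For (i) I would pass to the Hilbert--Schmidt picture attached to $\kappa_{1/2}$: the linear map $\Gamma_\tau\colon X\mapsto\tau^{-1/4}X\tau^{-1/4}$ satisfies $\chisqarg{\rho}{\tau}=\norm{\Gamma_\tau(\rho-\tau)}_{\hs}^2$ when $\kappa=\kappa_{1/2}$, carries the space of admissible fluctuations onto $\{\tau^{1/2}\}^{\perp}$, and — this is the feature special to $\kappa_{1/2}$ — factorizes, $\Gamma_{\sigma_1\otimes\sigma_2}=\Gamma_{\sigma_1}\otimes\Gamma_{\sigma_2}$. The transported channel $\widehat{\mathcal{E}}_j:=\Gamma_{\mathcal{E}_j(\sigma_j)}\circ\mathcal{E}_j\circ\Gamma_{\sigma_j}^{-1}$ maps the unit vector $v_j:=\sigma_j^{1/2}$ to $v_j':=\mathcal{E}_j(\sigma_j)^{1/2}$ and, since $\mathcal{E}_j^{\dagger}$ is unital, maps $\{v_j\}^{\perp}$ into $\{v_j'\}^{\perp}$; it is therefore block diagonal, with $1$-dimensional block the (normalized) isometry $v_j\mapsto v_j'$ and complementary block $A_j$ satisfying $\norm{A_j}_{\mathrm{op}}^2=\eta_{\chisq}(\mathcal{E}_j,\sigma_j)=:\eta_j$ and, by \eqref{eqn::spi}, $\norm{A_j}_{\mathrm{op}}\le 1$. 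Since $\widehat{\mathcal{E}_1\otimes\mathcal{E}_2}=\widehat{\mathcal{E}}_1\otimes\widehat{\mathcal{E}}_2$ and $\{v_1\otimes v_2\}^{\perp}=(\RR v_1\otimes\{v_2\}^{\perp})\oplus(\{v_1\}^{\perp}\otimes\RR v_2)\oplus(\{v_1\}^{\perp}\otimes\{v_2\}^{\perp})$, the transported channel restricted to $\{v_1\otimes v_2\}^{\perp}$ is the block-diagonal operator $A_2\oplus A_1\oplus(A_1\otimes A_2)$, whence $\eta_{\chisq}(\mathcal{E}_1\otimes\mathcal{E}_2,\sigma_1\otimes\sigma_2)=\max\{\norm{A_1}_{\mathrm{op}}^2,\norm{A_2}_{\mathrm{op}}^2,\norm{A_1}_{\mathrm{op}}^2\norm{A_2}_{\mathrm{op}}^2\}=\max(\eta_1,\eta_2)$, the last equality because $\norm{A_j}_{\mathrm{op}}\le 1$ makes the bipartite block harmless.

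For (ii) the output of $\mathcal{E}_1\otimes\mathcal{E}_2$ is diagonal and all $\chisq$ divergences agree with the classical $\chi^2$ on commuting states, so the numerator of \eqref{eqn::sdpi_const} equals $\chi^2_{\mathrm{cl}}(\nu\,\|\,\nu_1'\otimes\nu_2')$, where $\nu$ is the output distribution, $\nu_j:=\mathcal{E}_j(\rho_j)$ and $\nu_j':=\mathcal{E}_j(\sigma_j)$. I would split it by conditioning on the first outcome, $\chi^2_{\mathrm{cl}}(\nu\,\|\,\nu_1'\otimes\nu_2')=\chi^2_{\mathrm{cl}}(\nu_1\|\nu_1')+\sum_x\frac{\nu_{1,x}^2}{\nu'_{1,x}}\chi^2_{\mathrm{cl}}(\nu(\cdot|x)\|\nu_2')$, where $\nu(\cdot|x)=\mathcal{E}_2(\widehat{\omega}_x)$ and $\widehat{\omega}_x$ is the normalized conditional state of system $2$ given first outcome $x$; the single-system SDPIs for $\mathcal{E}_1$ and $\mathcal{E}_2$, together with $\nu_{1,x}^2\,\chisqarg{\widehat{\omega}_x}{\sigma_2}=\norm{\omega_x-\nu_{1,x}\sigma_2}^2$ (in the $\sigma_2$-weighted $\chisq$-norm), then bound the numerator by $\max(\eta_1,\eta_2)\bigl[\chisqarg{\rho_1}{\sigma_1}+\sum_x\frac{\nu_{1,x}^2}{\nu'_{1,x}}\chisqarg{\widehat{\omega}_x}{\sigma_2}\bigr]$. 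Thus it suffices to prove the chain-rule lower bound $\chisqarg{\rho}{\sigma_1\otimes\sigma_2}\ge\chisqarg{\rho_1}{\sigma_1}+\sum_x\frac{\nu_{1,x}^2}{\nu'_{1,x}}\chisqarg{\widehat{\omega}_x}{\sigma_2}$. This follows from an exact orthogonal splitting $\chisqarg{\rho}{\sigma_1\otimes\sigma_2}=\chisqarg{\rho_1}{\sigma_1}+\chisqarg{\rho_2}{\sigma_2}+I(\rho)$ with $I(\rho)\ge 0$ (expand $\rho-\sigma_1\otimes\sigma_2$ into its singly- and doubly-centered parts; the cross terms drop because the relevant partial traces vanish), together with the fact that the DPI \eqref{eqn::spi} for the channel $\mathcal{E}_1\otimes\mathrm{id}$, applied to the doubly-centered perturbation of $\rho$, is precisely the estimate $I(\rho)\ge\sum_x\frac{\nu_{1,x}^2}{\nu'_{1,x}}\chisqarg{\widehat{\omega}_x}{\sigma_2}-\chisqarg{\rho_2}{\sigma_2}$; the hypothesis $\kappa\ge\kappa_{1/2}$ enters in handling these mixed quadratic forms.

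The main obstacle is different in the two cases. In (i) it is recognizing that, in the $\kappa_{1/2}$ metric, the whole question degenerates to a tensor-of-block-diagonal operator-norm computation and that the genuinely bipartite block $A_1\otimes A_2$ only ever contributes $\eta_1\eta_2\le\max(\eta_1,\eta_2)$; once the factorization of $\Gamma$ and the block structure of $\widehat{\mathcal{E}}_j$ are in place there is nothing more to do. In (ii) the obstacle is producing the exact orthogonal decomposition of $\chisqarg{\rho}{\sigma_1\otimes\sigma_2}$ with a manifestly nonnegative remainder, and then tracking the conditional states $\widehat{\omega}_x$ against the doubly-centered part of $\rho$ precisely enough to identify the needed remainder estimate with an instance of data processing — it is this matching, rather than any hard new inequality, that requires care.
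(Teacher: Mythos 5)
Your reduction to $N=2$, the easy ``$\ge$'' half, and your case (i) all match the paper's proof in substance: the paper phrases (i) as the statement that $\Omega^{\kappa_{1/2}}_{\sigma_1\otimes\sigma_2}$ tensorizes, so that $\Upsilon^{\kappa_{1/2}}$ for the product channel equals $\Upsilon^{\kappa_{1/2}}_{1}\otimes\Upsilon^{\kappa_{1/2}}_{2}$ and its largest eigenvalue on the traceless subspace is $\max(\eta_1,\eta_2)$; your block-diagonal operator-norm computation for the transported channel is the singular-value version of exactly this.

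Case (ii) is where you genuinely diverge. The paper decomposes a traceless Hermitian $\vect{A}$ into singly- and doubly-centered parts, reduces to $\wt{A}\in\HH^0_{n_1}\otimes\HH^0_{n_2}$, and then routes the whole estimate through $\kappa_{1/2}$: for QC channels and $\kappa\ge\kappa_{1/2}$ the Rayleigh quotient for $\kappa$ is dominated by the one for $\kappa_{1/2}$ (\lemref{lem::monotonicity}), the $\kappa_{1/2}$ quotient on the doubly-centered subspace is at most $\eta_{\chi^2_{\kappa_{1/2}}}(\mathcal{E}_1,\sigma_1)\,\eta_{\chi^2_{\kappa_{1/2}}}(\mathcal{E}_2,\sigma_2)$, and $\eta_{\chi^2_{\kappa_{1/2}}}\le\sqrt{\eta_{\chisq}}$ via the Petz recovery map (\lemref{lem::chialphahalf_less_chihalfsquare}) closes the loop. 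You instead exploit the classical structure of the output directly: the classical $\chi^2$ chain rule on the joint output, the single-system SDPIs applied to the marginal and to the conditional states $\widehat{\omega}_x$, and a chain-rule lower bound on $\chisqarg{\rho}{\sigma_1\otimes\sigma_2}$ that reduces, after the same centered orthogonal splitting, to the quadratic-form DPI for $\mathcal{E}_1\otimes\mathrm{id}$ applied to the doubly-centered part. I checked the two identities your lower bound rests on --- the blockwise action of $\Omega^{\kappa}_{\tau}$ on the cq output of $\mathcal{E}_1\otimes\mathrm{id}$, and the matching of $\sum_x \nu_{1,x}^2(\nu'_{1,x})^{-1}\chisqarg{\widehat{\omega}_x}{\sigma_2}$ with $\chisqarg{\rho_2}{\sigma_2}+\bigl\|(\mathcal{E}_1\otimes\mathrm{id})\Delta\bigr\|^2$ --- and both hold for every $\kappa\in\mathcal{K}$, using only \eqref{eqn::Omega_op_composite} and trace preservation. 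This buys a self-contained argument that never leaves the given $\kappa$.

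One point you must resolve before calling this done: you assert that ``the hypothesis $\kappa\ge\kappa_{1/2}$ enters in handling these mixed quadratic forms,'' but nothing in your sketch actually uses it --- every ingredient (classical chain rule, single-system SDPI, orthogonality of the centered decomposition, quadratic-form DPI) is valid for all $\kappa\in\mathcal{K}$. So either you have in fact proved tensorization for QC channels for \emph{every} $\kappa$ (which would strictly strengthen the theorem and settle, e.g., the $\kappa_{\min}$ case the paper leaves open), or there is a step you have in mind that silently needs the hypothesis. As written, the claim about where the hypothesis is used is unsupported, and this is the only soft spot in an otherwise correct argument; pin it down explicitly.
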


\begin{remark}
\begin{enumerate}[(i)]
\item The function $\kappa_{1/2}(x) := x^{-1/2}$ is a special example of weight functions. 
There are some properties that only the quantum $\chi_{\kappa_{1/2}}^2$ divergence possesses (see \eg{}, \lemref{lem::prop_Omega_op_composite} (\ref{lem::prop_Omega_op_composite_tensorize})); 
in addition, $\chi_{\kappa_{1/2}}^2$ is tightly connected to the sandwiched R{\'e}nyi divergence of order $2$ \cite{muller-lennert_quantum_2013}.  

 There is a whole family of $\kappa_{\alpha}$ parameterized by
$\alpha\in [0,1]$, satisfying the condition
$\kappa_{\alpha} \ge \kappa_{1/2}$; see Example \ref{eg::chi_alpha} for
details; in \secref{subsec::example_kappa}, we also present other
examples of $\kappa(x)$ such that $\kappa \ge \kappa_{1/2}$. The notion
of QC channel will be recalled in \secref{subsec::comparison}.

\item  
These assumptions only provide sufficient conditions for the tensorization of SDPIs to hold, and it is an interesting open question to further investigate weaker conditions. 
In addition, it is also an interesting open question whether the tensorization of SDPIs holds for (quasi) relative entropies and the geodesic distances \cite{lesniewski_monotone_1999, hiai_contraction_2016}. 
We shall leave these questions to future research.

\end{enumerate}
\end{remark}

The tensorization property in the classical regime has been well studied and widely used; 
see \eg, \cite{anantharam_maximal_2013,
van2016probability,
Raginsky_strong_2016}.  
For SDPI constants, the tensorization property was proved in
\cite[Thm. \rom{3}.9]{Raginsky_strong_2016} 
for any $\Phi$-divergence, denoted by
$
  D_{\Phi}\left(\nu \mid\mid \mu \right) := \ee_{\mu}\Big[\Phi(\frac{\ud \nu}{\ud \mu})\Big] - \Phi(1),
$
provided that the associated $\Phi$-entropy is sub-additive and homogeneous. 
As a remark, the $\Phi$-divergence includes the relative entropy (with $\Phi(x) = x\log(x)$) and the classical $\chi^2$ divergence (with $\Phi(x) = (x-1)^2$) as special instances. 
The tensorization of SDPI constants associated with the classical relative entropy has been applied to study the lower bounds of Bayes risk \cite{xu_converses_2015}.

Establishing tensorization in the quantum regime seems to be more
challenging and our understanding is much limited.
Recently, the tensorization technique has been developed for the
quantum hypercontractivity of qubit system \cite{King_2014}, reversed
hypercontractivity \cite{cubitt_quantum_2015,beigi_quantum_2018},
$2$-log-Sobolev constant \cite{Kastoryano_quantum_2013,
  beigi_quantum_2018}, as well as the quantum maximal correlation
\cite{beigi_new_2013}. For the tensorization of the quantum (reversed)
hypercontractivity and log-Sobolev constants, all existing works, as
far as we know, focus exclusively on reversible (or even more special)
quantum Markov semigroups (\ie{}, Lindblad equations).

We would like to briefly mention and highlight the proof techniques used for \thmref{thm::chisq_tensorize}. The first main ingredient is to formulate the SDPI constant as the second largest eigenvalue of a certain operator (see \lemref{lem::sdpi_variational}); similar results have been obtained in \eg{}, \cite{choi_equivalence_1994, Raginsky_strong_2016,ruskai_beyond_1994, lesniewski_monotone_1999, hiai_contraction_2016}.
This result immediately leads into the proof of the case (i). The second main ingredient is to bound $\eta_{\chi^2_{\kappa_{1/2}}}(\mathcal{E}, \sigma)$ above by $\sqrt{\eta_{\chisq}(\mathcal{E}, \sigma)}$ (see \lemref{lem::chialphahalf_less_chihalfsquare}), whose proof uses Petz recovery map \cite{petz_sufficiency_1988} as the bridge. This relation together with special properties of $\eta_{\chi^2_{\kappa_{1/2}}}(\mathcal{E}, \sigma)$ leads into the proof of case (ii).

\subsection*{Related techniques to quantify the loss of information}
Apart from the DPI and the SDPI, there are other concepts used to characterize the contraction of quantum states under the action of noisy channels. 
For instance, one widely studied quantity is 
the \emph{contraction coefficient} 
\begin{equation}
\label{eqn::contraction_coef}
\eta_{\chisq}(\mathcal{E}) := \sup_{\sigma \in \dsetplus_n}\ \eta_{\chisq}(\mathcal{E},\sigma).
\end{equation} 
The contraction coefficient is very similar to the SDPI constant. However, compared with the SDPI constant, the contraction coefficient for various divergence measures has been much more extensively
studied in the literature, see \eg{}, \cite{ruskai_beyond_1994,petz_contraction_1998,lesniewski_monotone_1999,temme_divergence_2010,
  hiai_contraction_2016} for the quantum case, and see \eg{}, \cite{cohen_relative_1993} for the classical case. The bijection maps
that preserve the quantum $\chi^2_{\kappa_{\alpha}}$ divergence (see
Example \ref{eg::chi_alpha} about the family $\kappa_{\alpha}$) have
been characterized in \cite{chen_maps_2017}, which complements the
study of the contraction of quantum states.
There are other tools based on the functional perspective, including quantum (reverse) hypercontractivity and related quantum functional
inequalities \cite{Kastoryano_quantum_2013, King_2014,
cubitt_quantum_2015,
rouze_concentration_2017, 
beigi_quantum_2018}.

\subsection*{Contribution} 
We summarize new results obtained in this work, as follows:
\begin{enumerate}[(i)]
\item Our main result is \thmref{thm::chisq_tensorize}, which
  establishes the tensorization of SDPI constants, under certain
  assumptions: for the quantum $\chi^2_{\kappa_{1/2}}$ divergence, the
  tensorization of SDPI constants holds for general quantum channels; for
  the quantum $\chi_{\kappa}^2$ divergence with $\kappa \ge \kappa_{1/2}$,
  the tensorization holds for any quantum-classical channel.

\item Along the analysis of the SDPI, we also establish a connection
  between the SDPI constant associated with $\kappa_{1/2}$ and
  a variant of quantum maximal correlations; see
  \thmref{thm::sdpi_max_corr} for details.

\item To use the tensorization property, we need to understand
  the SDPI constants for local channels, 
{\ie, we need to compute $\eta_{\chisq}(\mathcal{E}_j, \sigma_j)$ for $1\le j\le N$.} 
Motivated by this, we study the SDPI constants for special {qubit channels} in \secref{sec::example}. 
We notice that there is a particular QC channel $\mathcal{E}$ associated with a fixed $\sigma\in \dsetplus_2$ such that the largest value of
$\eta_{\chisq}(\mathcal{E}, \sigma) \approx 1$ for
$\kappa = \kappa_{\min}$, while
$\eta_{\chisq}(\mathcal{E}, \sigma) \approx 0$ for
$\kappa = \kappa_{\max}$ (however, $\sigma$ is close to a singular matrix); 
see \secref{subsec::qc_channel} for details. 
This extreme example shows the high dependence of SDPI constants on the choice of $\kappa$, 
which magnifies the difference between the quantum SDPI constant and its classical analog, because there is only one SDPI constant for the classical $\chi^2$ divergence.

\end{enumerate}
}

This paper is organized as follows. In \secref{sec::preliminary}, we provide some preliminary results, in particular, we recall the eigenvalue formalism of the SDPI constant. In \secref{sec::chisq_tensorize}, we prove \thmref{thm::chisq_tensorize} and in \secref{sec::max_corr}, we study the connection between the SDPI constant and the quantum maximal correlation. 
In \secref{sec::example},  we consider SDPI constants for {qubit channels}
and study the dependence of $\eta_{\chisq}(\mathcal{E}, \sigma)$ on $\sigma$ and $\kappa$. \secref{sec::discussion} concludes the paper with some additional remarks.

\section{Preliminaries}
\label{sec::preliminary}
This section contains preliminary results that we will use to prove
the tensorization of the strong data processing inequality,
\thmref{thm::chisq_tensorize}. In particular, we will present two
variational formulations of SDPI constants, and discuss the relation
between various SDPI constants.

\medskip

{\bf \noindent Notations.} We shall consider finite dimensional systems only,  \ie, the Hilbert space $\hbt \cong \Complex^n$. 
Let $\MM_n$, $\dset_n$, $\dsetplus_n$, $\HH_n$ be the 
space of linear operators, density matrices, strictly positive density matrices and Hermitian matrices on $\hbt$, respectively. Let $\MM_n^0$ and $\HH_n^0$ be the space of traceless elements of $\MM_n$ and $\HH_n$, respectively.
Denote the $n$-by-$n$ identity matrix by $\unit_n$ (acting on $\hbt$); 
let $\id_{n}$ be the identity operator acting on $\MM_n$. 
If the Hilbert space $\hbt = \hbt_1\otimes \hbt_2\otimes \cdots \otimes \hbt_N$, and $\hbt_j$ has the dimension $n_j$ (for $1\le j\le N$), then the space of linear operators on $\hbt$ is denoted by $\MM_{n_1\times n_2 \times \cdots \times n_N}$; the same convention applies similarly to other spaces, \eg{}, $\HH_{n_1\times n_2 \times \cdots \times n_N}$. As a reminder, following the above notation convention, $\HH^0_{n_1\times n_2} \neq \HH^0_{n_1}\otimes \HH^0_{n_2}$.

Let $\Inner{\cdot}{\cdot}$ denote a generic inner product on $\MM_n$; the Hilbert-Schmidt inner product is defined as $\Inner{A}{B}_{\hs} := \tr\left(A^{\dagger} B\right)$. 
\revise{For any positive semidefinite operator $\mathscr{T}$ on $\MM_n$,
define the sesquilinear form $\Inner{A}{B}_{\mathscr{T}} := \Inner{A}{\mathscr{T}(B)}_{\hs}$ and the semi-norm $\Norm{A}_{\mathscr{T}} := \sqrt{\Inner{A}{A}_{\mathscr{T}}}$ for all $A,\ B\in \MM_n$;
when $\mathscr{T}$ is strictly positive, 
the sesquilinear form becomes an inner product and the semi-norm becomes a norm. 
}

For convenience, for any $A, B\in \MM_n$, we denote
\begin{align}
A\#B := L_{A} R_{B},  
\end{align}
where $L_A$ and $R_B$ are left and right multiplication of $A$ and $B$, respectively; in other words,  $(A\#B)(X) = AXB$.  

\subsection{Quantum $\chisq$ divergences}
\label{subsec::quantum_chisq_divergences}

Throughout this work, we consider the quantum $\chi_{\kappa}^2$ divergence, introduced in \cite[Def. 1]{temme_divergence_2010}.  
Let us introduce a set $\mathcal{K}$,
\begin{align}
\label{eqn::set_K}
\Kset := \left\{\kappa:(0,\infty)\rightarrow (0,\infty)| -\kappa \text{ is operator monotone}, \kappa(1) = 1, x \kappa(x) = \kappa(x^{-1})\right\}.
\end{align} 
As a remark, it is easy to check that
$\kappa_{1/2}(x) \equiv x^{-1/2}$ is in the family $\mathcal{K}$.

\begin{defn}[Quantum $\chi^2_{\kappa}$ divergence]
For any $\kappa \in \mathcal{K}$, define the quantum $\chi^2_{\kappa}$ divergence between quantum states $\rho, \sigma\in \dset_n$ by  
\begin{align}
\label{eqn::chisq}
\chisqarg{\rho}{\sigma} &:= \Inner{\rho-\sigma}{  \Omega_{\sigma}^{\kappa}(\rho-\sigma)}_{\hs}
\end{align}
when $\text{supp}(\rho)\subset \text{supp}(\sigma)$; otherwise, set $\chisqarg{\rho}{\sigma} = \infty$.
The operator $\Omega_{\sigma}^{\kappa}$ above is given by 
\begin{align}
\Omega_{\sigma}^{\kappa} := R_{\sigma}^{-1} \kappa(L_{\sigma} R_{\sigma}^{-1}) \equiv L_{\sigma}^{-1} \kappa(R_{\sigma} L_{\sigma}^{-1}).
\end{align}
The second equality comes from the assumption that $x\kappa(x) = \kappa(x^{-1})$. 
\revise{As a remark, when $\sigma$ is not a full-rank density matrix, $\Omega_{\sigma}^{\kappa}$ can still be well-defined on the support of $\sigma$.}
\end{defn}

Essentially, the operator
$\Omega_{\sigma}^{\kappa}$ is a non-commutative way to multiply
$\sigma^{-1}$. Properties of the operator $\Omega_{\sigma}^{\kappa}$ will be further discussed 
in \secref{subsec::basic_properties}.

Next, let us introduce the non-commutative way to multiply $\sigma$. Define the weight operator $\Gamma_{\sigma}$ 
\begin{align}
\label{eqn::op_gamma}
\Gamma_{\sigma} := \sigma^{\half} \# \sigma^{\half}. 
\end{align}
Note that the operator $\Gamma_{\sigma}$ is completely positive, with the Kraus operator $\sigma^{\half}$ 
and $\Omega_{\sigma}^{\kappa_{1/2}} = (\Gamma_{\sigma})^{-1}$.
For any $\kappa \in \mathcal{K}$, let us define a generalization of the operator $\Gamma_{\sigma}$
\begin{align}
\mho_{\sigma}^{\kappa} :=& L_{\sigma} \kappa(L_{\sigma} R_{\sigma}^{-1}) \equiv L_{\sigma} \kappa(\sigma\# \sigma^{-1}) \label{eqn::mho_op} \\
=& \Gamma_{\sigma} \circ \Omega_{\sigma}^{\kappa} \circ \Gamma_{\sigma}. \label{eqn::mho_op_kappa}
\end{align}
Notice that $\mho_{\sigma}^{\kappa_{1/2}} = \Gamma_{\sigma}$.

\subsection{Examples of $\kappa(x)$}
\label{subsec::example_kappa}

In this subsection, we provide three examples of $\kappa$ such that $\kappa \ge \kappa_{1/2}$ (satisfying one of the conditions in \thmref{thm::chisq_tensorize}). More examples can be found in \cite[Sec. 4.2]{hiai_families_2013} and \cite[Sec. (III)]{hiai_contraction_2016}.
\begin{example}[Quantum $\chi^2_{\kappa_{\alpha}}$ divergence]
\label{eg::chi_alpha}
An important family of the quantum $\chi^2_{\kappa}$ divergence is the quantum $\chi^2_{\kappa_{\alpha}}$ divergence, with the parameter $\alpha\in[0,1]$ and 
\begin{align}
\label{eqn::kappa-chi-alpha}
\kappa_{\alpha}(x) = \frac{1}{2}\left(x^{-\alpha}+x^{\alpha-1}\right).
\end{align}
\begin{enumerate}[(i)]
\item The case $\alpha=\half$ is very special: $\kappa_{1/2}(x) = x^{-1/2}$ and \revise{$\Omega_{\sigma}^{\kappa_{1/2}} = {\sigma^{-\half}} \# {\sigma^{-\half}}$} is completely positive with the Kraus operator $\sigma^{-\half}$. In fact, $\kappa_{\half}$ is the only one in $\Kset$ such that for any $\sigma$, both $\Omega_{\sigma}^{\kappa}$ and $\left(\Omega_{\sigma}^{\kappa}\right)^{-1}$ are completely positive \cite[Theorem 3.5]{hiai_families_2013}.

\item We can immediately verify that $\kappa_{\alpha} = \kappa_{1-\alpha}$ and for any fixed $x\in (0,\infty)$, 
$\kappa_{\alpha}(x)$ is monotonically decreasing with respect to $\alpha\in[0,\half]$; thus $\kappa_{\alpha}(x) \ge \kappa_{\half}(x)$. 
\end{enumerate}
\end{example}
More results about this family of the quantum $\chi^2_{\kappa_{\alpha}}$ divergence (also called \emph{mean $\alpha$-divergence}) could be found in \cite{temme_divergence_2010}.

\begin{example}[Wigner-Yanase-Dyson]
Another family of $\kappa^{WYD}_{\beta}$ \revise{(see \eg{}, \cite{hiai_contraction_2016,hiai_families_2013})}
corresponds to the Wigner-Yanase-Dyson metric, and it is parameterized by $\beta\in [-1,2]$, 
\begin{equation}
\label{eqn::kappa_WYD}
\kappa^{WYD}_{\beta}(x) := 
 \frac{1}{\beta(1-\beta)} \frac{(1-x^{\beta})(1-x^{1-\beta})}{(1-x)^2},\qquad x\in (0,1)\cup (1,\infty).
\end{equation}
When $x = 1$, \revise{${\kappa}^{WYD}_{\beta}(x)$ is simply set as $1$ or is defined by taking the limit $x\rightarrow 1$ in the above equation.}
In general, finding all possible $\beta\in [-1,2]$ such that $\kappa^{WYD}_{\beta} \ge \kappa_{\half}$ seems to be slightly technical; 
however, at least, for a few special choices of $\beta$, 
\eg, when $\beta = 1.5$ ($\kappa^{WYD}_{1.5}(x) \equiv \kappa_{\half}(x) + \frac{x^{-1/2} (\sqrt{x}-1)^4)}{3(1-x)^2}$) 
 and $\beta = 2$ ($\kappa^{WYD}_2(x) \equiv \frac{1+x}{2x}$) we can easily check that ${\kappa}^{WYD}_{\beta} \ge \kappa_{\half}$ for these two cases.

\end{example}

\begin{example}[The largest possible $\kappa$]
	The largest $\kappa\in \mathcal{K}$ is $\kappa_{\max} := \frac{1+x}{2x}$ \revise{(see \eg{}, \cite[Eq. (11)]{hiai_contraction_2016})}.
It is obvious that $\kappa_{\max} \ge \kappa_{\half}$. 
\end{example}
As a remark, ${\kappa}^{WYD}_2$ in the family of  Wigner-Yanase-Dyson metric is exactly the maximum one.

\subsection{Basic properties of operators $\Omega_{\sigma}^{\kappa}$ and $\mho_{\sigma}^{\kappa}$}
\label{subsec::basic_properties}
We list without proof some elementary while useful properties of the operator $\Omega_{\sigma}^{\kappa}$.
Recall the assumption that $x\kappa(x) = \kappa(x^{-1})$, which is used below in the proof of $\Omega_{\sigma}^{\kappa}$ being Hermitian-preserving.
\begin{lemma}
\label{lem::prop_Omega_op}
Suppose $\sigma\in \dsetplus_n$ and its eigenvalue decomposition $\sigma = \sum_{j=1}^{n} s_j \Ket{s_j}\Bra{s_j}$. Then 
\begin{enumerate}[(i)]

\item The operator $\Omega_{\sigma}^{\kappa}$ can be decomposed as
\begin{align}
\label{eqn::Omega_decomp}
\Omega_{\sigma}^{\kappa} = \sum_{j,m=1}^{n} \kappa\left(\frac{s_j}{s_m}\right) \frac{1}{s_m}\ {\Ket{s_j}\Bra{s_j}} \# {\Ket{s_m}\Bra{s_m}}. 
\end{align} 
For any Hermitian matrix $A\in \MM_n$,  
\begin{align}
\label{eqn::AA_omega_kappa}
\Inner{A}{A}_{\Omega_{\sigma}^{\kappa}} \equiv \Inner{A}{\Omega_{\sigma}^{\kappa} (A)}_{\hs} = \sum_{j,m=1}^{n} \kappa\left(\frac{s_j}{s_m}\right) \frac{1}{s_m} \Abs{\Bra{s_j} A \Ket{s_m}}^2 \ge 0.
\end{align}
Thus, $\Omega_{\sigma}^{\kappa}$ is a strictly positive operator with respect to the Hilbert-Schmidt inner product, and the inner product $\Inner{\cdot}{\cdot}_{\Omega_{\sigma}^{\kappa}}$ is well-defined. \label{lem::prop_Omega_op_positivity}

\item $\Omega_{\sigma}^{\kappa}$ is Hermitian-preserving. \label{lem::prop_Omega_op_hermitian_preserving}

\item We have $\Omega_{\sigma}^{\kappa}(\sigma) = \unit_n$. Thus
for any $A\in \MM_n$, 
\begin{align}
\label{eqn::inner_omega_unit}
\Inner{A}{\sigma}_{\Omega_{\sigma}^{\kappa}} = \Inner{A}{\unit_n}_{\hs}, \qquad \Inner{\sigma}{A}_{\Omega_{\sigma}^{\kappa}} = \Inner{\unit_n}{A}_{\hs} = \tr(A).
\end{align}
In particular, for any density matrix $\rho\in \dset_n$, 
$\Inner{\rho}{\sigma}_{\Omega_{\sigma}^\kappa} = \Inner{\sigma}{\rho}_{\Omega_{\sigma}^{\kappa}} = 1$.
\label{lem::prop_Omega_op_unity}
\end{enumerate}
\end{lemma}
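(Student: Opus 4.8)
The plan is to reduce every assertion to a computation in the eigenbasis of $\sigma$. Write $\sigma=\sum_{j=1}^{n}s_{j}\Ket{s_{j}}\Bra{s_{j}}$ with all $s_{j}>0$ and set $E_{jm}:=\Ket{s_{j}}\Bra{s_{m}}$; these matrix units form an orthonormal basis of $\MM_{n}$ for $\Inner{\cdot}{\cdot}_{\hs}$. Since $L_{\sigma}E_{jm}=s_{j}E_{jm}$ and $R_{\sigma}E_{jm}=s_{m}E_{jm}$, the operator $L_{\sigma}R_{\sigma}^{-1}$ is diagonal in this basis with strictly positive eigenvalues $s_{j}/s_{m}$, so the functional calculus is legitimate and $\kappa(L_{\sigma}R_{\sigma}^{-1})E_{jm}=\kappa(s_{j}/s_{m})E_{jm}$; composing with $R_{\sigma}^{-1}$ gives $\Omega_{\sigma}^{\kappa}E_{jm}=\tfrac{1}{s_{m}}\kappa(s_{j}/s_{m})E_{jm}$. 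Noting $(\Ket{s_{j}}\Bra{s_{j}}\#\Ket{s_{m}}\Bra{s_{m}})(E_{kl})=\delta_{jk}\delta_{ml}E_{jm}$, this is exactly the claimed decomposition \eqref{eqn::Omega_decomp}, and everything else is read off from it.

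For (i): expanding $A=\sum_{jm}a_{jm}E_{jm}$ with $a_{jm}=\Bra{s_{j}}A\Ket{s_{m}}$, orthonormality together with the diagonal form gives $\Inner{A}{\Omega_{\sigma}^{\kappa}(A)}_{\hs}=\sum_{jm}\tfrac{\kappa(s_{j}/s_{m})}{s_{m}}\abs{a_{jm}}^{2}$, which is \eqref{eqn::AA_omega_kappa}; it is nonnegative since $\kappa>0$ and $s_{m}>0$, and it vanishes only if $A=0$. Hence $\Omega_{\sigma}^{\kappa}$ is strictly positive (in particular self-adjoint) for $\Inner{\cdot}{\cdot}_{\hs}$, so $\Inner{\cdot}{\cdot}_{\Omega_{\sigma}^{\kappa}}$ is a genuine inner product.

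For (ii): from the decomposition $(\Omega_{\sigma}^{\kappa}(A))^{\dagger}=\sum_{jm}\tfrac{\kappa(s_{j}/s_{m})}{s_{m}}\overline{a_{jm}}\,E_{mj}$, whereas $\Omega_{\sigma}^{\kappa}(A^{\dagger})=\sum_{jm}\tfrac{\kappa(s_{j}/s_{m})}{s_{m}}\overline{a_{mj}}\,E_{jm}$ because $(A^{\dagger})_{jm}=\overline{a_{mj}}$. After relabeling $j\leftrightarrow m$ in the first sum, the two agree precisely when $\tfrac{\kappa(s_{j}/s_{m})}{s_{m}}=\tfrac{\kappa(s_{m}/s_{j})}{s_{j}}$ for all $j,m$; with $x=s_{j}/s_{m}$ this is exactly $x\kappa(x)=\kappa(x^{-1})$, the defining relation of $\Kset$. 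For (iii): applying the decomposition to $\sigma=\sum_{j}s_{j}E_{jj}$ gives $\Omega_{\sigma}^{\kappa}(\sigma)=\sum_{j}\kappa(1)E_{jj}=\unit_{n}$ using $\kappa(1)=1$; the first identity of \eqref{eqn::inner_omega_unit} is then immediate from the definition of $\Inner{\cdot}{\cdot}_{\Omega_{\sigma}^{\kappa}}$, the second follows by moving $\Omega_{\sigma}^{\kappa}$ onto $\sigma$ using its HS-self-adjointness from (i) together with $\Inner{\unit_{n}}{A}_{\hs}=\tr(A)$, and for $\rho\in\dset_{n}$ both collapse to $\tr(\rho)=1$.

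I do not expect a genuine obstacle here — the lemma is bookkeeping in the eigenbasis of $\sigma$, which is why the authors state it without proof. The only points that are not purely automatic are (a) checking that the functional calculus defining $\kappa(L_{\sigma}R_{\sigma}^{-1})$ is well posed, which is handled by observing that $L_{\sigma}R_{\sigma}^{-1}$ is positive with strictly positive spectrum, and (b) the Hermitian-preserving property, where one must recognize that the index symmetry forced by $A\mapsto A^{\dagger}$ is exactly the constraint $x\kappa(x)=\kappa(x^{-1})$ encoded in $\Kset$; I would flag (b) as the ``crux,'' modest as it is.
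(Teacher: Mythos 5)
Your proof is correct, and it is exactly the eigenbasis computation the paper intends (the lemma is stated without proof, with the authors only hinting that $x\kappa(x)=\kappa(x^{-1})$ is what makes $\Omega_{\sigma}^{\kappa}$ Hermitian-preserving — precisely the point you identify as the crux in (ii)). The decomposition \eqref{eqn::Omega_decomp}, the positivity in (i), and the use of $\kappa(1)=1$ and HS-self-adjointness in (iii) all check out.
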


Then let us consider the properties of $\Omega_{\sigma}^{\kappa}$ for a composite system. 
\begin{lemma}
\label{lem::prop_Omega_op_composite}

\begin{enumerate}[(i)]
\item Consider $\sigma_1\in \dsetplus_{n_1}$ and $\sigma_2\in \dsetplus_{n_2}$. Then for any $A\in \MM_{n_1}$ and $B\in \MM_{n_2}$, we have
\begin{align}
\label{eqn::Omega_op_composite}
\Omega_{\sigma_1\otimes \sigma_2}^{\kappa}(A \otimes \sigma_2) = \Omega_{\sigma_1}^{\kappa} (A) \otimes \unit_{n_2} \qquad \Omega_{\sigma_1\otimes \sigma_2}^{\kappa}(\sigma_1 \otimes B) =\unit_{n_1} \otimes  \Omega_{\sigma_2}^{\kappa} (B). 
\end{align}

\item \label{lem::prop_Omega_op_composite_tensorize}
$\kappa_{\half}$ is the only one in $\Kset$ such that 
for all $\sigma_1\in \dsetplus_{n_1}$ and $\sigma_2 \in \dsetplus_{n_2}$, we have
\begin{align}
\label{eqn::omega_op_tensorize}
\Omega_{\sigma_1\otimes \sigma_2}^{\kappa} = \Omega_{\sigma_1}^{\kappa} \otimes \Omega_{\sigma_2}^{\kappa}.
\end{align}
\end{enumerate}
\end{lemma}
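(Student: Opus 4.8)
The plan is to prove part (i) by a direct computation using the tensor structure of $\Omega_{\sigma}^{\kappa}$, and then to prove part (ii) by showing that the tensorization identity \eqref{eqn::omega_op_tensorize} forces a strong constraint on $\kappa$ that pins it down to $\kappa_{1/2}$.

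For part (i), I would start from the functional-calculus definition $\Omega_{\sigma_1\otimes\sigma_2}^{\kappa} = R_{\sigma_1\otimes\sigma_2}^{-1}\,\kappa\!\left(L_{\sigma_1\otimes\sigma_2} R_{\sigma_1\otimes\sigma_2}^{-1}\right)$ and observe that left/right multiplication operators factor over the tensor product: $L_{\sigma_1\otimes\sigma_2} = L_{\sigma_1}\otimes L_{\sigma_2}$ and $R_{\sigma_1\otimes\sigma_2} = R_{\sigma_1}\otimes R_{\sigma_2}$, where the tensor product here is of superoperators acting on $\MM_{n_1}\otimes\MM_{n_2}$. Hence $L_{\sigma_1\otimes\sigma_2}R_{\sigma_1\otimes\sigma_2}^{-1} = (L_{\sigma_1}R_{\sigma_1}^{-1})\otimes(L_{\sigma_2}R_{\sigma_2}^{-1})$, a tensor product of two commuting-within-each-factor positive operators. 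Now apply both sides to $A\otimes\sigma_2$. Since $(L_{\sigma_2}R_{\sigma_2}^{-1})(\sigma_2) = \sigma_2\sigma_2\sigma_2^{-1} = \sigma_2$, the vector $\sigma_2$ is a fixed point (eigenvalue $1$) of $L_{\sigma_2}R_{\sigma_2}^{-1}$ in the second factor; therefore any function $\kappa$ of the tensor-product operator acts on $A\otimes\sigma_2$ as $\kappa(L_{\sigma_1}R_{\sigma_1}^{-1}\cdot 1)(A)\otimes\sigma_2 = \kappa(L_{\sigma_1}R_{\sigma_1}^{-1})(A)\otimes\sigma_2$. Applying $R_{\sigma_1\otimes\sigma_2}^{-1} = R_{\sigma_1}^{-1}\otimes R_{\sigma_2}^{-1}$ and using $R_{\sigma_2}^{-1}(\sigma_2) = \unit_{n_2}$ gives $\Omega_{\sigma_1}^{\kappa}(A)\otimes\unit_{n_2}$, as claimed; the second identity follows by the same argument with the roles of the two factors swapped (and using $x\kappa(x) = \kappa(x^{-1})$ to write $\Omega$ with $L$ and $R$ interchanged). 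A cleaner alternative is to expand both $\sigma_1$ and $\sigma_2$ in eigenbases and use the explicit spectral formula \eqref{eqn::Omega_decomp}; then the claim reduces to the scalar identity $\kappa(s_j t_p / (s_m t_p)) = \kappa(s_j/s_m)$ together with $\kappa(1)=1$ and $1/(s_m t_p)$ times $t_p = 1/s_m$.

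For part (ii), that $\kappa_{1/2}$ satisfies \eqref{eqn::omega_op_tensorize} is immediate since $\Omega_{\sigma}^{\kappa_{1/2}} = \sigma^{-1/2}\#\sigma^{-1/2}$ and $(\sigma_1\otimes\sigma_2)^{-1/2} = \sigma_1^{-1/2}\otimes\sigma_2^{-1/2}$. For the converse, I would use the spectral formula again. Fix scalars and take $\sigma_1,\sigma_2$ diagonal; then $\Omega_{\sigma_1\otimes\sigma_2}^{\kappa}$ acts on the matrix unit $\Ket{s_j}\Bra{s_m}\otimes\Ket{t_p}\Bra{t_q}$ with scalar $\kappa\!\big(\tfrac{s_j t_p}{s_m t_q}\big)\tfrac{1}{s_m t_q}$, while $\Omega_{\sigma_1}^{\kappa}\otimes\Omega_{\sigma_2}^{\kappa}$ acts with scalar $\kappa\!\big(\tfrac{s_j}{s_m}\big)\kappa\!\big(\tfrac{t_p}{t_q}\big)\tfrac{1}{s_m t_q}$. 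Equating, and writing $u = s_j/s_m$, $v = t_p/t_q$ ranging over all of $(0,\infty)$, the tensorization identity is equivalent to the multiplicativity equation $\kappa(uv) = \kappa(u)\kappa(v)$ for all $u,v>0$. Combined with $-\kappa$ operator monotone (hence $\kappa$ continuous and decreasing) and $\kappa(1)=1$, the only measurable solution is a power $\kappa(x)=x^{c}$ with $c<0$; the symmetry constraint $x\kappa(x)=\kappa(x^{-1})$ forces $x^{1+c} = x^{-c}$, i.e.\ $c = -1/2$, so $\kappa = \kappa_{1/2}$.

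The main obstacle I anticipate is purely bookkeeping rather than conceptual: being careful about the distinction between the tensor product of operators $\MM_{n_1}\otimes\MM_{n_2}$ and $\MM_{n_1\times n_2}$ (the identification is an isomorphism but one must track how $L$, $R$, and $\#$ transport across it), and making sure the functional calculus of a tensor product of commuting superoperators genuinely factors on the relevant eigenvectors. Once the spectral decomposition \eqref{eqn::Omega_decomp} is in hand, both halves become transparent, so I would lead with that formula and reduce everything to the scalar identities above.
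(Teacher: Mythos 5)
Your proof is correct, and it follows the paper's skeleton (spectral decomposition of $\Omega^{\kappa}$, reduction to a scalar multiplicativity equation) while differing in two places worth noting. For part (i), your primary route via the functional calculus of $L_{\sigma_1\otimes\sigma_2}R_{\sigma_1\otimes\sigma_2}^{-1} = (L_{\sigma_1}R_{\sigma_1}^{-1})\otimes(L_{\sigma_2}R_{\sigma_2}^{-1})$, using that $\sigma_2$ is an eigenvalue-$1$ fixed point of the second factor, is a clean coordinate-free alternative; the paper instead writes out the eigenbasis expansion of $\Omega_{\sigma_1\otimes\sigma_2}^{\kappa}$ and lets the Kronecker delta collapse the second tensor slot — your ``cleaner alternative'' is exactly the paper's proof (and, as a small aside, $\kappa(1)=1$ is not actually needed there: the ratio $s_jt_p/(s_mt_p)$ already collapses to $s_j/s_m$). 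For part (ii), both arguments derive $\kappa(uv)=\kappa(u)\kappa(v)$ for all $u,v>0$ from the arbitrariness of the spectra, but you then invoke the classification of continuous multiplicative functions ($\kappa(x)=x^{c}$, with continuity supplied by operator monotonicity of $-\kappa$) and fix $c=-1/2$ via the symmetry $x\kappa(x)=\kappa(x^{-1})$. The paper avoids that classification entirely: setting $v=u^{-1}$ in the multiplicativity gives $\kappa(x)\kappa(x^{-1})=\kappa(1)=1$, which combined with $x\kappa(x)=\kappa(x^{-1})$ yields $x\kappa(x)^2=1$ and hence $\kappa=\kappa_{\half}$ in two lines, with no appeal to continuity or to Cauchy's functional equation. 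Your argument is valid (operator monotone functions are indeed continuous), but the paper's closing is more elementary and self-contained; everything else is essentially the same.
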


\begin{proof}
Let us decompose $\sigma_1 = \sum_{j=1}^{n_1} \lambda_j \Ket{\psi_j}\Bra{\psi_j}$ and $\sigma_2 = \sum_{m=1}^{n_2} \mu_{m} \Ket{\phi_m}\Bra{\phi_m}$, then $\sigma_1 \otimes \sigma_2$ has an eigenvalue decomposition $\sigma_1 \otimes \sigma_2 = \sum_{j, m} \lambda_j \mu_m \Ket{\psi_j}\Bra{\psi_j} \otimes \Ket{\phi_m}\Bra{\phi_m}$. 
\begin{enumerate}[(i)]
\item By the decomposition of the operator $\Omega^{k}_{(\cdot)}$ in \eqref{eqn::Omega_decomp}, 
\begin{align*}
\Omega_{\sigma_1\otimes \sigma_2}^{\kappa} = \sum_{j_1, j_2, m_1, m_2} \kappa\left(\frac{\lambda_{j_1}\mu_{m_1}}{\lambda_{j_2} \mu_{m_2}}\right) \frac{1}{\lambda_{j_2} \mu_{m_2}} \left(\Ket{\psi_{j_1}}\Bra{\psi_{j_1}} \otimes \Ket{\phi_{m_1}} \Bra{\phi_{m_1}}\right) \# \left( \Ket{\psi_{j_2}}\Bra{\psi_{j_2}} \otimes \Ket{\phi_{m_2}} \Bra{\phi_{m_2}}\right).
\end{align*}
Then by direct calculation, 
\begin{align*}
& \Omega_{\sigma_1\otimes \sigma_2}^{\kappa}(A\otimes \sigma_2) \\
=& \sum_{j_1, j_2, m_1, m_2} \kappa\left(\frac{\lambda_{j_1}\mu_{m_1}}{\lambda_{j_2} \mu_{m_2}}\right) \frac{1}{\lambda_{j_2} \mu_{m_2}} 
\mu_{m_1} \delta_{m_1, m_2} \left(\Ket{\psi_{j_1}}\Bra{\psi_{j_1}} A \Ket{\psi_{j_2}}\Bra{\psi_{j_2}}\right)\otimes \Ket{\phi_{m_1}}\Bra{\phi_{m_1}} \\
=& \sum_{j_1, j_2, m_1} \kappa\left(\frac{\lambda_{j_1}}{\lambda_{j_2} }\right) \frac{1}{\lambda_{j_2}} 
 \left(\Ket{\psi_{j_1}}\Bra{\psi_{j_1}} A \Ket{\psi_{j_2}}\Bra{\psi_{j_2}}\right)\otimes \Ket{\phi_{m_1}} \Bra{\phi_{m_1}}\\
 =&\ \Omega_{\sigma_1}^{\kappa}(A) \otimes \unit_{n_2}. 
\end{align*}
The other case can be  similarly proved. 

\item When $\kappa = \kappa_{\half}$, by the fact that $\Omega_{\sigma}^{\kappa_{1/2}} = (\Gamma_{\sigma})^{-1}$, we can immediately see the tensorization \eqref{eqn::omega_op_tensorize}. As for the other direction, from the assumption that \eqref{eqn::omega_op_tensorize} holds and after some straightforward simplification, one could obtain that 
$\kappa\left(\frac{\lambda_{j_1} \mu_{m_1}}{\lambda_{j_2} \mu_{m_2}}\right) = \kappa\left(\frac{\lambda_{j_1}}{\lambda_{j_2}}\right) \kappa\left(\frac{\mu_{m_1}}{\mu_{m_2}}\right)$, for all indices $j_1, j_2, m_1, m_2$. Since $\sigma_1$ and $\sigma_2$ are arbitrary density matrices, we have $\kappa(xy) = \kappa(x)\kappa(y)$ for all $x, y> 0$; in particular,  $1 = \kappa(1) = \kappa(x) \kappa(x^{-1})$. Since $\kappa\in \Kset$, we also have $x\kappa(x) = \kappa(x^{-1})$, which 
leads into $\kappa(x) = x^{-1/2} = \kappa_{\half}$.
\end{enumerate}
\end{proof}

Similarly, we list without proof the following properties of $\mho_{\sigma}^{\kappa}$; all properties can be easily verified by the definition of $\mho_{\sigma}^{\kappa}$ in \eqref{eqn::mho_op}.  
\begin{lemma}[Operator $\mho_{\sigma}^{\kappa}$]
\label{lem::mho_op}
Suppose $\sigma\in \dsetplus_n$ and its eigenvalue decomposition $\sigma = \sum_{j=1}^{n} s_j \Ket{s_j}\Bra{s_j}$. Then 
\begin{enumerate}[(i)]
\item the operator $\mho_{\sigma}^{\kappa}$ for any $\kappa\in \mathcal{K}$ has a decomposition
\begin{align}
\label{eqn::mho_decomp}
\mho_{\sigma}^{\kappa} = \sum_{j,m} s_j \kappa\left(\frac{s_j}{s_m}\right)\ {\Ket{s_j}\Bra{s_j}} \# {\Ket{s_m}\Bra{s_m}},
\end{align}
thus $\mho_{\sigma}^{\kappa}$ is strictly positive with respect to the Hilbert-Schmidt inner product;
\item the operator $\mho_{\sigma}^{\kappa}$ is Hermitian-preserving;
\item $\mho_{\sigma}^{\kappa} (\unit_n) = \sigma$.
\end{enumerate}
\end{lemma}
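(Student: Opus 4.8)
The plan is to reduce everything to the spectral decomposition of $\sigma$, exactly as was done for $\Omega_{\sigma}^{\kappa}$ in \lemref{lem::prop_Omega_op}. First I would work in the ``matrix-unit basis'' $\{\Ket{s_j}\Bra{s_m}\}_{j,m}$ of $\MM_n$ and note that $L_{\sigma}$ and $R_{\sigma}^{-1}$ act diagonally there: $L_{\sigma}\bigl(\Ket{s_j}\Bra{s_m}\bigr) = s_j\,\Ket{s_j}\Bra{s_m}$ and $R_{\sigma}^{-1}\bigl(\Ket{s_j}\Bra{s_m}\bigr) = s_m^{-1}\,\Ket{s_j}\Bra{s_m}$, so $L_{\sigma}R_{\sigma}^{-1}$ has eigenvalue $s_j/s_m$ on $\Ket{s_j}\Bra{s_m}$, hence $\kappa(L_{\sigma}R_{\sigma}^{-1})$ has eigenvalue $\kappa(s_j/s_m)$, and one further left-multiplication by $L_{\sigma}$ gives $\mho_{\sigma}^{\kappa}\bigl(\Ket{s_j}\Bra{s_m}\bigr) = s_j\kappa(s_j/s_m)\,\Ket{s_j}\Bra{s_m}$. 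Rewriting ``projection onto $\Ket{s_j}\Bra{s_m}$'' as $\Ket{s_j}\Bra{s_j}\#\Ket{s_m}\Bra{s_m}$ yields \eqref{eqn::mho_decomp}. Equivalently, one can substitute the decomposition \eqref{eqn::Omega_decomp} of $\Omega_{\sigma}^{\kappa}$ into $\mho_{\sigma}^{\kappa} = \Gamma_{\sigma}\circ\Omega_{\sigma}^{\kappa}\circ\Gamma_{\sigma}$ and absorb the two factors $s_j^{1/2}, s_m^{1/2}$ produced by the outer $\Gamma_{\sigma}$'s; the $s_m^{1/2}$ on the right cancels a $s_m^{-1}$ coming from $\Omega_{\sigma}^{\kappa}$, leaving the coefficient $s_j\kappa(s_j/s_m)$.

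Given the decomposition, the rest is immediate. For strict positivity I would pair $\mho_{\sigma}^{\kappa}$ against an arbitrary $A\in\MM_n$ in the Hilbert--Schmidt inner product: $\Inner{A}{\mho_{\sigma}^{\kappa}(A)}_{\hs} = \sum_{j,m} s_j\kappa(s_j/s_m)\,\Abs{\Bra{s_j}A\Ket{s_m}}^2$, which is real and strictly positive unless $A=0$, since $s_j>0$ and $\kappa>0$ on $(0,\infty)$; self-adjointness of $\mho_{\sigma}^{\kappa}$ with respect to $\Inner{\cdot}{\cdot}_{\hs}$ also follows because the coefficients are real and each $\Ket{s_j}\Bra{s_j}\#\Ket{s_m}\Bra{s_m}$ is HS-self-adjoint. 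For item (iii), evaluating \eqref{eqn::mho_decomp} on $\unit_n$ kills all off-diagonal terms, since $\Bra{s_j}\unit_n\Ket{s_m}=\delta_{jm}$, and leaves $\sum_j s_j\kappa(1)\Ket{s_j}\Bra{s_j}=\sigma$ because $\kappa(1)=1$; alternatively, $\mho_{\sigma}^{\kappa}(\unit_n)=\Gamma_{\sigma}\Omega_{\sigma}^{\kappa}\Gamma_{\sigma}(\unit_n)=\Gamma_{\sigma}\Omega_{\sigma}^{\kappa}(\sigma)=\Gamma_{\sigma}(\unit_n)=\sigma$, using $\Gamma_{\sigma}(\unit_n)=\sigma$ and \lemref{lem::prop_Omega_op}(\ref{lem::prop_Omega_op_unity}).

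The one point needing a little care — and the place where the hypothesis $\kappa\in\mathcal{K}$ genuinely enters — is the Hermitian-preservation in item (ii). Taking adjoints term-by-term in \eqref{eqn::mho_decomp} and relabelling $j\leftrightarrow m$ turns $\mho_{\sigma}^{\kappa}(A)^{\dagger}$ into $\sum_{j,m}s_m\kappa(s_m/s_j)\,\Bra{s_j}A^{\dagger}\Ket{s_m}\,\Ket{s_j}\Bra{s_m}$, so the claim $\mho_{\sigma}^{\kappa}(A^{\dagger})=\mho_{\sigma}^{\kappa}(A)^{\dagger}$ reduces to the identity $s_j\kappa(s_j/s_m)=s_m\kappa(s_m/s_j)$ for all $j,m$; writing $x=s_j/s_m$ this is precisely $x\kappa(x)=\kappa(x^{-1})$, one of the defining properties of $\mathcal{K}$. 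One may instead observe that $\Gamma_{\sigma}$ and $\Omega_{\sigma}^{\kappa}$ are each Hermitian-preserving — the latter by \lemref{lem::prop_Omega_op}(\ref{lem::prop_Omega_op_hermitian_preserving}) — and that a composition of Hermitian-preserving maps is Hermitian-preserving. I do not anticipate any real obstacle: the whole lemma is bookkeeping on the spectral decomposition, and the positivity and self-adjointness can alternatively be read off directly from the factorization $\mho_{\sigma}^{\kappa}=\Gamma_{\sigma}\circ\Omega_{\sigma}^{\kappa}\circ\Gamma_{\sigma}$ together with the fact that $\Gamma_{\sigma}$ is completely positive, invertible and HS-self-adjoint (as $\sigma$ is full rank) and the corresponding properties of $\Omega_{\sigma}^{\kappa}$ from \lemref{lem::prop_Omega_op}.
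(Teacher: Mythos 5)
Your proof is correct, and it follows exactly the route the paper intends: the paper states \lemref{lem::mho_op} without proof, remarking that all properties follow easily from the definition \eqref{eqn::mho_op}, and your verification is the direct spectral-decomposition argument mirroring \lemref{lem::prop_Omega_op}. In particular you correctly isolate the one non-trivial point, that Hermitian-preservation reduces to the identity $x\kappa(x)=\kappa(x^{-1})$ defining $\mathcal{K}$, just as the paper notes for $\Omega_{\sigma}^{\kappa}$.
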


\subsection{Eigenvalue formalism of SDPI constants}
The eigenvalue formalism of the quantum contraction coefficient can be found in \eg{} \cite{ruskai_beyond_1994, lesniewski_monotone_1999, hiai_contraction_2016}; 
the  classical analogous result can be found in \eg, \cite{choi_equivalence_1994, Raginsky_strong_2016}. In this subsection, we concisely present this formalism, for the sake of completeness.

Let us consider the ratio in the SDPI constant.
\begin{align}
\label{eqn::sdpi_ratio}
\frac{\chisqarg{\mathcal{E}(\rho)}{\mathcal{E}(\sigma)}}{\chisqarg{\rho}{ \sigma}} &= \frac{\Inner{\rho-\sigma}{\mathcal{E}^{\dagger}\circ \Omega_{\mathcal{E}(\sigma)}^{\kappa} \circ  \mathcal{E}(\rho-\sigma)}_{\hs} 
}{\Inner{\rho-\sigma}{\Omega_{\sigma}^{\kappa} (\rho-\sigma)}_{\hs}} = \frac{\Inner{\rho-\sigma}{\Upsilon_{\mathcal{E},\sigma}^{\kappa} (\rho-\sigma)}_{\Omega_{\sigma}^{\kappa}} 
}{\Inner{\rho-\sigma}{\rho-\sigma}_{\Omega_{\sigma}^{\kappa}}},
\end{align}
where we introduce 
\begin{equation}
\label{eqn::op_Upsilon}
\begin{split}
\Upsilon_{\mathcal{E},\sigma}^{\kappa} &:= 
\left(\Omega_{\sigma}^{\kappa}\right)^{-1} \circ \mathcal{E}^\dagger\circ \Omegaall{\mathcal{E}(\sigma)}^{\kappa} \circ \mathcal{E}.
\end{split}
\end{equation}

Here are some properties of the operator $\Upsilon_{\mathcal{E}, \sigma}^{\kappa}$.
\begin{lemma}
\label{lem::Upsilon_prop}
\revise{Assume that $\sigma, \mathcal{E}(\sigma)\in \dsetplus_n$.}
\begin{enumerate}[(i)]
\item  \label{lem::Upsilon_prop_positive}
 The operator $\Upsilon_{\mathcal{E},\sigma}^{\kappa}$ is positive semidefinite with respect to the inner product $\Inner{\cdot}{\cdot}_{\Omega_{\sigma}^{\kappa}}$.
 
\item \label{lem::Upsilon_prop_fixed_point} $\Upsilon_{\mathcal{E},\sigma}^{\kappa}(\sigma) = \sigma$.

\item \label{lem::Upsilon_prop_hermitian}
$\Upsilon_{\mathcal{E},\sigma}^{\kappa}$ is Hermitian perserving.

\item \label{lem::Upsilon_prop_bound_by_one} For any $A\in \MM_n$, we have
\begin{align*}
\Inner{A}{\Upsilon_{\mathcal{E},\sigma}^{\kappa}(A)}_{\Omega_{\sigma}^{\kappa}} \le \Inner{A}{A}_{\Omega_{\sigma}^{\kappa}}. 
\end{align*}
Therefore, the eigenvalue of $\Upsilon_{\mathcal{E},\sigma}^{\kappa}$ is bounded above by $1$.
\end{enumerate}
\end{lemma}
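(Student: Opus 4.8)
The plan is to obtain all four assertions by unwinding the definition \eqref{eqn::op_Upsilon} of $\Upsilon_{\mathcal{E},\sigma}^{\kappa}$ and invoking the elementary facts about $\Omega^\kappa_{(\cdot)}$ collected in \lemref{lem::prop_Omega_op}. The one identity I would record first, and then reuse everywhere, is that for all $A,B\in\MM_n$,
\begin{equation*}
\Inner{A}{\Upsilon_{\mathcal{E},\sigma}^{\kappa}(B)}_{\Omega^\kappa_\sigma}
=\Inner{A}{\mathcal{E}^\dagger\circ\Omega^\kappa_{\mathcal{E}(\sigma)}\circ\mathcal{E}(B)}_{\hs}
=\Inner{\mathcal{E}(A)}{\mathcal{E}(B)}_{\Omega^\kappa_{\mathcal{E}(\sigma)}},
\end{equation*}
which follows by cancelling $\Omega^\kappa_\sigma$ against $(\Omega^\kappa_\sigma)^{-1}$ and moving $\mathcal{E}^\dagger$ across the Hilbert--Schmidt inner product.

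For (\ref{lem::Upsilon_prop_positive}): the displayed identity is conjugate-symmetric in $A$ and $B$ because $\Omega^\kappa_{\mathcal{E}(\sigma)}$ is self-adjoint with respect to $\Inner{\cdot}{\cdot}_{\hs}$, so $\Upsilon_{\mathcal{E},\sigma}^{\kappa}$ is self-adjoint with respect to $\Inner{\cdot}{\cdot}_{\Omega^\kappa_\sigma}$; and putting $B=A$ gives $\Inner{A}{\Upsilon_{\mathcal{E},\sigma}^{\kappa}(A)}_{\Omega^\kappa_\sigma}=\Norm{\mathcal{E}(A)}^2_{\Omega^\kappa_{\mathcal{E}(\sigma)}}\ge0$ by the positivity of $\Omega^\kappa_{\mathcal{E}(\sigma)}$ (\lemref{lem::prop_Omega_op} (\ref{lem::prop_Omega_op_positivity})). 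For (\ref{lem::Upsilon_prop_fixed_point}): feed $\sigma$ through the chain $\mathcal{E}$, $\Omega^\kappa_{\mathcal{E}(\sigma)}$, $\mathcal{E}^\dagger$, $(\Omega^\kappa_\sigma)^{-1}$, using $\Omega^\kappa_{\mathcal{E}(\sigma)}(\mathcal{E}(\sigma))=\unit_n$ (\lemref{lem::prop_Omega_op} (\ref{lem::prop_Omega_op_unity})), the unitality of $\mathcal{E}^\dagger$ (which holds because $\mathcal{E}$ is trace preserving), and $\Omega^\kappa_\sigma(\sigma)=\unit_n$ once more, to get $\Upsilon_{\mathcal{E},\sigma}^{\kappa}(\sigma)=\sigma$. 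For (\ref{lem::Upsilon_prop_hermitian}): $\Upsilon_{\mathcal{E},\sigma}^{\kappa}$ is a composition of Hermitian-preserving maps, namely $\mathcal{E}$ (completely positive, hence Hermitian preserving), $\mathcal{E}^\dagger$ (adjoint of a Hermitian-preserving map), $\Omega^\kappa_{\mathcal{E}(\sigma)}$ (\lemref{lem::prop_Omega_op} (\ref{lem::prop_Omega_op_hermitian_preserving})), and $(\Omega^\kappa_\sigma)^{-1}$ (inverse of a bijective Hermitian-preserving map), so it is Hermitian preserving.

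The substantive step is (\ref{lem::Upsilon_prop_bound_by_one}), i.e.\ $\Norm{\mathcal{E}(A)}^2_{\Omega^\kappa_{\mathcal{E}(\sigma)}}\le\Norm{A}^2_{\Omega^\kappa_\sigma}$ for every $A\in\MM_n$, which I would deduce from the DPI \eqref{eqn::spi}. Both sides being quadratic, for traceless Hermitian $A$ one has $\sigma+tA\in\dsetplus_n$ for small $t>0$; plugging $\rho=\sigma+tA$ into \eqref{eqn::spi} and dividing by $t^2$ gives the inequality for such $A$. To reach an arbitrary Hermitian $A$, note that $\Inner{A}{\sigma}_{\Omega^\kappa_\sigma}=\tr(A)$ (\lemref{lem::prop_Omega_op} (\ref{lem::prop_Omega_op_unity})), so $\HH_n$ splits $\Inner{\cdot}{\cdot}_{\Omega^\kappa_\sigma}$-orthogonally as $\RR\sigma\oplus\HH^0_n$; since $\Upsilon_{\mathcal{E},\sigma}^{\kappa}$ is self-adjoint (by (\ref{lem::Upsilon_prop_positive})) and fixes $\sigma$ (by (\ref{lem::Upsilon_prop_fixed_point})), it preserves this splitting, acts as the identity on $\RR\sigma$, and is a contraction on $\HH^0_n$ by the previous case, so the bound holds on all of $\HH_n$. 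Finally, writing a general $A=B+\mathrm{i}C$ with $B,C\in\HH_n$ and using that $\Upsilon_{\mathcal{E},\sigma}^{\kappa}$ and $\Omega^\kappa_\sigma$ are Hermitian preserving (so the mixed $B$--$C$ terms are real and cancel by self-adjointness) reduces the claim to the Hermitian case. The eigenvalue bound then follows because $\Upsilon_{\mathcal{E},\sigma}^{\kappa}$ is self-adjoint and positive semidefinite for $\Inner{\cdot}{\cdot}_{\Omega^\kappa_\sigma}$, so its top eigenvalue equals $\sup_{A\ne0}\Inner{A}{\Upsilon_{\mathcal{E},\sigma}^{\kappa}(A)}_{\Omega^\kappa_\sigma}/\Inner{A}{A}_{\Omega^\kappa_\sigma}\le1$. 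The only place requiring any care is this reduction in (\ref{lem::Upsilon_prop_bound_by_one}) from the DPI for states to the quadratic-form inequality for all matrices; the rest is bookkeeping.
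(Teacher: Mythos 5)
Your proof is correct and follows essentially the same route as the paper: parts (\ref{lem::Upsilon_prop_positive})--(\ref{lem::Upsilon_prop_hermitian}) are verified exactly as in the paper (positivity and self-adjointness from \lemref{lem::prop_Omega_op} (\ref{lem::prop_Omega_op_positivity}), the fixed point from \lemref{lem::prop_Omega_op} (\ref{lem::prop_Omega_op_unity}) and unitality of $\mathcal{E}^{\dagger}$, and Hermitian preservation by composition), and part (\ref{lem::Upsilon_prop_bound_by_one}) is, as the paper says, ``essentially the data processing inequality.'' The only difference is that where the paper cites \cite{lesniewski_monotone_1999, temme_divergence_2010} for (\ref{lem::Upsilon_prop_bound_by_one}), you supply the reduction yourself --- perturbing $\rho=\sigma+tA$ for traceless Hermitian $A$ and then extending to all of $\MM_n$ via the orthogonal splitting $\RR\sigma\oplus\HH^0_n$ and the Hermitian/anti-Hermitian decomposition --- and that reduction is sound.
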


\begin{proof}
Part (i) is obvious from \eqref{eqn::op_Upsilon} and \lemref{lem::prop_Omega_op} (\ref{lem::prop_Omega_op_positivity}). 
Part (ii) can be verified directly by \lemref{lem::prop_Omega_op} (\ref{lem::prop_Omega_op_unity}) \revise{and the fact that $\mathcal{E}$ is trace-preserving (or equivalently $\mathcal{E}^{\dagger}$ is unital).}
As for part (iii), 
since the quantum channel $\mathcal{E}$ is completely positive, it is thus also Hermitian-preserving; so is $\mathcal{E}^\dagger$.
By \lemref{lem::prop_Omega_op} (\ref{lem::prop_Omega_op_hermitian_preserving}), $\Omega_{\sigma}^{\kappa}$ is Hermitian-preserving, thus so is $\left(\Omega_{\sigma}^{\kappa}\right)^{-1}$.  Finally, since the composition of two Hermitian-preserving operators is also Hermitian-preserving,  we conclude that $\Upsilon_{\mathcal{E},\sigma}^{\kappa}$ is Hermitian-preserving. Part (iv) is essentially the data processing inequality; see \eg{} \cite[Thm. II.14]{lesniewski_monotone_1999} and \cite[Thm. 4]{temme_divergence_2010} for the proof. 
\end{proof}

Then 
\begin{equation}
\label{eqn::sdpi_const_rewrite_3}
\begin{split}
 \eta_{\chisq} (\mathcal{E}, \sigma)
 \myeq{\eqref{eqn::sdpi_const}}& \sup_{\rho\in \dset_n:\ \rho\neq \sigma}  \frac{\chisqarg{\mathcal{E}(\rho)}{ \mathcal{E}(\sigma)}}{\chisqarg{\rho}{ \sigma}} \\
 \myeq{\eqref{eqn::sdpi_ratio}} & \sup_{\rho \ge 0:\ \rho\neq \sigma, \tr(\rho) = 1} \frac{\Inner{\rho-\sigma}{\Upsilon_{\mathcal{E},\sigma}^{\kappa}(\rho-\sigma)}_{\Omega_{\sigma}^{\kappa}} 
}{\Inner{\rho-\sigma}{\rho-\sigma}_{\Omega_{\sigma}^{\kappa}}}\\
=&\sup_{A\in \HH_n^0:\ A \neq 0} \frac{\Inner{A}{ \Upsilon_{\mathcal{E},\sigma}^{\kappa}(A)}_{\Omega_{\sigma}^{\kappa}}}{\Inner{A}{A}_{\Omega_{\sigma}^{\kappa}}}. 
\end{split}
\end{equation}

As one might observe, the last equation is closely connected to the 
eigenvalue formalism of the operator
$\Upsilon_{\mathcal{E},\sigma}^{\kappa}$, which is stated in the
following lemma.
\begin{lemma}
\label{lem::sdpi_variational}
For $\sigma\in \dsetplus_n$ and $\kappa\in \mathcal{K}$ and for any quantum channel $\mathcal{E}$ \revise{such that $\mathcal{E}(\sigma)\in \dsetplus_n$}, let $\lambda_2(\Upsilon_{\mathcal{E},\sigma}^{\kappa})$ be the second largest eigenvalue of $\Upsilon_{\mathcal{E},\sigma}^{\kappa}$ (defined in \eqref{eqn::op_Upsilon}).
Then 
\begin{align}
\eta_{\chisq}(\mathcal{E}, \sigma) = \lambda_2(\Upsilon_{\mathcal{E},\sigma}^{\kappa}).
\end{align}
\end{lemma}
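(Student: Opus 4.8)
The plan is to read off $\eta_{\chisq}(\mathcal{E},\sigma)$ from the last line of \eqref{eqn::sdpi_const_rewrite_3} as a constrained Rayleigh quotient and then apply the spectral theorem on the finite-dimensional real inner-product space $\bigl(\HH_n,\Inner{\cdot}{\cdot}_{\Omega_{\sigma}^{\kappa}}\bigr)$. First I would check that $\Upsilon := \Upsilon_{\mathcal{E},\sigma}^{\kappa}$ is a self-adjoint, positive semidefinite operator on this space: it maps $\HH_n$ into $\HH_n$ by \lemref{lem::Upsilon_prop} (\ref{lem::Upsilon_prop_hermitian}), it is positive semidefinite by \lemref{lem::Upsilon_prop} (\ref{lem::Upsilon_prop_positive}), and self-adjointness follows by unwinding \eqref{eqn::op_Upsilon}: $\Inner{A}{\Upsilon(B)}_{\Omega_{\sigma}^{\kappa}} = \Inner{\mathcal{E}(A)}{\Omega_{\mathcal{E}(\sigma)}^{\kappa}(\mathcal{E}(B))}_{\hs}$, which is symmetric in $A$ and $B$ because $\Omega_{\mathcal{E}(\sigma)}^{\kappa}$ is self-adjoint with respect to $\Inner{\cdot}{\cdot}_{\hs}$ (\lemref{lem::prop_Omega_op} (\ref{lem::prop_Omega_op_positivity})) and everything in sight is Hermitian. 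By the spectral theorem there is an $\Omega_{\sigma}^{\kappa}$-orthonormal basis $A_1,\dots,A_{n^2}$ of $\HH_n$ consisting of eigenvectors of $\Upsilon$, with real eigenvalues $\lambda_1 \ge \lambda_2 \ge \dots \ge \lambda_{n^2} \ge 0$.

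Next I would pin down the top eigenvector and the feasible set. Since $\Upsilon(\sigma) = \sigma$ by \lemref{lem::Upsilon_prop} (\ref{lem::Upsilon_prop_fixed_point}), $\Norm{\sigma}_{\Omega_{\sigma}^{\kappa}}^2 = \Inner{\sigma}{\unit_n}_{\hs} = \tr(\sigma) = 1$ by \lemref{lem::prop_Omega_op} (\ref{lem::prop_Omega_op_unity}), and every eigenvalue of $\Upsilon$ is at most $1$ by \lemref{lem::Upsilon_prop} (\ref{lem::Upsilon_prop_bound_by_one}), we get $\lambda_1 = 1$ and may take $A_1 = \sigma$; if the eigenvalue $1$ is degenerate I would simply choose the orthonormal basis of the top eigenspace so that it contains $\sigma$. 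Moreover, by \lemref{lem::prop_Omega_op} (\ref{lem::prop_Omega_op_unity}), $\tr(A) = \Inner{\sigma}{A}_{\Omega_{\sigma}^{\kappa}}$ for all $A$, so $\HH_n^0$ is precisely the $\Omega_{\sigma}^{\kappa}$-orthogonal complement of $\sigma = A_1$ inside $\HH_n$, i.e.\ $\HH_n^0 = \mathrm{span}\{A_2,\dots,A_{n^2}\}$. Writing $A = \sum_{k \ge 2} c_k A_k$, the quotient in the last line of \eqref{eqn::sdpi_const_rewrite_3} becomes $\bigl(\sum_{k\ge2}\lambda_k c_k^2\bigr) / \bigl(\sum_{k\ge2} c_k^2\bigr)$, whose supremum over nonzero $A \in \HH_n^0$ equals $\lambda_2$ and is attained at $A = A_2$. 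Combined with \eqref{eqn::sdpi_const_rewrite_3}, this gives $\eta_{\chisq}(\mathcal{E},\sigma) = \lambda_2(\Upsilon_{\mathcal{E},\sigma}^{\kappa})$.

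I do not expect a serious obstacle: the content is essentially the standard min--max characterization of eigenvalues, restricted to a hyperplane. The points that merit a line of care are (a) the degenerate case $\lambda_1 = \lambda_2 = 1$, handled above by adapting the choice of eigenbasis so that $A_1 = \sigma$; (b) the passage, already carried out in \eqref{eqn::sdpi_const_rewrite_3}, from the supremum over physical states $\rho \in \dset_n$ with $\rho \ne \sigma$ to the unconstrained supremum over $A \in \HH_n^0 \setminus\{0\}$, which uses that $\sigma$ is full rank so that $\sigma + \varepsilon A \in \dset_n$ for $\varepsilon>0$ small, together with scale invariance of the Rayleigh quotient; and (c) the legitimacy of diagonalizing the restriction of $\Upsilon$ to the real space $\HH_n$, which holds because $\Upsilon$ is Hermitian-preserving, hence commutes with $A \mapsto A^{\dagger}$, so each of its (complex) eigenspaces is spanned by Hermitian matrices.
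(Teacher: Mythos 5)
Your proposal is correct and follows essentially the same route as the paper's proof: spectral decomposition of $\Upsilon_{\mathcal{E},\sigma}^{\kappa}$ with respect to $\Inner{\cdot}{\cdot}_{\Omega_{\sigma}^{\kappa}}$, identification of $\sigma$ as the top eigenvector with $\HH_n^0$ as its orthogonal complement, and a Rayleigh-quotient argument. The only cosmetic difference is that you diagonalize directly on the real space $\HH_n$ (justifying this via Hermitian-preservation in your point (c)), whereas the paper diagonalizes on $\MM_n$ and then uses the same Hermitian-preservation to replace $V_2$ by the Hermitian eigenvector $\frac{V_2+V_2^{\dagger}}{2}$ or $\frac{V_2-V_2^{\dagger}}{2i}$.
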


\begin{proof}
Since $\Upsilon_{\mathcal{E},\sigma}^{\kappa}$ is positive  semidefinite with respect to the inner product $\Inner{\cdot}{\cdot}_{\Omega_{\sigma}^{\kappa}}$ from \lemref{lem::Upsilon_prop} (\ref{lem::Upsilon_prop_positive}), it admits a spectral decomposition with 
$
\Upsilon_{\mathcal{E},\sigma}^{\kappa}(V_j) = \theta_j V_j,\ \theta_j \ge 0,
$
where $j=1, 2, \cdots, n^2$ and 
$\{V_j\}_{j=1}^{n^2}$ is an orthonormal basis in the Hilbert space $\left(\MM_n, \Inner{\cdot}{\cdot}_{\Omega_{\sigma}^{\kappa}} \right)$. 
Note that $\sigma$ is always an eigenvector of $\Upsilon_{\mathcal{E},\sigma}^{\kappa}$ from \lemref{lem::Upsilon_prop} (\ref{lem::Upsilon_prop_fixed_point});  
without loss of generality, let $V_{1} = \sigma$ and $\theta_{1} = 1$. By the orthogonality of $\{V_j\}_{j}$, we know $0 = \Inner{\sigma}{V_j}_{\Omega_{\sigma}^{\kappa}} = \tr(V_j)$ for $j \ge 2$. 
By \lemref{lem::Upsilon_prop} (\ref{lem::Upsilon_prop_bound_by_one}), $\theta_j \le 1$ for all $1\le j\le n^2$; thus without loss of generality, 
assume $\theta_j$ are listed in descending order and hence $\lambda_2(\Upsilon_{\mathcal{E},\sigma}^{\kappa})  = \theta_2$. 
By rewriting $A = \sum_{j=2}^{n^2} c_j V_j$ in \eqref{eqn::sdpi_const_rewrite_3} where $c_j\in \Complex$, we immediately know that 
$\eta_{\chisq} (\mathcal{E}, \sigma) \le \lambda_2(\Upsilon_{\mathcal{E},\sigma}^{\kappa})$.

By the fact that $\Upsilon_{\mathcal{E},\sigma}^{\kappa}$ is Hermitian-preserving (see \lemref{lem::Upsilon_prop} (\ref{lem::Upsilon_prop_hermitian})), $V_2^\dagger$ is also an eigenvector associated with the eigenvalue 
$\lambda_2(\Upsilon_{\mathcal{E},\sigma}^{\kappa})$. Then we choose $A\in \HH_n^0$ in \eqref{eqn::sdpi_const_rewrite_3} by $\frac{V_2 + V_2^\dagger}{2}$ or $\frac{V_2 - V_2^\dagger}{2i}$. Note that such an $A$ is also an eigenvector of $\Upsilon_{\mathcal{E},\sigma}^{\kappa}$ with the eigenvalue $\lambda_2(\Upsilon_{\mathcal{E},\sigma}^{\kappa})$. 
Then 
$\eta_{\chisq}(\mathcal{E}, \sigma) \ge \Inner{A}{ \Upsilon_{\mathcal{E},\sigma}^{\kappa}(A)}_{\Omega_{\sigma}^{\kappa}} / {\Inner{A}{A}_{\Omega_{\sigma}^{\kappa}}} = \lambda_2(\Upsilon_{\mathcal{E},\sigma}^{\kappa}).$
\end{proof}

\subsection{Another variational formalism of SDPI constants}
Recall the definition of the operator $\mho_{\sigma}^{\kappa}$ from
\eqref{eqn::mho_op_kappa}. In \lemref{lem::sdpi_svd} below, we provide
another variational characterization of the SDPI constant;
essentially, it follows from the connection between the eigenvalue
formalism (as discussed in the last subsection) and the corresponding
singular value formalism. Its classical version is well-known and can
be found in \eg{} the proof of
\cite[Thm. \rom{3}.2]{Raginsky_strong_2016}. This idea for quantum $\chisq$
divergences has appeared implicitly in
\cite[Thm. 9]{temme_divergence_2010}; however, we don't assume
$\sigma$ to be the stationary state of the quantum channel herein,
compared with \cite{temme_divergence_2010}.

\begin{lemma}
\label{lem::sdpi_svd}
Assume that quantum states $\sigma, \mathcal{E}(\sigma) \in \dsetplus_n$.
For any $\kappa\in \mathcal{K}$, 
\begin{align}
\label{eqn::sdpi_max_corr}
\sqrt{\eta_{\chisq}(\mathcal{E}, \sigma)} = \max_{F, G} \Abs{ \Inner{\Kop(F) }{G}_{\mho_{\mathcal{E}(\sigma)}^{\kappa}}}, 
\end{align}
where the operator $\Kop$ is defined by
\begin{align}
\label{eqn::op_K}
\Kop := \Gamma_{\mathcal{E}(\sigma)}^{-1}  \circ \mathcal{E} \circ \Gamma_{\sigma},
\end{align}
and the maximum is taken over all $F, G\in \MM_n$ such that
\begin{align}
\label{eqn::sdpi_max_corr_cond}
\Inner{\unit_n}{F}_{\mho_{\sigma}^{\kappa}} = \Inner{\unit_n}{G}_{\mho_{\mathcal{E}(\sigma)}^{\kappa}} = 0, \qquad \Norm{F}_{\mho_{\sigma}^{\kappa}} = \Norm{G}_{\mho_{\mathcal{E}(\sigma)}^{\kappa}} = 1.
\end{align}
\end{lemma}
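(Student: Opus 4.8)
The plan is to convert the eigenvalue formalism from \lemref{lem::sdpi_variational} into a singular-value (bilinear) formalism by a change of variables that symmetrizes the operator $\Upsilon_{\mathcal{E},\sigma}^{\kappa}$. First I would rewrite the Rayleigh quotient in \eqref{eqn::sdpi_const_rewrite_3}. Using the definition $\Upsilon_{\mathcal{E},\sigma}^{\kappa} = (\Omega_{\sigma}^{\kappa})^{-1}\circ\mathcal{E}^{\dagger}\circ\Omega_{\mathcal{E}(\sigma)}^{\kappa}\circ\mathcal{E}$, the numerator $\Inner{A}{\Upsilon_{\mathcal{E},\sigma}^{\kappa}(A)}_{\Omega_{\sigma}^{\kappa}}$ equals $\Inner{\mathcal{E}(A)}{\Omega_{\mathcal{E}(\sigma)}^{\kappa}\circ\mathcal{E}(A)}_{\hs}=\Norm{\mathcal{E}(A)}_{\Omega_{\mathcal{E}(\sigma)}^{\kappa}}^{2}$, while the denominator is $\Norm{A}_{\Omega_{\sigma}^{\kappa}}^{2}$. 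Thus $\sqrt{\eta_{\chisq}(\mathcal{E},\sigma)}$ is the operator norm of $\mathcal{E}$ as a map $(\HH_n^0,\Inner{\cdot}{\cdot}_{\Omega_{\sigma}^{\kappa}})\to(\MM_n,\Inner{\cdot}{\cdot}_{\Omega_{\mathcal{E}(\sigma)}^{\kappa}})$ restricted to the traceless Hermitian subspace, and by a standard duality this operator norm equals $\sup |\Inner{\mathcal{E}(A)}{B}_{\Omega_{\mathcal{E}(\sigma)}^{\kappa}}|$ over appropriately normalized $A,B$.

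Next I would perform the substitution that turns $\Omega$-weighted objects into $\mho$-weighted ones. Writing $A=\Gamma_{\sigma}(F)$ and $B=\Gamma_{\mathcal{E}(\sigma)}(G)$, and recalling from \eqref{eqn::mho_op_kappa} that $\mho_{\sigma}^{\kappa}=\Gamma_{\sigma}\circ\Omega_{\sigma}^{\kappa}\circ\Gamma_{\sigma}$ together with the self-adjointness of $\Gamma_{\sigma}$ with respect to $\Inner{\cdot}{\cdot}_{\hs}$, one computes $\Norm{A}_{\Omega_{\sigma}^{\kappa}}^{2}=\Inner{\Gamma_{\sigma}(F)}{\Omega_{\sigma}^{\kappa}\circ\Gamma_{\sigma}(F)}_{\hs}=\Norm{F}_{\mho_{\sigma}^{\kappa}}^{2}$, and similarly $\Norm{B}_{\Omega_{\mathcal{E}(\sigma)}^{\kappa}}=\Norm{G}_{\mho_{\mathcal{E}(\sigma)}^{\kappa}}$, so the normalization $\Norm{A}_{\Omega_{\sigma}^{\kappa}}=\Norm{B}_{\Omega_{\mathcal{E}(\sigma)}^{\kappa}}=1$ becomes $\Norm{F}_{\mho_{\sigma}^{\kappa}}=\Norm{G}_{\mho_{\mathcal{E}(\sigma)}^{\kappa}}=1$. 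For the bilinear form, $\Inner{\mathcal{E}(A)}{B}_{\Omega_{\mathcal{E}(\sigma)}^{\kappa}}=\Inner{\mathcal{E}\circ\Gamma_{\sigma}(F)}{\Omega_{\mathcal{E}(\sigma)}^{\kappa}\circ\Gamma_{\mathcal{E}(\sigma)}(G)}_{\hs}=\Inner{\Gamma_{\mathcal{E}(\sigma)}^{-1}\circ\mathcal{E}\circ\Gamma_{\sigma}(F)}{\mho_{\mathcal{E}(\sigma)}^{\kappa}(G)}_{\hs}=\Inner{\Kop(F)}{G}_{\mho_{\mathcal{E}(\sigma)}^{\kappa}}$, using again $\mho_{\mathcal{E}(\sigma)}^{\kappa}=\Gamma_{\mathcal{E}(\sigma)}\circ\Omega_{\mathcal{E}(\sigma)}^{\kappa}\circ\Gamma_{\mathcal{E}(\sigma)}$ and self-adjointness of $\Gamma_{\mathcal{E}(\sigma)}$. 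This is exactly the right-hand side of \eqref{eqn::sdpi_max_corr} once the constraints are matched.

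The remaining point is to verify that the orthogonality constraints line up: the eigenvalue formalism restricts $A$ to the traceless Hermitian space $\HH_n^0$, which under $A=\Gamma_{\sigma}(F)$ with $\Gamma_{\sigma}(F)=\sigma^{1/2}F\sigma^{1/2}$ corresponds to $\tr(\Gamma_{\sigma}(F))=\Inner{\unit_n}{\Gamma_{\sigma}(F)}_{\hs}=0$; using \lemref{lem::mho_op}(iii), $\mho_{\sigma}^{\kappa}(\unit_n)=\sigma$, one should rather observe $\Inner{\unit_n}{F}_{\mho_{\sigma}^{\kappa}}=\Inner{\mho_{\sigma}^{\kappa}(\unit_n)}{F}_{\hs}=\Inner{\sigma}{F}_{\hs}=\tr(\sigma F)$; I would reconcile this by noting $\tr(\Gamma_\sigma(F)) = \tr(\sigma^{1/2} F \sigma^{1/2}) = \tr(\sigma F)$, so the traceless condition on $A$ is precisely $\Inner{\unit_n}{F}_{\mho_{\sigma}^{\kappa}}=0$, and likewise for $G$. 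One also checks that dropping the Hermiticity requirement on $F$ (allowing general $F\in\MM_n$) does not change the supremum, by the same symmetrization argument used at the end of the proof of \lemref{lem::sdpi_variational} (replacing $F$ by $(F\pm F^\dagger)/2$ or $/2i$). I expect the main subtlety to be bookkeeping with the constraint subspaces and confirming that the duality step—passing from the operator norm of $\mathcal{E}|_{\HH_n^0}$ to the bilinear supremum—respects the orthogonal complement of $\sigma$ on both input and output sides, i.e., that the fixed directions $\sigma$ and $\mathcal{E}(\sigma)$ (eigenvalue $1$) are correctly excluded; everything else is a routine change of variables.
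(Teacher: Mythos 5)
Your proposal is correct and follows essentially the same route as the paper: both start from the eigenvalue formalism of \lemref{lem::sdpi_variational}, substitute $A=\Gamma_{\sigma}(F)$ (and the analogous substitution on the output side) to pass from $\Omega$-weighted to $\mho$-weighted inner products, identify the operator $\Kop$, and conclude by identifying $\sqrt{\eta_{\chisq}(\mathcal{E},\sigma)}$ with the largest singular value of $\Kop$ between the constrained subspaces. The only cosmetic difference is that the paper writes out the SVD of $\Kop$ and its adjoint explicitly, whereas you invoke the equivalent operator-norm duality; the constraint bookkeeping you flag at the end (that $\Kop$ maps the orthogonal complement of $\unit_n$ in $\mathscr{H}_1$ into that of $\mathscr{H}_2$) is exactly the check the paper performs.
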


\begin{proof}[Proof of \lemref{lem::sdpi_svd}]  
First, we rewrite \lemref{lem::sdpi_variational} in the language of the relative density (whose classical analog is the Radon–Nikodym derivative); 
specifically, to get the third equality below, $A$ is replaced by $\Gamma_{\sigma}(A)$. 
By \lemref{lem::sdpi_variational},
\begin{align*}
&\eta_{\chisq}(\mathcal{E}, \sigma) = \sup_{A\in \MM_n^0,\ A \neq 0} \frac{\Inner{A}{\Upsilon_{\mathcal{E},\sigma}^{\kappa} (A)}_{\Omega_{\sigma}^{\kappa}}}{\Inner{A}{A}_{\Omega_{\sigma}^{\kappa}}} 
=\sup_{A\in \MM_n^0,\ A \neq 0} \frac{\Inner{\mathcal{E} (A)} {\Omega_{\mathcal{E}(\sigma)}^{\kappa}\circ \mathcal{E} (A)}_{\hs}}{\Inner{A}{\Omega_{\sigma}^{\kappa} (A)}_{\hs}}\\
=& \sup_{A\neq 0,\ \Inner{\unit_n}{A}_{\mho_{\sigma}^{\kappa}} = 0} \frac{\Inner{\mathcal{E} \circ \Gamma_{\sigma} (A)} {\Omega_{\mathcal{E}(\sigma)}^{\kappa}\circ \mathcal{E} \circ \Gamma_{\sigma} (A)}_{\hs}}{\Inner{\Gamma_{\sigma} (A)}{\Omega_{\sigma}^{\kappa} \circ \Gamma_{\sigma} (A)}_{\hs}} 
\myeq{\eqref{eqn::mho_op_kappa}} \sup_{A\neq 0,\ \Inner{\unit_n}{A}_{\mho_{\sigma}^{\kappa}} = 0} \frac{\Inner{\Kop (A)} {\mho_{\mathcal{E}(\sigma)}^{\kappa} \circ \Kop (A)}_{\hs}}{\Inner{A}{A}_{\mho_{\sigma}^{\kappa}}}\\
=& \sup_{A\neq 0,\ \Inner{\unit_n}{A}_{\mho_{\sigma}^{\kappa}} = 0} \frac{\Inner{A} {\left(\mho_{\sigma}^{\kappa} \right)^{-1}\circ\Kop^{\dagger} \circ \mho_{\mathcal{E}(\sigma)}^{\kappa} \circ \Kop (A)}_{\mho_{\sigma}^{\kappa}}}{\Inner{A}{A}_{\mho_{\sigma}^{\kappa}}}.
\end{align*}

As for the operator $\Kop$, it can be straightforwardly checked that 
\begin{itemize}
\item $\Kop$ is completely positive and unital ($\Kop(\unit_n) = \unit_n$). 

\item $\Kop^{\dagger} = \Gamma_{\sigma} \circ \mathcal{E}^{\dagger} \circ \Gamma_{\mathcal{E}(\sigma)}^{-1}$ is completely positive, trace-preserving, and $\Kop^{\dagger}(\mathcal{E}(\sigma)) = \sigma$.

\item Consider the following two Hilbert spaces $\mathscr{H}_1$ and $\mathscr{H}_2$, 
\begin{align*}
\mathscr{H}_1 &:= \left\{A\in \MM_n: \Inner{\unit_n}{A}_{\mho_{\sigma}^{\kappa}} = 0 \right\}, \text{ equipped with the inner product } \Inner{\cdot}{\cdot}_{\mho_{\sigma}^{\kappa}};\\
\mathscr{H}_2 &:= \left\{A\in \MM_n: \Inner{\unit_n}{A}_{\mho_{\mathcal{E}(\sigma)}^{\kappa}} = 0 \right\}, \text{ equipped with the inner product } \Inner{\cdot}{\cdot}_{\mho_{\mathcal{E}(\sigma)}^{\kappa}}. 
\end{align*}
Then we can readily verify that $\Kop$ is an operator from $\mathscr{H}_1$ to $\mathscr{H}_2$, \ie, if $\Inner{\unit_n}{A}_{\mho_{\sigma}^{\kappa}} = 0$, then $\Inner{\unit_n}{\Kop(A)}_{\mho_{\mathcal{E}(\sigma)}^{\kappa}} = 0$. The dual operator of $\Kop$, denoted by $\wt{\Kop}$, maps from $\mathscr{H}_2$ to $\mathscr{H}_1$ and it is explicitly given by 
$\wt{\Kop} = \left(\mho_{\sigma}^{\kappa}\right)^{-1} \circ \Kop^{\dagger}\circ \mho_{\mathcal{E}(\sigma)}^{\kappa}$. 
\end{itemize}
Then, we have
\begin{align*}
\eta_{\chisq}(\mathcal{E}, \sigma) = \sup_{A\neq 0,\ A\in \mathscr{H}_1} \frac{\Inner{A} {\wt{\Kop} \circ \Kop (A)}_{\mho_{\sigma}^{\kappa}}}{\Inner{A}{A}_{\mho_{\sigma}^{\kappa}}}. 
\end{align*}

Let us denote the SVD decomposition of $\wt{\Kop}$ by $\wt{\Kop}(\cdot) = \sum_{j} a_j \phi_j \Inner{\varphi_j}{\cdot}_{\mho_{\mathcal{E}(\sigma)}^{\kappa}}$ where $a_j \ge 0$, $\{\phi_j\}_{j}$ and $\{\varphi_j\}_{j}$ are orthonormal basis of $\mathscr{H}_1$ and $\mathscr{H}_2$ respectively. Then, easily we know $\Kop(\cdot) = \sum_{j} a_j \varphi_j \Inner{\phi_j}{\cdot}_{\mho_{\sigma}^{\kappa}}$ and that $\wt{\Kop}\circ \Kop (\cdot) = \sum_{j} a_j^2 \phi_j \Inner{\phi_j}{\cdot}_{\mho_{\sigma}^{\kappa}}$.  Then $\eta_{\chisq}(\mathcal{E}, \sigma)$ is simply the largest value of $a_j^2$; namely, $\sqrt{\eta_{\chisq}(\mathcal{E}, \sigma)}$ is the largest singular value of $\Kop$, and the result in \lemref{lem::sdpi_svd} follows immediately. 
\end{proof}

\subsection{Comparison of SDPI constants}
\label{subsec::comparison}

First, we provide a uniform lower bound of $\sqrt{\eta_{\chisq}(\mathcal{E}, \sigma)}$ for any $\kappa\in \mathcal{K}$ in terms of $\eta_{\chi^2_{\kappa_{1/2}}}(\mathcal{E}, \sigma)$ in \lemref{lem::chialphahalf_less_chihalfsquare}, which is a new result to the best of our knowledge. 
One of our corollaries in \eqref{eqn::cc_half_alpha} can also be derived by \cite[Thm. 4.4]{hiai_contraction_2016} and \cite[Thm. 5.3]{hiai_contraction_2016}.
However, our approach to show \eqref{eqn::cc_half_alpha} is different from \cite{hiai_contraction_2016}:  their result comes from comparing the contraction coefficient $\eta_{\chisq}(\mathcal{E})$ with $\eta_{\tr}(\mathcal{E})$ (the contraction coefficient for trace norm); we use the SDPI constant of the Petz recovery map as the bridge. 
Second, we consider quantum-classical (QC) channels and provide the ordering of SDPI constants for different $\kappa$ in \lemref{lem::monotonicity}; similar results have appeared in \cite[Prop. 5.5]{hiai_contraction_2016} for contraction coefficients. 

\begin{lemma}
\label{lem::chialphahalf_less_chihalfsquare}
For any quantum channel $\mathcal{E}$ and quantum state $\sigma\in \dsetplus_n$ such that $\mathcal{E}(\sigma)\in \dsetplus_n$, we have
\begin{align}
\label{eqn::chialphahalf_less_chihalfsquare}
\eta_{\chi^2_{\kappa_{1/2}}}(\mathcal{E}, \sigma) \le \sqrt{\eta_{\chisq}(\mathcal{E}, \sigma) \eta_{\chisq}(\mathcal{R}_{\mathcal{E}, \sigma}, \mathcal{E}(\sigma))} \le \sqrt{\eta_{\chisq}(\mathcal{E}, \sigma)},
\end{align}
where $\mathcal{R}_{\mathcal{E}, \sigma}$ is the Petz recovery map, defined by  
\begin{align}
\label{eqn::petz_recovery}
\mathcal{R}_{\mathcal{E}, \sigma}(A)  := 
\sigma^{\half} \mathcal{E}^{\dagger} (\mathcal{E}(\sigma)^{-\half} A \mathcal{E}(\sigma)^{-\half}) \sigma^{\half} \equiv \Gamma_{\sigma} \circ \mathcal{E}^{\dagger} \circ \Gamma^{-1}_{\mathcal{E}(\sigma)}(A), \qquad \forall A\in \MM_n
\end{align}
mapping $\mathcal{E}(\sigma)$ to $\sigma$. 
\end{lemma}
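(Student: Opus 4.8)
The plan is to dispose of the right-hand inequality at once and prove the left-hand one by turning $\eta_{\chi^2_{\kappa_{1/2}}}(\mathcal{E},\sigma)$ into an eigenvalue of the composition $\mathcal{R}_{\mathcal{E},\sigma}\circ\mathcal{E}$ and bounding that eigenvalue by a product of operator norms.

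First I would record that $\mathcal{R}_{\mathcal{E},\sigma}$ is a genuine quantum channel with $\mathcal{R}_{\mathcal{E},\sigma}(\mathcal{E}(\sigma)) = \sigma$: it is completely positive, being the composition of the completely positive maps $A\mapsto\mathcal{E}(\sigma)^{-\half}A\mathcal{E}(\sigma)^{-\half}$, $\mathcal{E}^{\dagger}$, and $A\mapsto\sigma^{\half}A\sigma^{\half}$; it is trace preserving because $\tr\mathcal{R}_{\mathcal{E},\sigma}(A) = \tr\bigl(\mathcal{E}(\sigma)\,\mathcal{E}(\sigma)^{-\half}A\mathcal{E}(\sigma)^{-\half}\bigr) = \tr A$ by cyclicity and the fact that $\mathcal{E}$ is trace preserving; and $\mathcal{R}_{\mathcal{E},\sigma}(\mathcal{E}(\sigma)) = \sigma^{\half}\mathcal{E}^{\dagger}(\unit_n)\sigma^{\half} = \sigma$ since $\mathcal{E}^{\dagger}$ is unital. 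Since $\mathcal{E}(\sigma)\in\dsetplus_n$ by hypothesis and $\mathcal{R}_{\mathcal{E},\sigma}(\mathcal{E}(\sigma)) = \sigma\in\dsetplus_n$, the constant $\eta_{\chisq}(\mathcal{R}_{\mathcal{E},\sigma},\mathcal{E}(\sigma))$ is well defined and all of the formalism of \secref{sec::preliminary} applies to the pair $(\mathcal{R}_{\mathcal{E},\sigma},\mathcal{E}(\sigma))$. The right-hand inequality in \eqref{eqn::chialphahalf_less_chihalfsquare} is then immediate, since $\eta_{\chisq}(\mathcal{R}_{\mathcal{E},\sigma},\mathcal{E}(\sigma))\le 1$ by the data processing inequality (\lemref{lem::Upsilon_prop}(\ref{lem::Upsilon_prop_bound_by_one})).

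For the left-hand inequality, the key is the identity $\Omega_{\sigma}^{\kappa_{1/2}} = \Gamma_{\sigma}^{-1}$ (and likewise with $\sigma$ replaced by $\mathcal{E}(\sigma)$), so that by \eqref{eqn::op_Upsilon} and the definition \eqref{eqn::petz_recovery} of the Petz map,
\[
\Upsilon_{\mathcal{E},\sigma}^{\kappa_{1/2}} = \bigl(\Omega_{\sigma}^{\kappa_{1/2}}\bigr)^{-1}\circ\mathcal{E}^{\dagger}\circ\Omega_{\mathcal{E}(\sigma)}^{\kappa_{1/2}}\circ\mathcal{E} = \Gamma_{\sigma}\circ\mathcal{E}^{\dagger}\circ\Gamma_{\mathcal{E}(\sigma)}^{-1}\circ\mathcal{E} = \mathcal{R}_{\mathcal{E},\sigma}\circ\mathcal{E}.
\]
By \lemref{lem::sdpi_variational}, $\eta_{\chi^2_{\kappa_{1/2}}}(\mathcal{E},\sigma) = \lambda_2(\Upsilon_{\mathcal{E},\sigma}^{\kappa_{1/2}})$, so $\eta_{\chi^2_{\kappa_{1/2}}}(\mathcal{E},\sigma)$ is an eigenvalue of the linear map $\mathcal{R}_{\mathcal{E},\sigma}\circ\mathcal{E}$, and the proof of \lemref{lem::sdpi_variational} provides a corresponding eigenvector in $\HH_n^0\subset\MM_n^0$. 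Now fix any $\kappa\in\mathcal{K}$ and equip $\MM_n^0$ with the norms $\Norm{\cdot}_{\Omega_{\sigma}^{\kappa}}$ and $\Norm{\cdot}_{\Omega_{\mathcal{E}(\sigma)}^{\kappa}}$ (well defined by \lemref{lem::prop_Omega_op}(\ref{lem::prop_Omega_op_positivity})). Since $\mathcal{E}$, $\mathcal{R}_{\mathcal{E},\sigma}$, and hence $\mathcal{R}_{\mathcal{E},\sigma}\circ\mathcal{E}$ are trace preserving, they map $\MM_n^0$ into itself; because the absolute value of any eigenvalue of a linear operator is bounded by its operator norm (for any choice of norms) and operator norms are submultiplicative under composition,
\[
\eta_{\chi^2_{\kappa_{1/2}}}(\mathcal{E},\sigma)\le\Norm{\mathcal{R}_{\mathcal{E},\sigma}\circ\mathcal{E}}_{\Omega_{\sigma}^{\kappa}\to\Omega_{\sigma}^{\kappa}}\le\Norm{\mathcal{R}_{\mathcal{E},\sigma}}_{\Omega_{\mathcal{E}(\sigma)}^{\kappa}\to\Omega_{\sigma}^{\kappa}}\cdot\Norm{\mathcal{E}}_{\Omega_{\sigma}^{\kappa}\to\Omega_{\mathcal{E}(\sigma)}^{\kappa}},
\]
where $\Norm{T}_{\Omega_{\mu}^{\kappa}\to\Omega_{\nu}^{\kappa}}$ denotes the operator norm of $T\colon\MM_n^0\to\MM_n^0$ with domain norm $\Norm{\cdot}_{\Omega_{\mu}^{\kappa}}$ and target norm $\Norm{\cdot}_{\Omega_{\nu}^{\kappa}}$. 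Finally I would identify the two factors: the last line of \eqref{eqn::sdpi_const_rewrite_3} together with \lemref{lem::sdpi_variational} gives $\Norm{\mathcal{E}}^2_{\Omega_{\sigma}^{\kappa}\to\Omega_{\mathcal{E}(\sigma)}^{\kappa}} = \sup_{A\in\MM_n^0\setminus\{0\}}\Inner{\mathcal{E}(A)}{\Omega_{\mathcal{E}(\sigma)}^{\kappa}\mathcal{E}(A)}_{\hs}/\Inner{A}{\Omega_{\sigma}^{\kappa}A}_{\hs} = \eta_{\chisq}(\mathcal{E},\sigma)$, and, applying the same identity to the pair $(\mathcal{R}_{\mathcal{E},\sigma},\mathcal{E}(\sigma))$ (legitimate because $\mathcal{R}_{\mathcal{E},\sigma}(\mathcal{E}(\sigma)) = \sigma$), $\Norm{\mathcal{R}_{\mathcal{E},\sigma}}^2_{\Omega_{\mathcal{E}(\sigma)}^{\kappa}\to\Omega_{\sigma}^{\kappa}} = \eta_{\chisq}(\mathcal{R}_{\mathcal{E},\sigma},\mathcal{E}(\sigma))$. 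Substituting gives $\eta_{\chi^2_{\kappa_{1/2}}}(\mathcal{E},\sigma)\le\sqrt{\eta_{\chisq}(\mathcal{E},\sigma)\,\eta_{\chisq}(\mathcal{R}_{\mathcal{E},\sigma},\mathcal{E}(\sigma))}$.

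The point I would be most careful about is this last identification of an operator norm with an SDPI constant: the variational formula \eqref{eqn::sdpi_const_rewrite_3} is a supremum over \emph{Hermitian} traceless matrices, so one must verify that enlarging it to all of $\MM_n^0$ does not change the value — which is true because $\Upsilon_{\mathcal{E},\sigma}^{\kappa}$ is Hermitian preserving and positive with respect to $\Inner{\cdot}{\cdot}_{\Omega_{\sigma}^{\kappa}}$, hence a self-adjoint positive operator whose restrictions to $\HH_n^0$ and to $i\HH_n^0$ have identical spectra under the splitting $\MM_n^0=\HH_n^0\oplus i\HH_n^0$ — and that the eigenvector realizing $\lambda_2(\Upsilon_{\mathcal{E},\sigma}^{\kappa_{1/2}})$ lies in $\MM_n^0$, which is exactly what the proof of \lemref{lem::sdpi_variational} establishes. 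Note that the argument never invokes $\kappa_{1/2}\le\kappa$, so it is valid for every $\kappa\in\mathcal{K}$, as stated.
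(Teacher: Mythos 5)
Your proposal is correct and follows essentially the same route as the paper's proof: both hinge on the identity $\Upsilon_{\mathcal{E},\sigma}^{\kappa_{1/2}} = \mathcal{R}_{\mathcal{E},\sigma}\circ\mathcal{E}$, identify $\eta_{\chi^2_{\kappa_{1/2}}}(\mathcal{E},\sigma)$ with $\lambda_2$ of this composition via \lemref{lem::sdpi_variational}, and then bound that eigenvalue by splitting through the intermediate $\Omega_{\mathcal{E}(\sigma)}^{\kappa}$-weighted space. The paper carries out your submultiplicativity step concretely, by evaluating the Rayleigh quotient at the Hermitian eigenvector $V$ and inserting $\wt{V}=\mathcal{E}(V)$, which is the same estimate in less abstract clothing.
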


The followings are immediate consequences
\revise{of the lemma above}.
\begin{coro}
Under the same assumption as in \lemref{lem::chialphahalf_less_chihalfsquare}, 
\begin{enumerate}[(i)]
\item The SDPI constant associated with $\kappa_{1/2}$ for the pair $(\mathcal{E}, \sigma)$ equals the SDPI constant for the recovery map pair $(\mathcal{R}_{\mathcal{E}, \sigma}, \mathcal{E}(\sigma))$, that is to say,
\begin{align*}
\eta_{\chi^2_{\kappa_{1/2}}}(\mathcal{E}, \sigma) = \eta_{\chi^2_{\kappa_{1/2}}}(\mathcal{R}_{\mathcal{E}, \sigma}, \mathcal{E}(\sigma)).
\end{align*}

\item Further assume that for any $\sigma\in \dsetplus_n$, we have $\mathcal{E}(\sigma)\in \dsetplus_n$. Then, for the contraction coefficient of the quantum channel $\mathcal{E}$, we have
\begin{align}
\label{eqn::cc_half_alpha}
\eta_{\chi^2_{\kappa_{1/2}}}(\mathcal{E}) \le \sqrt{\eta_{\chisq}(\mathcal{E})}.
\end{align}
\end{enumerate}
\end{coro}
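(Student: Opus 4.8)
The plan is to obtain both parts as consequences of \lemref{lem::chialphahalf_less_chihalfsquare}, with part (ii) being essentially immediate and part (i) requiring a second, symmetric application of the same lemma to the recovery pair. For part (ii), I would take the supremum over $\sigma \in \dsetplus_n$ on both sides of the second inequality $\eta_{\chi^2_{\kappa_{1/2}}}(\mathcal{E}, \sigma) \le \sqrt{\eta_{\chisq}(\mathcal{E}, \sigma)}$ in \lemref{lem::chialphahalf_less_chihalfsquare}. Since $\sqrt{\cdot}$ is monotone increasing, it commutes with the supremum, so $\sup_\sigma \sqrt{\eta_{\chisq}(\mathcal{E}, \sigma)} = \sqrt{\sup_\sigma \eta_{\chisq}(\mathcal{E}, \sigma)}$; recalling the definition \eqref{eqn::contraction_coef} of the contraction coefficient, this yields $\eta_{\chi^2_{\kappa_{1/2}}}(\mathcal{E}) \le \sqrt{\eta_{\chisq}(\mathcal{E})}$ at once.

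For part (i), abbreviate $a := \eta_{\chi^2_{\kappa_{1/2}}}(\mathcal{E}, \sigma)$ and $b := \eta_{\chi^2_{\kappa_{1/2}}}(\mathcal{R}_{\mathcal{E}, \sigma}, \mathcal{E}(\sigma))$. Specializing \lemref{lem::chialphahalf_less_chihalfsquare} to $\kappa = \kappa_{1/2}$, its first inequality reads $a \le \sqrt{a\,b}$, so squaring gives $a^2 \le a\,b$ and hence $a \le b$ (the degenerate case $a=0$ being trivial). To get the reverse inequality I would apply the very same lemma to the pair $(\mathcal{R}_{\mathcal{E}, \sigma}, \mathcal{E}(\sigma))$ in place of $(\mathcal{E}, \sigma)$; this is legitimate because $\mathcal{R}_{\mathcal{E}, \sigma}$ is a quantum channel sending $\mathcal{E}(\sigma) \in \dsetplus_n$ to $\sigma \in \dsetplus_n$.

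The crux is then to identify the Petz recovery map of this new pair. Applying \eqref{eqn::petz_recovery} with channel $\mathcal{R}_{\mathcal{E}, \sigma}$ and state $\mathcal{E}(\sigma)$, and using $\mathcal{R}_{\mathcal{E}, \sigma}(\mathcal{E}(\sigma)) = \sigma$ together with the adjoint identity $(\mathcal{R}_{\mathcal{E}, \sigma})^{\dagger} = \Gamma_{\mathcal{E}(\sigma)}^{-1} \circ \mathcal{E} \circ \Gamma_{\sigma} = \Kop$ (read off from \eqref{eqn::petz_recovery} and \eqref{eqn::op_K}, since $\Gamma_{\sigma}$ and $\Gamma_{\mathcal{E}(\sigma)}$ are self-adjoint), I would compute, for any $A \in \MM_n$,
\begin{align*}
\mathcal{R}_{\mathcal{R}_{\mathcal{E}, \sigma}, \mathcal{E}(\sigma)}(A)
&= \mathcal{E}(\sigma)^{1/2}\, \Kop\bigl(\sigma^{-1/2} A\, \sigma^{-1/2}\bigr)\, \mathcal{E}(\sigma)^{1/2} \\
&= \mathcal{E}(\sigma)^{1/2} \Gamma_{\mathcal{E}(\sigma)}^{-1}\bigl(\mathcal{E}\bigl(\Gamma_{\sigma}(\sigma^{-1/2} A\, \sigma^{-1/2})\bigr)\bigr) \mathcal{E}(\sigma)^{1/2}
= \mathcal{E}(A),
\end{align*}
where the last step uses $\Gamma_{\sigma}(\sigma^{-1/2} A\, \sigma^{-1/2}) = A$ and then cancels the outer and inner $\mathcal{E}(\sigma)^{\pm 1/2}$ factors against $\Gamma_{\mathcal{E}(\sigma)}^{-1}$. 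In other words, the Petz recovery map is an involution on these pairs: recovering twice returns the original channel $\mathcal{E}$. Granting this, \lemref{lem::chialphahalf_less_chihalfsquare} applied to $(\mathcal{R}_{\mathcal{E}, \sigma}, \mathcal{E}(\sigma))$ gives $b \le \sqrt{b\,a}$, hence $b \le a$; combined with $a \le b$ this forces $a = b$, which is the claimed identity.

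The main obstacle is precisely the involution computation above — verifying $(\mathcal{R}_{\mathcal{E}, \sigma})^{\dagger} = \Kop$ and that the two layers of $\Gamma$ and $\mathcal{E}(\sigma)^{\pm 1/2}$ collapse to the correct conjugation. Once that algebraic identity is secured, both inequalities flow from a single lemma applied to a pair and to its recovery pair, and the symmetric bound $a \le b \le a$ closes the argument.
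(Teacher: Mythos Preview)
Your proof is correct and follows essentially the same approach as the paper: part (ii) by taking the supremum over $\sigma$, and part (i) by setting $\kappa=\kappa_{1/2}$ in \lemref{lem::chialphahalf_less_chihalfsquare}, then applying the same lemma a second time to the pair $(\mathcal{R}_{\mathcal{E},\sigma},\mathcal{E}(\sigma))$ using the involution identity $\mathcal{R}_{\mathcal{R}_{\mathcal{E},\sigma},\,\mathcal{E}(\sigma)}=\mathcal{E}$. The paper simply asserts this involution as a ``fact,'' whereas you carry out the verification explicitly via $(\mathcal{R}_{\mathcal{E},\sigma})^{\dagger}=\Kop$; your algebra there is correct.
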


\begin{proof}
The first part comes from letting $\kappa = \kappa_{1/2}$ in \eqref{eqn::chialphahalf_less_chihalfsquare} and the fact that the Petz recovery map of $\mathcal{R}_{\mathcal{E}, \sigma}$ is exactly the channel $\mathcal{E}$; the second part comes from taking the supremum over all $\sigma\in \dsetplus_n$.
\end{proof}

\begin{proof}[Proof of \lemref{lem::chialphahalf_less_chihalfsquare}]
It is straightforward to verify that $\mathcal{R}_{\mathcal{E}, \sigma}$, defined in \eqref{eqn::petz_recovery}, is a bona-fide quantum channel, mapping the quantum state $\mathcal{E}(\sigma)$ back to $\sigma$.
We can easily verify by definition \eqref{eqn::op_Upsilon} and \eqref{eqn::petz_recovery} that 
\begin{align}
\label{eqn::Upsilon_op_recovery_map}
\Upsilon_{\mathcal{E},\sigma}^{\kappa_{1/2}} = \mathcal{R}_{\mathcal{E}, \sigma} \circ \mathcal{E}.
\end{align}

Recall from \lemref{lem::sdpi_variational} that there exists a $\lambda_2 \equiv \lambda_2(\Upsilon_{\mathcal{E},\sigma}^{\kappa_{1/2}}) = \eta_{\chi^2_{\kappa_{1/2}}}(\mathcal{E}, \sigma)$ and a traceless Hermitian matrix $V\in \HH^0_{n}$ such that $\Upsilon_{\mathcal{E},\sigma}^{\kappa_{1/2}}(V) = \lambda_2 V$. 
Let $\wt{V} := \mathcal{E}(V)\in \HH_n^0$. Then 
\begin{align*}
 \left(\eta_{\chi^2_{\kappa_{1/2}}}(\mathcal{E}, \sigma)\right)^2 &= \lambda_2^2  = \frac{\Inner{\Upsilon_{\mathcal{E},\sigma}^{\kappa_{1/2}}(V)}{\Upsilon_{\mathcal{E},\sigma}^{\kappa_{1/2}}(V)}_{\Omega_{\sigma}^{\kappa} } }{\Inner{V}{V}_{\Omega_{\sigma}^{\kappa} }} \\
&\myeq{\eqref{eqn::Upsilon_op_recovery_map}} \frac{\Inner{\mathcal{R}_{\mathcal{E}, \sigma} (\wt{V})}{\Omega_{\sigma}^{\kappa} \circ \mathcal{R}_{\mathcal{E}, \sigma} (\wt{V})}_{\hs} }{\Inner{V}{V}_{\Omega_{\sigma}^{\kappa} }} \\
&= \frac{\Inner{\mathcal{R}_{\mathcal{E}, \sigma} (\wt{V})}{\Omega_{\mathcal{R}_{\mathcal{E}, \sigma}(\mathcal{E}(\sigma))}^{\kappa} \circ \mathcal{R}_{\mathcal{E}, \sigma} (\wt{V})}_{\hs} }{\Inner{\wt{V}}{\Omega_{\mathcal{E}(\sigma)}^{\kappa} (\wt{V})}_{\hs} } \frac{\Inner{\wt{V}}{\Omega_{\mathcal{E}(\sigma)}^{\kappa} (\wt{V})}_{\hs} }{\Inner{V}{V}_{\Omega_{\sigma}^{\kappa} }} \\
& \le \eta_{\chisq}\left(\mathcal{R}_{\mathcal{E}, \sigma}, \mathcal{E}(\sigma)\right) \eta_{\chisq}(\mathcal{E}, \sigma). 
\end{align*}
The inequality in the last step follows from \lemref{lem::sdpi_variational}. Hence, we have proved the first inequality in \eqref{eqn::chialphahalf_less_chihalfsquare}; the second inequality follows immediately from the data processing inequality of the quantum $\chisq$ divergence. 
\end{proof}

Next, we consider any quantum-classical (QC) channel $\mathcal{E}$, which refers to a physical process in which one first performs a measurement according to a POVM $\{F_{j}\}_{j=1}^{n}$ ($F_j\in \MM_n$ are positive semidefinite and $\sum_{j=1}^{n} F_j = \unit_{n}$); then based on the measurement outcome, one prepares a pure state, selected from a set $\{\psi_j\}_{j=1}^{n}$ which also forms an orthonormal basis of $\hbt$. More specifically, 
\begin{align}
\label{eqn::qc_channel}
\mathcal{E}(A) = \sum_{j=1}^{n} \tr\left(F_j A\right) \Ket{\psi_j} \Bra{\psi_j}, \qquad \forall A\in \MM_n.  
\end{align}
Define a ratio $\sdpiratio_{\mathcal{E},\sigma}^{\kappa}$ on $\HH^0_n$ by 
\begin{align}
\label{eqn::sdpi_ratio_R}
\sdpiratio_{\mathcal{E},\sigma}^{\kappa} (A) :=  
\frac{\Inner{\mathcal{E}(A)}{\Omega_{\mathcal{E}(\sigma)}^{\kappa} \circ  \mathcal{E}(A)}_{\hs} 
}{\Inner{A}{\Omega_{\sigma}^{\kappa} (A)}_{\hs}} \equiv \frac{\Inner{A}{\Upsilon_{\mathcal{E},\sigma}^{\kappa} (A)}_{\Omega_{\sigma}^{\kappa}}}{\Inner{A}{A}_{\Omega_{\sigma}^{\kappa}}}, \qquad \text{for } A\in \HH^0_{n},\ A \neq 0.
\end{align}

\begin{lemma}
\label{lem::monotonicity}
Suppose $\kappa \ge \kappa_{1/2}$, $\mathcal{E}$ is a  QC channel \revise{with $F_j\neq 0$ for all $1\le j\le n$} and $\sigma\in \dsetplus_n$. Then 
\begin{align}
\label{eqn::monotonicity_sdpiratio}
\sdpiratio_{\mathcal{E}, \sigma}^{\kappa}(A) \le \sdpiratio_{\mathcal{E}, \sigma}^{\kappa_{1/2}}(A),\qquad \forall A \in \HH_n^0,\ A\neq 0. 
\end{align}
Consequently, we have 
\begin{align}
\label{eqn::monotonicity_sdpi_constant}
\eta_{\chisq}(\mathcal{E}, \sigma) \le \eta_{\chi^2_{\kappa_{1/2}}}(\mathcal{E}, \sigma).
\end{align}
\end{lemma}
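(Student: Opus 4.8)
The plan is to work directly from the explicit spectral expansion of $\Omega_{(\cdot)}^{\kappa}$ in \eqref{eqn::AA_omega_kappa}, exploiting the fact that a QC channel produces outputs that commute with $\mathcal{E}(\sigma)$. Write the POVM as $\{F_j\}_{j=1}^{n}$ and the target pure states as the orthonormal basis $\{\Ket{\psi_j}\}_{j=1}^{n}$ of $\hbt$, and set $q_j := \tr(F_j\sigma)$. Since $F_j\ge 0$, $F_j\neq 0$ and $\sigma\in\dsetplus_n$, each $q_j>0$; hence $\mathcal{E}(\sigma)=\sum_j q_j\Ket{\psi_j}\Bra{\psi_j}\in\dsetplus_n$ (so the ratio is well-posed) and $\{\Ket{\psi_j}\}$ is an eigenbasis of $\mathcal{E}(\sigma)$. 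This is the only place the assumption $F_j\neq 0$ enters.

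First I would compute the numerator of $\sdpiratio_{\mathcal{E},\sigma}^{\kappa}(A)$. For any $A\in\HH_n^0$, $\mathcal{E}(A)=\sum_j \tr(F_jA)\Ket{\psi_j}\Bra{\psi_j}$ is diagonal in the eigenbasis of $\mathcal{E}(\sigma)$, so substituting it into \eqref{eqn::AA_omega_kappa} (with $\sigma$ there replaced by $\mathcal{E}(\sigma)$) kills every off-diagonal term and leaves only $j=m$ contributions, each carrying the factor $\kappa(q_j/q_j)=\kappa(1)=1$. Thus $\Inner{\mathcal{E}(A)}{\Omega_{\mathcal{E}(\sigma)}^{\kappa}\circ\mathcal{E}(A)}_{\hs}=\sum_j \Abs{\tr(F_jA)}^2/q_j$, which does not depend on $\kappa$; in particular it equals the corresponding quantity for $\kappa_{1/2}$.

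Next I would bound the denominator below. Writing $\sigma=\sum_j s_j\Ket{s_j}\Bra{s_j}$, \eqref{eqn::AA_omega_kappa} expresses $\Inner{A}{\Omega_{\sigma}^{\kappa}(A)}_{\hs}$ as a sum of nonnegative terms $\kappa(s_j/s_m)\,s_m^{-1}\Abs{\Bra{s_j}A\Ket{s_m}}^2$. The hypothesis $\kappa\ge\kappa_{1/2}$ holds pointwise on $(0,\infty)$, so replacing $\kappa$ by $\kappa_{1/2}$ does not increase any term, giving $\Inner{A}{\Omega_{\sigma}^{\kappa}(A)}_{\hs}\ge\Inner{A}{\Omega_{\sigma}^{\kappa_{1/2}}(A)}_{\hs}$, and the right-hand side is strictly positive for $A\neq 0$ since $\sigma$ is full rank (\lemref{lem::prop_Omega_op} (\ref{lem::prop_Omega_op_positivity})). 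Dividing the ($\kappa$-independent) numerator by this larger positive denominator yields \eqref{eqn::monotonicity_sdpiratio}. Taking the supremum of both sides over $A\in\HH_n^0\setminus\{0\}$ and invoking the variational formula \eqref{eqn::sdpi_const_rewrite_3} for both $\kappa$ and $\kappa_{1/2}$ then gives \eqref{eqn::monotonicity_sdpi_constant}.

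There is no serious technical obstacle; the point deserving attention is the observation that the numerator is $\kappa$-independent, which rests entirely on $\mathcal{E}(A)$ commuting with $\mathcal{E}(\sigma)$ — a feature peculiar to QC channels. For a general channel the numerator genuinely depends on $\kappa$, the two effects ($\kappa$-dependence of numerator and of denominator) need not align, and the ordering \eqref{eqn::monotonicity_sdpi_constant} can fail; this is why the QC hypothesis cannot simply be dropped.
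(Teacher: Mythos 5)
Your proposal is correct and follows essentially the same route as the paper: compute the numerator via \eqref{eqn::qc_value} and observe it is $\kappa$-independent because the QC output is diagonal in the eigenbasis of $\mathcal{E}(\sigma)$, then lower-bound the denominator termwise in the spectral expansion \eqref{eqn::AA_omega_kappa} using $\kappa\ge\kappa_{1/2}$, and take the supremum via \eqref{eqn::sdpi_const_rewrite_3}. Your write-up merely spells out the details (the role of $F_j\neq 0$ and the commutation argument) that the paper leaves implicit.
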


\begin{proof}
By \eqref{eqn::qc_channel} and \eqref{eqn::AA_omega_kappa}, we can readily calculate that 
\begin{align}
\label{eqn::qc_value}
\Inner{\mathcal{E}(A)}{\Omega_{\mathcal{E}(\sigma)}^{\kappa} (\mathcal{E}(A))}_{\hs} = \sum_{j=1}^{n} \frac{\Abs{\tr(F_j A)}^2}{\tr(F_j \sigma)}, 
\end{align}
which is independent of $\kappa$. 
By \eqref{eqn::AA_omega_kappa}, it is straightforward to observe that when $\kappa \ge \kappa_{1/2}$, one has $\Inner{A}{\Omega_{\sigma}^{\kappa}(A)}_{\hs} \ge \Inner{A}{\Omega_{\sigma}^{\kappa_{1/2}}(A)}_{\hs}$. Thus \eqref{eqn::monotonicity_sdpiratio} follows immediately; \eqref{eqn::monotonicity_sdpi_constant} follows from \eqref{eqn::monotonicity_sdpiratio} by taking the supremum over all non-zero $A\in \HH^0_n$ (see \eqref{eqn::sdpi_const_rewrite_3}). 
\end{proof}

\section{Proof of \thmref{thm::chisq_tensorize}}
\label{sec::chisq_tensorize}

{\bf \noindent Setting up:} 
First notice that it is sufficient to prove \thmref{thm::chisq_tensorize} for $N = 2$. The general case can be straightforwardly  proved by mathematical induction on $N$. Next, for the case $N=2$, one direction is trivial: suppose $\rho_1$ achieves the maximum in $\eta_{\chisq}(\mathcal{E}_1, \sigma_1)$; let $\rho_{1,2} = \rho_1 \otimes \sigma_2$ and by direct calculation,
\begin{align*}
\eta_{\chisq}(\mathcal{E}_1\otimes \mathcal{E}_2, \sigma_1 \otimes \sigma_2) &\ge \frac{\chisqarg{\mathcal{E}_1\otimes \mathcal{E}_2(\rho_{1,2})}{\mathcal{E}_1\otimes \mathcal{E}_2 (\sigma_1\otimes \sigma_2)}}{\chisqarg{\rho_{1,2}}{\sigma_1\otimes \sigma_2}} \\
&\myeq{\eqref{eqn::Omega_op_composite}} \frac{\chisqarg{\mathcal{E}_1(\rho_1)}{\mathcal{E}_1(\sigma_1)}}{\chisqarg{\rho_1}{\sigma_1}} = \eta_{\chisq}\left(\mathcal{E}_1, \sigma_1 \right).
\end{align*}
Similarly, by choosing $\rho_{1,2} = \sigma_1\otimes \rho_2$ where $\rho_2$ achieves the maximum in $\eta_{\chisq}(\mathcal{E}_2, \sigma_2)$, we have $\eta_{\chisq}(\mathcal{E}_1\otimes \mathcal{E}_2, \sigma_1 \otimes \sigma_2) \ge \eta_{\chisq}\left(\mathcal{E}_2, \sigma_2 \right)$. Therefore,
\begin{align*}
\eta_{\chisq}(\mathcal{E}_1\otimes \mathcal{E}_2, \sigma_1 \otimes \sigma_2) \ge \max\left(\eta_{\chisq}\left(\mathcal{E}_1, \sigma_1 \right), \eta_{\chisq}\left(\mathcal{E}_2, \sigma_2 \right)\right). 
\end{align*}

In the below, we shall prove the other direction, \ie, 
\begin{align}
\label{eqn::tensorize_less}
\eta_{\chisq}(\mathcal{E}_1\otimes \mathcal{E}_2, \sigma_1 \otimes \sigma_2) \le \max\left(\eta_{\chisq}\left(\mathcal{E}_1, \sigma_1 \right), \eta_{\chisq}\left(\mathcal{E}_2, \sigma_2 \right)\right) =: \eta_{\max}^{\kappa}. 
\end{align}

\medskip

{\bf \noindent Notations:} 

Since we fix states $\sigma_m$ and channels $\mathcal{E}_m$ for $m = 1, 2$ throughout this section, 
let us denote $\Upsilon^{\kappa}_{m} \equiv \Upsilon_{\mathcal{E}_m, \sigma_m}^{\kappa}$ for simplicity of notation.
By \lemref{lem::sdpi_variational}, 
$\Upsilon_{m}^{\kappa}$ has an eigen-basis $\{V^{\kappa, m}_j \}_{j=1}^{n_m^2}$ associated with eigenvalue $\{\theta^{\kappa, m}_{j}\}_{j=1}^{n_m^2}$ with respect to the inner product $\Inner{\cdot}{\cdot}_{\Omega_{\sigma_m}^{\kappa}}$ such that 
\begin{align*}
\Upsilon^{\kappa}_{m}(V_j^{\kappa, m}) = \theta_{j}^{\kappa, m} V_j^{\kappa, m}, \qquad 1\le j \le n_m^2,
\end{align*}
where $V^{\kappa, m}_{1} = \sigma_m$, $\theta_{1}^{\kappa, m} = 1$ and
$V_j^{\kappa, m}$ are Hermitian for all $1\le j \le n_m^2$, since from
\lemref{lem::Upsilon_prop} $\Upsilon^{\kappa}_{m}$ are both
Hermitian-preserving positive semidefinite operators.  In addition, we
know from \lemref{lem::sdpi_variational} (or say
\lemref{lem::Upsilon_prop} (\ref{lem::Upsilon_prop_bound_by_one}))
that for both $m = 1, 2$,
\begin{align*}
\theta_j^{\kappa, m} \le \eta_{\chisq}(\mathcal{E}_m, \sigma_m) \le \eta_{\max}^{\kappa}, \qquad \text{for any } 2\le j \le n_m^2. 
\end{align*}

For convenience, let $\sigma = \sigma_1 \otimes \sigma_2$ and $\mathcal{E} = \mathcal{E}_1 \otimes \mathcal{E}_2$; let $\Upsilon^{\kappa} \equiv \Upsilon_{\mathcal{E},\sigma}^{\kappa}$. 
For any index pair $\vect{J} = (j_1, j_2)$, define
\begin{align*}
V_{\vect{J}}^{\kappa} := V_{j_1}^{\kappa, 1}\otimes V_{j_2}^{\kappa, 2},\qquad 
\theta_{\vect{J}}^{\kappa}  := \theta_{j_1}^{\kappa, 1} \theta_{j_2}^{\kappa, 2}.
\end{align*}

\medskip

{\bf \noindent Case (\rom{1}): For $\kappa = \kappa_{1/2}$ and any quantum channel.} From \lemref{lem::prop_Omega_op_composite} part (\ref{lem::prop_Omega_op_composite_tensorize}), $\Omega_{\sigma}^{\kappa}$ tensorizes, thus $\Upsilon^{\kappa} = \Upsilon^{\kappa}_{1}\otimes \Upsilon_{2}^{\kappa}$. Next, we can straightforwardly verify that $\left\{ V_{\vect{J}}^{\kappa}\right\}_{\vect{J}}$ (for $\vect{J} = (j_1, j_2)$) is an orthonormal eigenbasis of $\Upsilon^{\kappa}$ with respect to the inner product $\Inner{\cdot}{\cdot}_{\Omega_{\sigma}^{\kappa}}$, and the associated eigenvalues are $\left\{\theta_{\vect{J}}^{\kappa}\right\}_{\vect{J}}$. 
The largest eigenvalue of $\Upsilon^{\kappa}$ on the domain $\text{span}(\sigma)^{\perp} \equiv \MM_{n_1\times n_2}^0$ becomes $\max_{\vect{J}\neq (1, 1)}\{\theta_{\vect{J}}^{\kappa}\} = \eta_{\max}^{\kappa}$. Therefore, by \lemref{lem::sdpi_variational}, we have
$
\eta_{\chisq}(\mathcal{E}, \sigma) = \max_{\vect{J}\neq (1,1)}\{\theta_{\vect{J}}^{\kappa}\} = \max\left(\eta_{\chisq}(\mathcal{E}_1, \sigma_1), \eta_{\chisq}(\mathcal{E}_2, \sigma_2) \right).
$
Thus we complete the proof of \eqref{eqn::tensorize_less} for the case $\kappa_{1/2}$. 

\medskip

{\bf \noindent Case (\rom{2}): For $\kappa \ge \kappa_{1/2}$ and QC channels.}
Let us decompose $\vect{A}\in \HH^0_{n_1\times n_2}$ by $\vect{A} = \sum_{\vect{J}} c_{\vect{J}} V_{\vect{J}}^{\kappa}$ where $c_{\vect{J}}\in \Real$. From the constraint that $\tr(\vect{A}) = 0$, we know $c_{(1,1)} = 0$. Thus, we can rewrite $\vect{A}$ by 
\begin{align}
\label{eqn::H_decomp}
\vect{A} = \sigma_1 \otimes A_2 + A_1 \otimes \sigma_2 + \wt{A},
\end{align}
where 
\begin{equation}
\left\{
\begin{split}
A_2 &= \sum_{\vect{J}:\ j_1=1,\ 2\le j_2 \le n_2^2} c_{\vect{J}} V^{\kappa, 2}_{j_2};\\
A_1 &= \sum_{\vect{J}:\ 2 \le j_1 \le n_1^2,\ j_2 = 1} c_{\vect{J}} V^{\kappa, 1}_{j_1};\\
\wt{A} &= \sum_{\vect{J}:\ j_1 \neq 1,\ j_2 \neq 1} c_{\vect{J}} V_{\vect{J}}.
\end{split}\right.
\end{equation}
To prove \eqref{eqn::tensorize_less}, by \eqref{eqn::sdpi_const_rewrite_3}, it is equivalent to prove that for all $\vect{A}\in \HH_{n_1\times n_2}^{0}$ and $\vect{A} \neq 0$, we have 
\begin{align}
\label{eqn::generic_alpha_suff_cond}
\frac{\Inner{\vect{A}}{ \Upsilon^{\kappa}(\vect{A})}_{\Omega_{\sigma}^{\kappa}}}{\Inner{\vect{A}}{\vect{A}}_{\Omega_{\sigma}^{\kappa}}}  \le \eta_{\max}^{\kappa}.
\end{align}

The next lemma shows that it is sufficient to consider $\vect{A}$ as $\wt{A}$. 
\begin{lemma}
\label{lem::tilde_H}
If \eqref{eqn::generic_alpha_suff_cond} holds for any $\vect{A} \in \HH_{n_1}^0 \otimes \HH_{n_2}^0$, then \eqref{eqn::generic_alpha_suff_cond} holds for any $\vect{A}\in \HH_{n_1\times n_2}^{0}$.
\end{lemma}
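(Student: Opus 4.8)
The plan is to use the decomposition \eqref{eqn::H_decomp} together with the fact that the three pieces $\sigma_1\otimes A_2$, $A_1\otimes\sigma_2$, $\wt A$ are mutually orthogonal in the inner product $\Inner{\cdot}{\cdot}_{\Omega_\sigma^\kappa}$ (this orthogonality is inherited from the orthogonality of the eigenbasis $\{V^{\kappa,m}_j\}$ on each factor, keeping in mind that $\Omega_\sigma^\kappa$ need \emph{not} tensorize when $\kappa\neq\kappa_{1/2}$, so some care is needed — see below). Writing $\Inner{\vect A}{\Upsilon^\kappa(\vect A)}_{\Omega_\sigma^\kappa}$ as a sum of the contributions from the three pieces plus cross terms, the goal is to show each diagonal block contributes at most $\eta_{\max}^\kappa\,\|\cdot\|_{\Omega_\sigma^\kappa}^2$ and that the cross terms vanish or are controlled.

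First I would handle the ``pure'' pieces $\sigma_1\otimes A_2$ and $A_1\otimes\sigma_2$. Using \lemref{lem::prop_Omega_op_composite}~(i), the map $\Upsilon^\kappa$ acts on $\sigma_1\otimes A_2$ by $\mathcal E^\dagger\circ\Omega^\kappa_{\mathcal E(\sigma)}\circ\mathcal E$ precomposed with $(\Omega_\sigma^\kappa)^{-1}$, and since $\mathcal E=\mathcal E_1\otimes\mathcal E_2$ sends $\sigma_1\otimes A_2$ to $\mathcal E_1(\sigma_1)\otimes\mathcal E_2(A_2) = \sigma_1'\otimes\mathcal E_2(A_2)$, formula \eqref{eqn::Omega_op_composite} collapses the first tensor slot, so that on this subspace $\Upsilon^\kappa$ restricts to $\id_{n_1}\otimes\Upsilon_2^\kappa$ (and similarly $\Upsilon_1^\kappa\otimes\id_{n_2}$ on the other). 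Hence on $\sigma_1\otimes A_2$ the Rayleigh quotient is bounded by $\lambda_2(\Upsilon_2^\kappa)=\eta_{\chi^2_\kappa}(\mathcal E_2,\sigma_2)\le\eta_{\max}^\kappa$, and symmetrically for $A_1\otimes\sigma_2$. The remaining term $\wt A$ lies in $\HH_{n_1}^0\otimes\HH_{n_2}^0$ (it is a real linear combination of $V^{\kappa,1}_{j_1}\otimes V^{\kappa,2}_{j_2}$ with $j_1,j_2\ge 2$, and each $V^{\kappa,m}_j$ with $j\ge 2$ is traceless Hermitian), so the hypothesis of the lemma — that \eqref{eqn::generic_alpha_suff_cond} holds for $\vect A\in\HH_{n_1}^0\otimes\HH_{n_2}^0$ — applies directly to $\wt A$.

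Then I would assemble the pieces. Because the three summands are $\Omega_\sigma^\kappa$-orthogonal, $\|\vect A\|_{\Omega_\sigma^\kappa}^2$ splits as the sum of the three squared norms; I claim the same near-orthogonality holds after applying $\Upsilon^\kappa$, i.e.\ $\Inner{\sigma_1\otimes A_2}{\Upsilon^\kappa(A_1\otimes\sigma_2)}_{\Omega_\sigma^\kappa}=0$, $\Inner{\sigma_1\otimes A_2}{\Upsilon^\kappa(\wt A)}_{\Omega_\sigma^\kappa}=0$, and likewise for the $A_1\otimes\sigma_2$--$\wt A$ cross term. These vanish because $\Upsilon^\kappa$ preserves each of the relevant invariant subspaces: $\Upsilon^\kappa(\sigma_1\otimes A_2)=\sigma_1\otimes\Upsilon_2^\kappa(A_2)$ stays in $\operatorname{span}(\sigma_1)\otimes\HH_{n_2}^0$, $\Upsilon^\kappa(A_1\otimes\sigma_2)$ stays in $\HH_{n_1}^0\otimes\operatorname{span}(\sigma_2)$, and $\Upsilon^\kappa(\wt A)$ stays in (the $\Omega_\sigma^\kappa$-closure of) $\HH_{n_1}^0\otimes\HH_{n_2}^0$ — one checks these are mutually $\Omega_\sigma^\kappa$-orthogonal using \lemref{lem::prop_Omega_op_composite}~(i) and \lemref{lem::prop_Omega_op}~(iii). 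Given this, the numerator $\Inner{\vect A}{\Upsilon^\kappa(\vect A)}_{\Omega_\sigma^\kappa}$ is a sum of three nonnegative block terms, each $\le\eta_{\max}^\kappa$ times the corresponding squared norm, so the total Rayleigh quotient is $\le\eta_{\max}^\kappa$, which is exactly \eqref{eqn::generic_alpha_suff_cond} for general $\vect A$.

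The main obstacle I anticipate is verifying the cross-term vanishing rigorously when $\kappa\neq\kappa_{1/2}$: since $\Omega_\sigma^\kappa$ does not factorize, one cannot simply invoke a tensor-product orthonormal basis, and one must argue via the stable subspaces of $\Upsilon^\kappa$ and the identities $\Omega_{\sigma_1\otimes\sigma_2}^\kappa(\sigma_1\otimes B)=\unit_{n_1}\otimes\Omega_{\sigma_2}^\kappa(B)$ and its transpose. A secondary subtlety is that $\wt A$ as defined is a combination of $V_{\vect J}^\kappa$ which, while each lying in $\HH_{n_1}^0\otimes\HH_{n_2}^0$ up to the choice of real eigenvectors, need not itself be invariant under $\Upsilon^\kappa$ in a way that makes its image orthogonal to the other blocks; one should check that $\Upsilon^\kappa$ maps the algebraic span $\HH_{n_1}^0\otimes\HH_{n_2}^0$ into itself, which again follows from \lemref{lem::prop_Omega_op_composite}~(i) applied slot-by-slot. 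Once these orthogonality facts are pinned down, the rest is the bookkeeping outlined above.
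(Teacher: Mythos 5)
Your proposal is correct and takes essentially the same route as the paper's proof: the same three-block decomposition \eqref{eqn::H_decomp}, the same use of \lemref{lem::prop_Omega_op_composite}~(i) to reduce the two ``pure'' blocks to single-site Rayleigh quotients bounded by $\eta_{\max}^{\kappa}$, the same vanishing of cross terms, and the same appeal to the hypothesis for $\wt{A}$. The only cosmetic difference is that you justify the cross-term vanishing via invariant subspaces while the paper computes $\Inner{\vect{B}}{\Upsilon^{\kappa}(\sigma_1\otimes A_2)}_{\Omega_{\sigma}^{\kappa}}$ directly; for the terms involving $\Upsilon^{\kappa}(\wt{A})$ the cleanest justification is self-adjointness of $\Upsilon^{\kappa}$ with respect to $\Inner{\cdot}{\cdot}_{\Omega_{\sigma}^{\kappa}}$ (rather than a ``slot-by-slot'' application of \lemref{lem::prop_Omega_op_composite}~(i), which only covers inputs of the form $\sigma_1\otimes B$ or $A\otimes\sigma_2$), but this is exactly the subtlety you flag and it does not change the argument.
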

Notice that $\HH_{n_1}^0 \otimes \HH_{n_2}^0 \subset \HH_{n_1\times n_2}^0$. 
The proof of this lemma is postponed to the end of this section and let us continue to complete the proof of \thmref{thm::chisq_tensorize}.
 It is straightforward to verify that when $\mathcal{E}_1$ and $\mathcal{E}_2$ are QC channels, $\mathcal{E} = \mathcal{E}_1 \otimes \mathcal{E}_2$ is also a QC channel for the composite system. 
By \lemref{lem::monotonicity}, for any $\vect{A}\in \HH^0_{n_1} \otimes \HH^0_{n_2}$, we have 
\begin{align*}
\frac{\Inner{\vect{A}}{\Upsilon^{\kappa} (\vect{A})}_{\Omega_{\sigma}^{\kappa}}}{\Inner{\vect{A}}{\vect{A}}_{\Omega_{\sigma}^{\kappa}}}
&\myle{\eqref{eqn::monotonicity_sdpiratio}} \frac{\Inner{\vect{A}}{\Upsilon^{\kappa_{1/2}} (\vect{A})}_{\Omega_{\sigma}^{{\kappa_{1/2}}}}}{\Inner{\vect{A}}{\vect{A}}_{\Omega_{\sigma}^{{\kappa_{1/2}}}}}\\
&\le \eta_{\chi^2_{{\kappa_{1/2}}}}(\mathcal{E}_1, \sigma_1) \eta_{\chi^2_{{\kappa_{1/2}}}}(\mathcal{E}_2, \sigma_2)\\
&\le \max\left( \bigl(\eta_{\chi^2_{{\kappa_{1/2}}}}(\mathcal{E}_1, \sigma_1)\bigr)^2, \bigl(\eta_{\chi^2_{{\kappa_{1/2}}}}(\mathcal{E}_2, \sigma_2)\bigr)^2\right) \\
& \myle{\eqref{eqn::chialphahalf_less_chihalfsquare}} \max\left(\eta_{\chisq}(\mathcal{E}_1, \sigma_1), \eta_{\chisq}(\mathcal{E}_2, \sigma_2) \right) = \eta_{\max}^{\kappa}.
\end{align*}
The second inequality comes from the observation that $\Upsilon^{\kappa_{1/2}}$ is a positive semidefinite operator on the space $\HH_{n_1}^0 \otimes \HH_{n_2}^0$ with eigenvalues $\theta^{\kappa_{1/2}}_{\vect{J}}$; for $j_1\neq 1$ and $j_2\neq 1$, recall from previous results that $\theta^{{\kappa_{1/2}}}_{\vect{J}} = \theta^{{\kappa_{1/2}}, 1}_{j_1} \theta^{{\kappa_{1/2}},2}_{j_2} \le \eta_{\chi^2_{\kappa_{1/2}}}(\mathcal{E}_1, \sigma_1) \eta_{\chi^2_{{\kappa_{1/2}}}}(\mathcal{E}_2, \sigma_2)$. 
The last equation means \eqref{eqn::generic_alpha_suff_cond} holds for all Hermitian $\vect{A}\in \HH_{n_1}^0 \otimes \HH_{n_2}^0$ and by \lemref{lem::tilde_H}, \eqref{eqn::generic_alpha_suff_cond} holds for all Hermitian $\vect{A}\in \HH_{n_1\times n_2}^0$. This completes the proof of \thmref{thm::chisq_tensorize}.

\begin{proof}[Proof of \lemref{lem::tilde_H}.]

For any Hermitian $\vect{A}$ in \eqref{eqn::H_decomp}, 
we claim that 
\begin{align}
\label{eqn::H_Upsilon_H_expansion}
\begin{split}
\Inner{\vect{A}}{ \Upsilon^{\kappa}(\vect{A})}_{\Omega_{\sigma}^{\kappa}} = \Inner{\sigma_1 \otimes A_2}{ \Upsilon^{\kappa}(\sigma_1 \otimes A_2)}_{\Omega_{\sigma}^{\kappa}} &+ \Inner{A_1\otimes \sigma_2}{ \Upsilon^{\kappa}(A_1\otimes \sigma_2)}_{\Omega_{\sigma}^{\kappa}} \\
&+ \Inner{\wt{A}}{ \Upsilon^{\kappa}(\wt{A})}_{\Omega_{\sigma}^{\kappa}}.
\end{split}
\end{align}

To prove this, we need to show that all cross product terms in the expansion of $\Inner{\vect{A}}{\Upsilon^{\kappa}(\vect{A})}_{\Omega_{\sigma}^{\kappa}}$ vanish. For instance, consider any $\vect{B}\in \MM_{n_1}\otimes \MM_{n_2}$, 
\begin{align*}
& \Inner{\vect{B}}{\Upsilon^{\kappa}(\sigma_1\otimes A_2)}_{\Omega_{\sigma}^{\kappa}} \\
\myeq{\eqref{eqn::op_Upsilon}}& \Inner{\vect{B}}{\mathcal{E}^{\dagger} \circ \Omega_{\mathcal{E}(\sigma)}^{\kappa}\circ \mathcal{E} (\sigma_1 \otimes A_2)}_{\hs}\\
=& \Inner{\vect{B}}{\mathcal{E}^{\dagger} \circ \Omega_{\mathcal{E}(\sigma)}^{\kappa}\left(\mathcal{E}_1 (\sigma_1) \otimes \mathcal{E}_2(A_2)\right)}_{\hs} \\
\myeq{\eqref{eqn::Omega_op_composite}}& \Inner{\vect{B}}{\mathcal{E}^{\dagger} \left(\unit_{n_1} \otimes \Omega_{\mathcal{E}_2(\sigma_2)}^{\kappa}\circ \mathcal{E}_2(A_2)\right)}_{\hs} \\
=& \Inner{\vect{B}}{\unit_{n_1} \otimes \left(\mathcal{E}_2^{\dagger}\circ \Omega_{\mathcal{E}_2(\sigma_2)}^{\kappa}\circ \mathcal{E}_2(A_2)\right)}_{\hs}.
\end{align*}
If $\vect{B} = A_1\otimes \sigma_2$ or $\vect{B} = \wt{A}$, by plugging the expression of $A_1$ or $\wt{A}$ into the last equation and after expanding all terms, it is straightforward to verify that $\Inner{\vect{B}}{\Upsilon^{\kappa}(\sigma_1\otimes A_2)}_{\Omega_{\sigma}^{\kappa}} = 0$ for both choices of $\vect{B}$. We can apply similar arguments to $\Inner{\vect{B}}{\Upsilon^{\kappa}(A_1 \otimes \sigma_2)}_{\Omega_{\sigma}^{\kappa}}$ for $\vect{B} = \sigma_1\otimes A_2$ or $\vect{B} = \wt{A}$. 
Similarly, we have (or let $\mathcal{E} = \id_{n_1}\otimes \id_{n_2}$ in \eqref{eqn::H_Upsilon_H_expansion})
\begin{equation}
\label{eqn::hh}
\begin{split}
\Inner{\vect{A}}{\vect{A}}_{\Omega_{\sigma}^{\kappa}} 
&= \Inner{\sigma_1\otimes A_2}{\sigma_1\otimes A_2}_{\Omega_{\sigma}^{\kappa}} + \Inner{A_1\otimes \sigma_2}{A_1\otimes \sigma_2}_{\Omega_{\sigma}^{\kappa}} + \Inner{\wt{A}}{\wt{A}}_{\Omega_{\sigma}^{\kappa}}\\
&\myeq{\eqref{eqn::Omega_op_composite}} \Inner{A_2}{A_2}_{\Omega_{\sigma_2}^{\kappa}} + \Inner{A_1}{A_1}_{\Omega_{\sigma_1}^{\kappa}} + \Inner{\wt{A}}{\wt{A}}_{\Omega_{\sigma}^{\kappa}}.
\end{split}
\end{equation}

Let us simplify the term on the right hand side of \eqref{eqn::H_Upsilon_H_expansion}. For instance, 
\begin{align*}
& \Inner{\sigma_1 \otimes A_2}{ \Upsilon^{\kappa}(\sigma_1 \otimes A_2)}_{\Omega_{\sigma}^{\kappa}} \\
=& \Inner{\mathcal{E}(\sigma_1 \otimes A_2)}{\Omega_{\mathcal{E}(\sigma)}^{\kappa} \circ \mathcal{E}(\sigma_1 \otimes A_2)}_{\hs}  \\
\myeq{\eqref{eqn::Omega_op_composite}}&  \Inner{\mathcal{E}_1(\sigma_1) \otimes \mathcal{E}_2(A_2)}{\unit_{n_1}\otimes \Omega_{\mathcal{E}_2(\sigma_2)}^{\kappa} \circ \mathcal{E}_2 (A_2)}_{\hs} \\
=& \Inner{\mathcal{E}_2(A_2)}{\Omega_{\mathcal{E}_2(\sigma_2)}^{\kappa} \circ \mathcal{E}_2 (A_2)}_{\hs} \\
=& \Inner{A_2}{\Upsilon^{\kappa}_{2} (A_2)}_{\Omega_{\sigma_2}^{\kappa}}\\
\le&\ \eta_{\chisq}(\mathcal{E}_2, \sigma_2) \Inner{A_2}{A_2}_{\Omega_{\sigma_2}^{\kappa}} \\
\le&\ \eta_{\max}^{\kappa} \Inner{A_2}{A_2}_{\Omega_{\sigma_2}^{\kappa}}.
\end{align*}
Similarly, 
\begin{align*}
\Inner{A_1\otimes \sigma_2}{ \Upsilon^{\kappa}(A_1\otimes \sigma_2)}_{\Omega_{\sigma}^{\kappa}} \le \eta_{\max}^{\kappa} \Inner{A_1}{A_1}_{\Omega_{\sigma_1}^{\kappa}}. 
\end{align*}
Therefore, we have
\begin{align}
\label{eqn::hh_upsilon}
\begin{aligned}
\Inner{\vect{A}}{ \Upsilon^{\kappa}(\vect{A})}_{\Omega_{\sigma}^{\kappa}} \le &  \eta_{\max}^{\kappa} \left(\Inner{A_2}{A_2}_{\Omega_{\sigma_2}^{\kappa}} + \Inner{A_1}{A_1}_{\Omega_{\sigma_1}^{\kappa}} \right) 
+ \Inner{\wt{A}}{ \Upsilon^{\kappa}(\wt{A})}_{\Omega_{\sigma}^{\kappa}}.
\end{aligned}
\end{align}

By comparing \eqref{eqn::hh} and \eqref{eqn::hh_upsilon}, to prove \eqref{eqn::generic_alpha_suff_cond}, it is sufficient to show 
\begin{align}
\label{eqn::generic_alpha_suff_cond_v2}
\Inner{\wt{A}}{\Upsilon^{\kappa} (\wt{A})}_{\Omega_{\sigma}^{\kappa}} 
\le \eta_{\max}^{\kappa} \Inner{\wt{A}}{\wt{A}}_{\Omega_{\sigma}^{\kappa}}.
\end{align}
Thus we complete the proof of \lemref{lem::tilde_H}.

\end{proof}

\section{Connection to the quantum maximal correlation}
\label{sec::max_corr}

The SDPI constant for the classical $\chi^2$ divergence is closely connected to the classical maximal correlation (see \eg, \cite[Theorem \rom{3}.2]{Raginsky_strong_2016}). In the proposition below, we provide a quantum analog of this relation when $\kappa = \kappa_{1/2}$. 

To begin with, we need to define the quantum maximal correlation. This concept was previously proposed and studied in \cite{beigi_new_2013}. Since there is a whole family of quantum $\chisq$ divergences, it is natural to imagine that there could also exist a whole family of quantum maximal correlations, as a straightforward generalization of \cite{beigi_new_2013}.

\begin{defn}[$\kappa$-quantum maximal correlation]
\label{defn::max_corr}
Consider any fixed $\kappa \in \mathcal{K}$ and Hilbert spaces $\hbt_1$ and $\hbt_2$ with dimensions $n_1$ and $n_2$ respectively. 
For any bipartite quantum state $\rho_{1,2}$ on the composite system $\hbt_1\otimes \hbt_2$, 
denote the reduced density matrices by $\rho_1$ and $\rho_2$ respectively (\ie, $\tr_{2}(\rho_{1,2}) = \rho_1$, $\tr_{1}(\rho_{1,2}) = \rho_2$). 
Define the $\kappa$-quantum maximal correlation $\mu_{\kappa}(\rho_{1,2})$  by
\begin{align}
\mu_{\kappa}(\rho_{1,2}) := \max \Abs{\tr\left(\rho_{1,2} F \otimes G^{\dagger}\right)},
\end{align}
where the maximum is taken over all $F\in \MM_{n_1}$, $G\in \MM_{n_2}$ such that 
\begin{align}
\label{eqn::max_corr_condition}
\Inner{\unit_{n_1}}{F}_{\mho_{\rho_1}^{\kappa}} = \Inner{\unit_{n_2}}{G}_{\mho_{\rho_2}^{\kappa}} = 0,
\qquad
\Norm{F}_{\mho_{\rho_1}^{\kappa}} = \Norm{G}_{\mho_{\rho_2}^{\kappa}}  = 1. 
\end{align}
\end{defn}

\revise{Technically, when $\rho_1$ is not a full-rank density matrix, the notation $\Inner{\cdot}{\cdot}_{\mho_{\rho_1}^{\kappa}}$ should be understood as a sesquilinear form, as we explained at the beginning of \secref{sec::preliminary} and the operator $\mho_{\rho_1}^{\kappa}$ is still well-defined on the support of $\rho_1$ via \eqref{eqn::mho_decomp}.
}
By \lemref{lem::mho_op}, we easily verify that $\Inner{\unit_{n_1}}{F}_{\mho_{\rho_1}^{\kappa}} \equiv \tr(\rho_1 F)$ and $\Inner{\unit_{n_2}}{G}_{\mho_{\rho_2}^{\kappa}} \equiv \tr(\rho_2 G)$. When $\kappa(x) = 1$ is a constant function, we recover the quantum maximal correlation defined in \cite{beigi_new_2013}; in this case, $\mho_{\sigma}^{\kappa(x) = 1} = L_{\sigma}$;
however, notice that this choice of $\kappa$ is not included in the set $\mathcal{K}$ and the corresponding operator $\mho_{\sigma}^{\kappa(x)=1}$ is not Hermitian-preserving. 

\begin{lemma}[Invariance of the $\kappa$-quantum maximal correlation under local isometries]
\label{lem::max_corr_isometry}
Suppose $U: \hbt_{1} \rightarrow \wt{\hbt}_{1}$ and $V: \hbt_{2}\rightarrow \wt{\hbt}_{2}$ are two isometries (\ie, $U^{\dagger} U = \unit_{{\dim(\hbt_1)}}$ and $V^{\dagger} V = \unit_{{\dim(\hbt_2)}}$),
where $\dim(\hbt_{1}) \le \dim(\wt{\hbt}_{1})$ and $\dim(\hbt_{2}) \le \dim(\wt{\hbt}_{2})$. 
For any bipartite quantum state $\rho_{1,2}$ on $\hbt_{1} \otimes \hbt_{2}$, define $\wt{\rho}_{1,2} := (U \otimes V) \rho (U \otimes V)^{\dagger}$. We have
\begin{align}
\mu_{\kappa}(\rho_{1,2}) = \mu_{\kappa}(\wt{\rho}_{1,2}). 
\end{align}
\end{lemma}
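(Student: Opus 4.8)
The plan is to exhibit, for each direction, a map between the feasible sets of the two optimization problems defining $\mu_{\kappa}(\rho_{1,2})$ and $\mu_{\kappa}(\wt{\rho}_{1,2})$ that preserves both the constraints \eqref{eqn::max_corr_condition} and the objective value; equality of the two maxima then follows. First I would record that the reduced states transform as $\wt{\rho}_1 = U\rho_1 U^{\dagger}$ and $\wt{\rho}_2 = V\rho_2 V^{\dagger}$: for any $\rho_{1,2}$ one has $\tr_{2}\bigl((U\otimes V)\rho_{1,2}(U\otimes V)^{\dagger}\bigr) = U\,\tr_{2}(\rho_{1,2})\,U^{\dagger}$ since $V^{\dagger}V = \unit$, and symmetrically for $\wt{\rho}_2$. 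Next, for any $F\in\MM_{n_1}$, $G\in\MM_{n_2}$, put $\wt{F} := UFU^{\dagger}$, $\wt{G} := VGV^{\dagger}$; using cyclicity of the trace together with $U^{\dagger}U = \unit_{n_1}$ and $V^{\dagger}V = \unit_{n_2}$,
\begin{align*}
\tr\bigl(\wt{\rho}_{1,2}\,\wt{F}\otimes \wt{G}^{\dagger}\bigr)
&= \tr\bigl((U\otimes V)\rho_{1,2}(U\otimes V)^{\dagger}\,(UFU^{\dagger})\otimes(VG^{\dagger}V^{\dagger})\bigr) \\
&= \tr\bigl(\rho_{1,2}\,(U^{\dagger}UFU^{\dagger}U)\otimes(V^{\dagger}VG^{\dagger}V^{\dagger}V)\bigr) = \tr\bigl(\rho_{1,2}\,F\otimes G^{\dagger}\bigr),
\end{align*}
so the two objectives agree on corresponding pairs.

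Then I would check that the constraints transform correctly. The linear constraint is immediate from \lemref{lem::mho_op}: $\Inner{\unit_{n_1}}{\wt{F}}_{\mho_{\wt{\rho}_1}^{\kappa}} = \tr(\wt{\rho}_1\wt{F}) = \tr(U\rho_1 U^{\dagger}UFU^{\dagger}) = \tr(\rho_1 F) = \Inner{\unit_{n_1}}{F}_{\mho_{\rho_1}^{\kappa}}$, and likewise for $G$. For the norm constraint, write $\rho_1 = \sum_{j} s_j \Ket{s_j}\Bra{s_j}$; then $\wt{\rho}_1 = \sum_{j} s_j \Ket{\psi_j}\Bra{\psi_j}$ where $\Ket{\psi_j} := U\Ket{s_j}$ are orthonormal in $\wt{\hbt}_1$ (not necessarily a basis, when $\dim(\hbt_1) < \dim(\wt{\hbt}_1)$). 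Using the decomposition \eqref{eqn::mho_decomp} of $\mho_{\wt{\rho}_1}^{\kappa}$ on $\text{supp}(\wt{\rho}_1)$ and the identity $\Bra{\psi_j}\wt{F}\Ket{\psi_m} = \Bra{s_j}U^{\dagger}UFU^{\dagger}U\Ket{s_m} = \Bra{s_j}F\Ket{s_m}$,
\begin{align*}
\Norm{\wt{F}}_{\mho_{\wt{\rho}_1}^{\kappa}}^{2}
= \sum_{j,m} s_j\,\kappa\left(\frac{s_j}{s_m}\right)\Abs{\Bra{\psi_j}\wt{F}\Ket{\psi_m}}^{2}
= \sum_{j,m} s_j\,\kappa\left(\frac{s_j}{s_m}\right)\Abs{\Bra{s_j}F\Ket{s_m}}^{2}
= \Norm{F}_{\mho_{\rho_1}^{\kappa}}^{2},
\end{align*}
and similarly $\Norm{\wt{G}}_{\mho_{\wt{\rho}_2}^{\kappa}} = \Norm{G}_{\mho_{\rho_2}^{\kappa}}$. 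Hence $(\wt{F},\wt{G})$ is feasible for $\mu_{\kappa}(\wt{\rho}_{1,2})$ whenever $(F,G)$ is feasible for $\mu_{\kappa}(\rho_{1,2})$, with the same objective value; taking the maximum over feasible pairs yields $\mu_{\kappa}(\wt{\rho}_{1,2}) \ge \mu_{\kappa}(\rho_{1,2})$.

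For the reverse inequality I would run the same computation with the adjoint isometries: given $(\wt{F},\wt{G})$ feasible for $\mu_{\kappa}(\wt{\rho}_{1,2})$, the pair $(U^{\dagger}\wt{F}U,\,V^{\dagger}\wt{G}V)$ is feasible for $\mu_{\kappa}(\rho_{1,2})$ with the same objective — the trace and norm identities are literally the same calculations, now reading $\Bra{s_j}(U^{\dagger}\wt{F}U)\Ket{s_m} = \Bra{\psi_j}\wt{F}\Ket{\psi_m}$ — so $\mu_{\kappa}(\rho_{1,2}) \ge \mu_{\kappa}(\wt{\rho}_{1,2})$, and combining the two bounds proves the lemma. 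The only point requiring a moment of care is the rank-deficient case $\dim(\hbt_1) < \dim(\wt{\hbt}_1)$, where $\mho_{\wt{\rho}_1}^{\kappa}$ lives only on $\text{supp}(\wt{\rho}_1)$ and $\Norm{\cdot}_{\mho_{\wt{\rho}_1}^{\kappa}}$ is merely a semi-norm; this is harmless, because that semi-norm is exactly what the decomposition \eqref{eqn::mho_decomp} computes, and the components of $\wt{F}$ orthogonal to $\text{supp}(\wt{\rho}_1)$ never enter. There is no deeper obstacle: the statement reduces to verifying that the local isometries act compatibly with the data defining $\mu_{\kappa}$.
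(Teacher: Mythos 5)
Your proof is correct and follows essentially the same route as the paper's: identify how the reduced states and the operators $\mho_{\rho_i}^{\kappa}$ transform under the local isometries, check via the decomposition \eqref{eqn::mho_decomp} that the maps $F\mapsto UFU^{\dagger}$ and $\wt{F}\mapsto U^{\dagger}\wt{F}U$ preserve both the constraints \eqref{eqn::max_corr_condition} and the objective, and conclude equality of the two maxima. The only cosmetic difference is that you run two explicit feasibility-preserving maps (one in each direction), whereas the paper proves one inequality and then invokes surjectivity of the compression $\wt{F}\mapsto U^{\dagger}\wt{F}U$; your handling of the rank-deficient case matches the paper's remark on restricting $\mho_{\wt{\rho}_1}^{\kappa}$ to the support.
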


\begin{proof}
By definition, 
\begin{align*}
\mu_{\kappa}(\wt{\rho}_{1,2}) &= \max_{\wt{F}, \wt{G}} \Abs{\tr\left(\wt{\rho}_{1,2}\ \wt{F} \otimes \wt{G}^{\dagger}\right)}  \\
&= \max_{\wt{F}, \wt{G}} \Abs{\tr\left(\rho_{1,2}\ (U^{\dagger} \wt{F} U) \otimes (V^{\dagger} \wt{G}^{\dagger} V) \right)} = \max_{\wt{F}, \wt{G}} \Abs{\tr\left(\rho_{1,2} F \otimes G^{\dagger} \right)},
\end{align*}
where we define $F := U^{\dagger} \wt{F} U$ and $G:= V^{\dagger} \wt{G} V$.
Denote the reduced density matrices of $\rho_{1,2}$ as $\rho_1$ and $\rho_2$ respectively. Then the reduced density matrices of $\wt{\rho}_{1,2}$ are given by $\wt{\rho}_1 := U \rho_{1} U^{\dagger}$ and $\wt{\rho}_2 := V \rho_{2} V^{\dagger}$ respectively. 
From \eqref{eqn::max_corr_condition}, the condition in the maximization is given by
\begin{align*}
\tr\left(\wt{\rho}_{1} \wt{F} \right) =  \tr\left(\wt{\rho}_2 \wt{G}\right) = 0, \qquad \tr\left( \wt{F}^{\dagger} \mho_{\wt{\rho}_1}^{\kappa} (\wt{F})  \right) = \tr\left( \wt{G}^{\dagger} \mho_{\wt{\rho}_{2}}^{\kappa} (\wt{G}) \right) = 1.
\end{align*}
By \eqref{eqn::mho_decomp}, it could be readily shown that $\mho_{\wt{\rho}_{1}}^{\kappa}(\cdot) = (U\# U^{\dagger}) \circ \mho_{\rho_{1}}^{\kappa} \circ \left( U^{\dagger}\# U \right)$ and similarly for $\mho_{\wt{\rho}_2}^{\kappa}(\cdot)$.  
As a remark, in this case, $\wt{\rho}_1$ and $\wt{\rho}_2$ might not be strictly positive, then the decomposition in \eqref{eqn::mho_decomp} only considers eigenstates with respect to non-zero eigenvalues (\ie, $\mho_{\wt{\rho}_1}^{\kappa}$ is only defined on the support of $\wt{\rho}_1$).
Then, with direct calculation, one could verify that the above four conditions are equivalent to 
\begin{align*}
\tr\left(\rho_1 F \right) = \tr\left(\rho_2 G \right) = 0, \qquad 
\tr\left(F^{\dagger} \mho_{\rho_{1}}^{\kappa} (F)\right) = \tr\left(G^{\dagger} \mho_{\rho_2}^{\kappa}(G)\right) = 1. 
\end{align*}
Therefore, we know $\mu_{\kappa}(\wt{\rho}_{1,2}) \le \mu_{\kappa}(\rho_{1,2})$. Since $\wt{F}$ is a linear operator on a higher-dimensional Hilbert space $\wt{\hbt}_{1}$ than $F$ on $\hbt_{1}$, for any such $F$, there exists $\wt{F}$ such that $U^{\dagger} \wt{F} U = F$ (similarly for $G$); therefore the equality can be achieved and $\mu_{\kappa}(\wt{\rho}_{1,2}) = \mu_{\kappa}(\rho_{1,2})$.
\end{proof}

\begin{theorem}
\label{thm::sdpi_max_corr}
For a Hilbert space $\hbt$ with dimension $n$, 
suppose $\sigma\in \dsetplus_{n}$ and $\mathcal{E}$ is any  quantum channel on $\hbt$ such that the quantum state $\mathcal{E}(\sigma)\in \dsetplus_n$.
Thus, $\sigma$ has an eigenvalue decomposition 
$
\sigma = \sum_{j=1}^{n} s_j \Ket{s_j}\Bra{s_j}.
$
For the choice $\kappa = \kappa_{1/2}$, 
\begin{align}
\sqrt{\eta_{\chi^2_{{\kappa_{1/2}}}}(\mathcal{E}, \sigma)} = \mu_{\kappa_{1/2}}(\rho_{1,2}),
\end{align}
where the bipartite quantum state 
$\rho_{1,2} := (\unit_{n} \otimes \mathcal{E}) \left(\Ket{\psi} \Bra{\psi}\right)$ 
and the wave function $\Ket{\psi}$ is any purification of $\sigma$ on the system $\hbt \otimes \hbt$.
\end{theorem}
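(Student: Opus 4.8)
The plan is to show that both $\sqrt{\eta_{\chi^2_{\kappa_{1/2}}}(\mathcal{E},\sigma)}$ and $\mu_{\kappa_{1/2}}(\rho_{1,2})$ equal the common optimization value $\max_{F,G}\Abs{\tr\big(\sigma^{\half}F\sigma^{\half}\,\mathcal{E}^{\dagger}(G)\big)}$, where $F$ runs over $\{F\in\MM_n:\tr(\sigma F)=0,\ \Norm{F}_{\Gamma_{\sigma}}=1\}$ and $G$ over $\{G\in\MM_n:\tr(\mathcal{E}(\sigma)G)=0,\ \Norm{G}_{\Gamma_{\mathcal{E}(\sigma)}}=1\}$. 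For the left side this is \lemref{lem::sdpi_svd} specialized to $\kappa=\kappa_{1/2}$, and for the right side it is \defref{defn::max_corr} after identifying the reduced states of $\rho_{1,2}$; the work is in unfolding the two bilinear forms and checking that the feasible sets coincide.

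\textbf{Reduced states and feasible sets.} Since $\mathcal{E}$ is trace preserving, partial tracing over either tensor factor gives $\rho_1=\tr_2\big(\Ket{\psi}\Bra{\psi}\big)$ and $\rho_2=\mathcal{E}\big(\tr_1(\Ket{\psi}\Bra{\psi})\big)$. By \lemref{lem::max_corr_isometry} the quantity $\mu_{\kappa_{1/2}}(\rho_{1,2})$ is unchanged if the purification $\Ket{\psi}$ is composed with a unitary on the second factor, so I may take the standard purification $\Ket{\psi}=\sum_{j}\sqrt{s_j}\,\Ket{s_j}\otimes\Ket{s_j}$; then $\rho_1=\sigma$ and $\rho_2=\mathcal{E}(\sigma)$, both in $\dsetplus_n$ by hypothesis. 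Consequently $\mho^{\kappa_{1/2}}_{\rho_1}=\Gamma_{\sigma}$ and $\mho^{\kappa_{1/2}}_{\rho_2}=\Gamma_{\mathcal{E}(\sigma)}$, and the constraints of \defref{defn::max_corr} become exactly $\tr(\sigma F)=\tr(\mathcal{E}(\sigma)G)=0$ and $\Norm{F}_{\Gamma_{\sigma}}=\Norm{G}_{\Gamma_{\mathcal{E}(\sigma)}}=1$, which are precisely the constraints in \lemref{lem::sdpi_svd} for $\kappa=\kappa_{1/2}$ (recall $\mho^{\kappa_{1/2}}_{\sigma}=\Gamma_{\sigma}$).

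\textbf{Unfolding and matching.} On the SDPI side, substituting $\Kop=\Gamma_{\mathcal{E}(\sigma)}^{-1}\circ\mathcal{E}\circ\Gamma_{\sigma}$ with $\Gamma_{\tau}(X)=\tau^{\half}X\tau^{\half}$, using that $\mathcal{E}$ is Hermitian preserving, and cyclicity of the trace, the factors $\mathcal{E}(\sigma)^{\pm\half}$ cancel and one finds $\Inner{\Kop(F)}{G}_{\Gamma_{\mathcal{E}(\sigma)}}=\tr\big(\sigma^{\half}F^{\dagger}\sigma^{\half}\,\mathcal{E}^{\dagger}(G)\big)$. On the maximal-correlation side, the above choice of $\Ket{\psi}$ gives the elementary identity $\Bra{\psi}(A\otimes B)\Ket{\psi}=\tr\big(\sigma^{\half}A^{\mathsf{T}}\sigma^{\half}B\big)$, where $\mathsf{T}$ is the transpose in the basis $\{\Ket{s_j}\}$; combined with $(\unit_n\otimes\mathcal{E})^{\dagger}=\unit_n\otimes\mathcal{E}^{\dagger}$ this yields $\tr\big(\rho_{1,2}\,F\otimes G^{\dagger}\big)=\Bra{\psi}F\otimes\mathcal{E}^{\dagger}(G^{\dagger})\Ket{\psi}=\tr\big(\sigma^{\half}F^{\mathsf{T}}\sigma^{\half}\,\mathcal{E}^{\dagger}(G^{\dagger})\big)$. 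Now the feasible set for $F$ is invariant under both $F\mapsto F^{\dagger}$ and the transpose $F\mapsto F^{\mathsf{T}}$ taken in the eigenbasis of $\sigma$, since there $\sigma$ — hence $\sigma^{\half}$ — is diagonal, so $\tr(\sigma F^{\mathsf{T}})=\tr(\sigma F)$ and $\Norm{F^{\mathsf{T}}}_{\Gamma_{\sigma}}=\Norm{F}_{\Gamma_{\sigma}}$, and likewise for the adjoint; the feasible set for $G$ is invariant under $G\mapsto G^{\dagger}$. Relabelling $F^{\mathsf{T}}$ (resp.\ $F^{\dagger}$) as $F$ and $G^{\dagger}$ as $G$ inside the two suprema, both reduce to $\max_{F,G}\Abs{\tr\big(\sigma^{\half}F\sigma^{\half}\,\mathcal{E}^{\dagger}(G)\big)}$, which proves the theorem.

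I expect the only real delicacy to be the transpose/adjoint bookkeeping in the last step — in particular verifying that the feasible set for $F$ is stable under transposition in the eigenbasis of $\sigma$ — together with the observation (via \lemref{lem::max_corr_isometry}) that one is free to take $\Ket{\psi}$ in standard form so that the reduced states are exactly $\sigma$ and $\mathcal{E}(\sigma)$; the remaining steps are direct substitutions into the definitions of $\Kop$, $\Gamma_{\tau}$, and $\mu_{\kappa_{1/2}}$.
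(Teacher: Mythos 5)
Your proposal is correct and follows essentially the same route as the paper's proof: \lemref{lem::sdpi_svd} on the SDPI side, the canonical purification together with the transpose identity $\Bra{\psi_c}A\otimes B\Ket{\psi_c}=\tr\bigl(\sigma^{1/2}A^{T}\sigma^{1/2}B\bigr)$ on the maximal-correlation side, the check that the constraint sets are stable under adjoint and under transposition in the eigenbasis of $\sigma$ (where the paper isolates the one place $\kappa=\kappa_{1/2}$ is needed, namely $\Gamma_{\mathcal{E}(\sigma)}^{-1}\circ\mho_{\mathcal{E}(\sigma)}^{\kappa}=\id_n$), and \lemref{lem::max_corr_isometry} to pass to a general purification. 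One small correction: with the paper's convention $\tr_1\left(\Ket{\psi}\Bra{\psi}\right)=\sigma$, the purification freedom is a unitary on the \emph{first} (ancilla) factor, not the second --- this is precisely what makes it commute with $\unit_n\otimes\mathcal{E}$ so that \lemref{lem::max_corr_isometry} applies to $\rho_{1,2}$.
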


Recall that a pure state $\Ket{\psi}$ on $\hbt\otimes \hbt$ is a purification of $\sigma$ if $\tr_{1}\left(\Ket{\psi}\Bra{\psi}\right) = \sigma$ (see \cite[Chap. 5]{wilde_2013}). The canonical choice of the purification $\Ket{\psi}$ of $\sigma$ is  
\begin{align}
\label{eqn::sigma_purification}
\Ket{\psi_{\text{c}}} := \sum_{j=1}^{n} \sqrt{s_j} \Ket{s_j, s_j}. 
\end{align}

\begin{proof}
In the first step, we prove it for the choice $\Ket{\psi} = \Ket{\psi_c}$; in the second step, we extend the result to the general purification. 

{\noindent \emph{Step (\rom{1})}.}
By \lemref{lem::sdpi_svd}, we have
\begin{align*}
\sqrt{\eta_{\chisq}(\mathcal{E}, \sigma)} = \max_{F, G} \Abs{\Inner{\mathcal{E} \circ \Gamma_{\sigma}(F)}{\wt{G}}_{\hs} } = \max_{F, G} \Abs{\Inner{\wt{G}}{\mathcal{E} \circ \Gamma_{\sigma}(F)}_{\hs} }, 
\end{align*}
where $\wt{G} := \left(\Gamma_{\mathcal{E}(\sigma)}\right)^{-1}\circ \mho_{\mathcal{E}(\sigma)}^{\kappa} (G)$. 
Let us decompose $\mathcal{E} \circ \Gamma_{\sigma}(F)$ based on the eigenstates of $\sigma$, 
\begin{align*}
\mathcal{E} \circ \Gamma_{\sigma}(F) = \sum_{j,m=1}^{n} \sqrt{ s_m s_j} \Bra{s_m} F \Ket{s_j} \mathcal{E}(\Ket{s_m} \Bra{s_j}).
\end{align*}
Hence, 
\begin{align*}
\sqrt{\eta_{\chisq}(\mathcal{E}, \sigma)} &= \max_{F, G} \Abs{\sum_{j,m=1}^{n} \sqrt{s_m s_j} \Bra{s_m} F \Ket{s_j} \Inner{\wt{G}}{\mathcal{E}(\Ket{s_m} \Bra{s_j})}_{\hs}} \\
&= \max_{F, G} \Abs{\sum_{j,m=1}^{n} \sqrt{s_m s_j} \Bra{s_j} \wt{F} \Ket{s_m} \Inner{\wt{G}}{\mathcal{E}(\Ket{s_m} \Bra{s_j})}_{\hs}}\\
&= \max_{F, G} \Abs{\tr\left(\rho_{1,2} \wt{F}\otimes \wt{G}^{\dagger} \right)},
\end{align*}
where $\wt{F} = F^{T}$ and the superscript $T$ means transpose with respect to the eigenstates of $\sigma$, \ie, $\Bra{s_j} \wt{F} \Ket{s_m} := \Bra{s_m} F \Ket{s_j}$ for all $1\le j, m \le n$. The last equality above can be verified directly by $\rho_{1,2} = (\id_{n} \otimes \mathcal{E}) (\Ket{\psi_c}\Bra{\psi_c})$.

Notice that from \lemref{lem::sdpi_svd}, the maximum is taken over all $F, G$ given in \eqref{eqn::sdpi_max_corr_cond}. 
Hence, to prove \thmref{thm::sdpi_max_corr}, it remains to verify that conditions \eqref{eqn::sdpi_max_corr_cond} for $F$ and $G$ are equivalent to conditions \eqref{eqn::max_corr_condition} for $\wt{F}$ and $\wt{G}$. 
More specifically, we need to verify the following four relations.
\begin{enumerate}[(i)]
\item $\Inner{\unit_n}{F}_{\mho_{\sigma}^{\kappa}} = \Inner{\unit_n}{\wt{F}}_{\mho_{\sigma}^{\kappa}}$. Note that 
\begin{align*}
\Inner{\unit_n}{F}_{\mho_{\sigma}^{\kappa}} = \tr(\sigma F)  = \sum_{j} s_j \Bra{s_j} F \Ket{s_j} = \sum_{j} s_j \Bra{s_j} \wt{F} \Ket{s_j} = \tr(\sigma \wt{F}) = \Inner{\unit_n}{\wt{F}}_{\mho_{\sigma}^{\kappa}}.
\end{align*}

\item $\Norm{F}_{\mho_{\sigma}^{\kappa}} = \Norm{\wt{F}}_{\mho_{\sigma}^{\kappa}}$. Note that 
\begin{align*}
\Norm{F}_{\mho_{\sigma}^{\kappa}}^2  &= \tr\left(F^\dagger \mho_{\sigma}^{\kappa} (F)\right) 
\myeq{\eqref{eqn::mho_decomp}}\ \sum_{j,m} s_j \kappa\left(\frac{s_j}{s_m}\right) \Abs{\Bra{s_j} F \Ket{s_m} }^2\\
&= \sum_{j,m} s_j \kappa\left(\frac{s_j}{s_m}\right) \Abs{\Bra{s_m} \wt{F} \Ket{s_j} }^2 
\myeq{\eqref{eqn::set_K}} \sum_{j,m} s_m \kappa\left(\frac{s_m}{s_j}\right) \Abs{\Bra{s_m} \wt{F} \Ket{s_j} }^2 \\
&= \tr\left(\bigl(\wt{F}\bigr)^\dagger \mho_{\sigma}^{\kappa}(\wt{F})\right) = \Norm{\wt{F}}^2_{\mho_{\sigma}^{\kappa}}.
\end{align*}

\item $\Inner{\unit_n}{G}_{\mho_{\mathcal{E}(\sigma)}^{\kappa}} = \Inner{\unit_n}{\wt{G}}_{\mho_{\mathcal{E}(\sigma)}^{\kappa}}$. Note that
\begin{align*}
\Inner{\unit_n}{\wt{G}}_{\mho_{\mathcal{E}(\sigma)}^{\kappa}} &= \Inner{\mathcal{E}(\sigma)}{ \Gamma_{\mathcal{E}(\sigma)}^{-1}\circ \mho_{\mathcal{E}(\sigma)}^{\kappa} (G)}_{\hs} = \Inner{\mathcal{E}(\sigma)}{ G}_{\hs} = \Inner{\unit_n}{G}_{\mho_{\mathcal{E}(\sigma)}^{\kappa}}.
\end{align*}

\item $\Norm{G}_{\mho_{\mathcal{E}(\sigma)}^{\kappa}} = \Norm{\wt{G}}_{\mho_{\mathcal{E}(\sigma)}^{\kappa}}$.
Note that 
\begin{align*}
\Norm{\wt{G}}_{\mho_{\mathcal{E}(\sigma)}^{\kappa}}^2 
&= \Inner{\wt{G}}{\mho_{\mathcal{E}(\sigma)}^{\kappa}\left(\wt{G}\right)}_{\hs} 
= \Inner{\Gamma_{\mathcal{E}(\sigma)}^{-1}\circ \mho_{\mathcal{E}(\sigma)}^{\kappa} (G)}{\mho_{\mathcal{E}(\sigma)}^{\kappa} \circ \Gamma_{\mathcal{E}(\sigma)}^{-1}\circ \mho_{\mathcal{E}(\sigma)}^{\kappa} (G)}_{\hs} \\
&= \Inner{G}{\left(\Gamma_{\mathcal{E}(\sigma)}^{-1}\circ \mho_{\mathcal{E}(\sigma)}^{\kappa} \right)^2 (G)}_{\mho_{\mathcal{E}(\sigma)}^{\kappa}}. 
\end{align*}
When $\kappa = \kappa_{1/2}$, $\Gamma_{\mathcal{E}(\sigma)}^{-1}\circ \mho_{\mathcal{E}(\sigma)}^{\kappa} = \id_n$. Thus the relation holds for this special choice of $\kappa$ and this is the only place we employ this assumption.
\end{enumerate}

{\noindent \emph{Step (\rom{2}):}} We then extend the result from the canonical purification $\Ket{\psi_c}$ to any purification $\Ket{\psi}$ on the bipartite quantum system $\hbt \otimes \hbt$. 
By \cite[Theorem 5.1.1]{wilde_2013}, there exists a unitary (thus also isometry) $U: \hbt \rightarrow \hbt$ such that $\Ket{\psi} = U \otimes \unit_n \Ket{\psi_c}$. 
Hence, $(\id_n \otimes \mathcal{E})(\Ket{\psi}\Bra{\psi}) = (U\otimes \unit_n) \left((\id_n\otimes \mathcal{E})(\Ket{\psi_c}\Bra{\psi_c})\right) (U\otimes \unit_n)^{\dagger}$. 
By \lemref{lem::max_corr_isometry}, the conclusion follows immediately. 
\end{proof}

\section{SDPI constants for special {qubit channels}}
\label{sec::example}

In this section, we will illustrate the dependence of SDPI constants on the reference state $\sigma$ and the weight function $\kappa$, for several special \revise{qubit channels}. The dependence on $\sigma$ is one major difference between the quantum SDPI framework and the quantum contraction coefficient approach. The dependence on $\kappa$ is one major difference between the quantum SDPI framework and its classical version: 
\revise{all quantum $\chisq$ divergences coincide for classical states $\rho$ and $\sigma$ (\ie, $\rho$ and $\sigma$ commute) and simply reduce to the classical $\chi^2$ divergence; in particular, 
classical $\chi^2$ divergence, as well as the associated classical SDPI constant, does not depend on $\kappa$; however, the SDPI constant for quantum $\chisq$ divergences might fluctuate significantly between approximately $0$ and $1$ for various $\kappa$, in a special example that we provide below.  
}

Three Pauli matrices are denoted by $\sigma_{X}, \sigma_{Y}, \sigma_{Z}$. 
Without loss of generality, assume $\sigma = \frac{1}{2}\left(\unit_2 + s \sigma_{Z}\right) = \bigl[\begin{smallmatrix}
(1+s)/2 & 0 \\ 0 & (1-s)/2
\end{smallmatrix}\bigr]$ with $s\in [0,1)$, because one can always choose the eigenbasis of $\sigma$ as the computational basis; of course, the matrix representation of the quantum channel is changed, by choosing such a specific computational basis. 

\subsection{QC channel}
\label{subsec::qc_channel}

By the expression of QC channel \eqref{eqn::qc_channel} and by \eqref{eqn::qc_value}, 
we have for any $A\in \HH_2^0$ that 
\begin{align*}
\Inner{\mathcal{E}(A)}{\Omega_{\mathcal{E}(\sigma)}^{\kappa} (\mathcal{E}(A))}_{\hs} = \sum_{j=1}^{2} \frac{\Abs{\tr(F_j A)}^2}{\tr(F_j \sigma)} = \Abs{\tr(F_1 A)}^2 \left( \frac{1}{\tr(F_1 \sigma)} + \frac{1}{1 - \tr(F_1 \sigma)} \right).
\end{align*}
The second equality comes from the fact that $F_2 = \unit_2 -  F_1$ and $\tr(A) = 0$. Let us decompose $A = a_x \sigma_{X} + a_y \sigma_{Y} + a_z \sigma_{Z}$ and $F_1 = f_0 \unit_2 + f_x \sigma_{X} + f_y \sigma_{Y} + f_{z} \sigma_{Z}$; notice that all coefficients for $A$ and $F_1$ are real numbers. Next, rewrite the above equation by 
\begin{align*}
\Inner{\mathcal{E}(A)}{\Omega_{\mathcal{E}(\sigma)}^{\kappa} (\mathcal{E}(A))}_{\hs} =  \frac{4}{\tr(F_1 \sigma) \bigl(1 - \tr(F_1 \sigma)\bigr)}  \left(f_x a_x + f_y a_y + f_z a_z\right)^2. 
\end{align*}
From \eqref{eqn::AA_omega_kappa}, we also have
\begin{align*}
\Inner{A}{\Omega_{\sigma}^{\kappa} (A)}_{\hs} =  c_s (a_x^2 + a_y^2) + \frac{4}{1-s^2} a_z^2,
\end{align*}
where 
\begin{align}
\label{eqn::qc_channel_c_s}
c_s := \kappa\left(\frac{1+s}{1-s}\right)\frac{2}{1-s} + \kappa\left(\frac{1-s}{1+s}\right)\frac{2}{1+s} \myeq{\eqref{eqn::set_K}} \frac{4}{1-s} \kappa\left(\frac{1+s}{1-s}\right) \myeq{\eqref{eqn::set_K}} \frac{4}{1+s} \kappa\left(\frac{1-s}{1+s}\right). 
\end{align}
By the Cauchy–Schwarz inequality and the fact that $1-s^2>0$ and $c_s > 0$, we have 
\begin{align*}
\left(f_x a_x + f_y a_y + f_z a_z\right)^2 \le \left(\frac{f_x^2}{c_s} + \frac{f_y^2}{c_s} + \frac{f_z^2}{4/(1-s^2)}\right)  \left(c_s (a_x^2 + a_y^2) + \frac{4}{1-s^2} a_z^2\right).
\end{align*}
Hence, we know that 
\begin{align}
\label{eqn::sdpi_const_qc}
\begin{split}
\eta_{\chisq}(\mathcal{E}, \sigma) &= \frac{4}{\tr(F_1 \sigma) \bigl(1 - \tr(F_1 \sigma)\bigr)}\left(\frac{f_x^2}{c_s} + \frac{f_y^2}{c_s} + \frac{f_z^2}{4/(1-s^2)}\right) \\
&= \frac{4}{(f_0 + s f_z)(1 - f_0 - s f_z)} \left(\frac{f_x^2}{c_s} + \frac{f_y^2}{c_s} + \frac{f_z^2}{4/(1-s^2)}\right).
\end{split}
\end{align}
As we can observe, the SDPI constant $\eta_{\chisq}(\mathcal{E}, \sigma)$ depends on $\kappa$ and the parameter $s$ in a complicated way; however, it does not depend on the choice of pure states in the post-measurement preparation in \eqref{eqn::qc_channel}. 
In the following, let us consider a few special choices of the POVM $\{F_1, \unit_2 - F_1\}$.

\subsubsection*{(High dependence on $\sigma$, for the quantum implementation of BSC)}
If $F_1 = \begin{bmatrix}1-\eps & 0 \\ 0 & \eps \end{bmatrix}$ (thus $F_2 = \begin{bmatrix} \eps & 0 \\ 0 & 1-\eps \end{bmatrix}$) with $\eps\in[0,1]$, 
then the channel $\mathcal{E}$ is exactly a quantum implementation of the binary symmetric channel with crossover probability $\eps$ (or BSC($\eps$) in short). 
Easily, we know $f_0 = \frac{1}{2}$, $f_z = \frac{1-2\eps}{2}$, $f_x = f_y = 0$ and thus the SDPI constant can be simplified as 
\begin{align}
\label{eqn::sdpi_constant_bsc_eps}
\eta_{\chisq}(\mathcal{E}, \sigma) 
= (1-2\eps)^2 \frac{1-s^2}{1-(1-2\eps)^2 s^2} \le (1-2\eps)^2. 
\end{align}
Notice that the SDPI constant in this case is independent of the choice of $\kappa$; the upper bound comes from the fact that $\eps \in [0,1]$.  
When we further let $s = 0$, \ie, the reference state $\sigma$ has the distribution Bern($\frac{1}{2}$), the SDPI constant achieves the upper bound $(1 - 2\eps)^2$, which recovers \cite[Example \rom{3}.1]{Raginsky_strong_2016}. 
In \figref{fig::bsc}, we show $\eta_{\chisq}(\mathcal{E}, \sigma)$ with respect to the parameter $s$ in $\sigma$, for fixed $\eps = 0.05$; the high dependence of $\eta_{\chisq}$ on $s$ (\ie, on $\sigma$) can be clearly seen, for this particular case.

\subsubsection*{(High dependence on $\kappa$).}

If $F_1 = \frac{1}{2}\left(\unit_2 + \xi \sigma_X\right)$ with $\xi\in [-1,1]$, then $f_0 = \frac{1}{2}$, $f_x = \frac{\xi}{2}$ and $f_{y} = f_z = 0$. Hence, 
\begin{align}
\label{eqn::sdpi_constant_qc_example_two_1}
\eta_{\chisq}(\mathcal{E}, \sigma) = \frac{4\xi^2}{c_s} \myeq{\eqref{eqn::qc_channel_c_s}} \frac{(1+s)\xi^2}{\kappa(\frac{1-s}{1+s})}\le \xi^2.
\end{align}
The inequality comes from the fact for any $\kappa\in \mathcal{K}$, we have $\kappa(x) \ge \kappa_{\min}(x) \equiv \frac{2}{1+x}$ (see \cite[Eq. (11)]{hiai_contraction_2016}). 
As one could observe, even for this simple example,  
the dependence of $\eta_{\chisq}(\mathcal{E}, \sigma)$ on $s$ and $\kappa$ is nonlinear and slightly complicated.
Similarly,  by the fact that for any $\kappa\in \mathcal{K}$, we have $\kappa(x) \le \kappa_{\max}(x) \equiv \frac{1+x}{2x}$ (see \cite[Eq. (11)]{hiai_contraction_2016}), one could immediately show that 
\begin{align}
\label{eqn::sdpi_constant_qc_example_two_1}
\eta_{\chisq}(\mathcal{E}, \sigma) 
\ge (1-s^2)\xi^2.
\end{align}
Notice that both upper and lower bounds in the above can be achieved for some $\kappa\in \mathcal{K}$. 
When $s\approx 1$ and $\xi\approx 1$, the largest value of $\eta_{\chisq}(\mathcal{E}, \sigma)$ is approximately $1$, while the smallest value is approximately $0$, which illustrates the high dependence of $\eta_{\chisq}(\mathcal{E}, \sigma)$ on the choice of $\kappa$, for this extreme case. In \figref{fig::qc_wyd_chi_alpha}, we visualize the SDPI constant $\eta_{\chisq}(\mathcal{E}, \sigma)$ with respect to various choices of $\kappa$, for $\xi = s = 0.95$; the high dependence of $\eta_{\chisq}(\mathcal{E}, \sigma)$ on $\kappa$ can be clearly observed.

\begin{figure}
\centering
\includegraphics[width=0.45\textwidth]{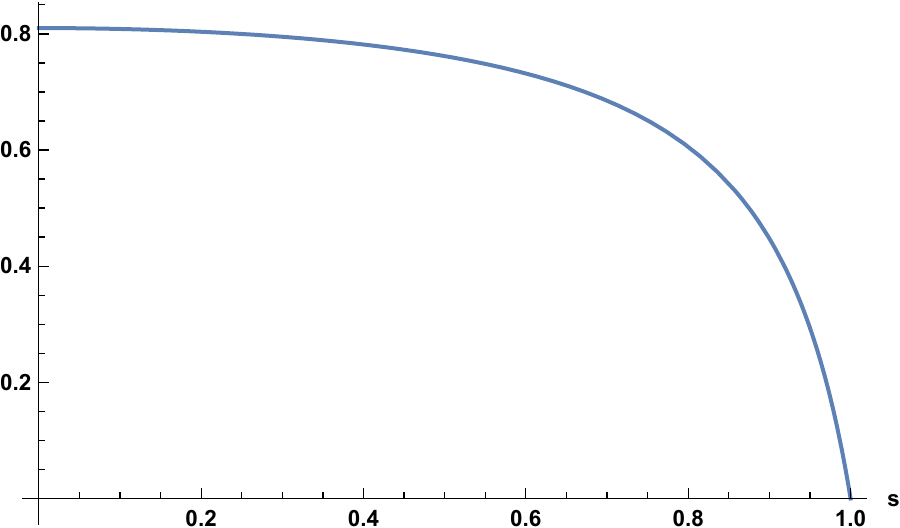}
\caption{The SDPI constant $\eta_{\chisq}(\mathcal{E}, \sigma)$ with respect to $s$, for BSC(0.05).}
\label{fig::bsc}
\end{figure}

\begin{figure}
\centering
\begin{subfigure}[b]{0.45\textwidth}
\includegraphics[width=\textwidth]{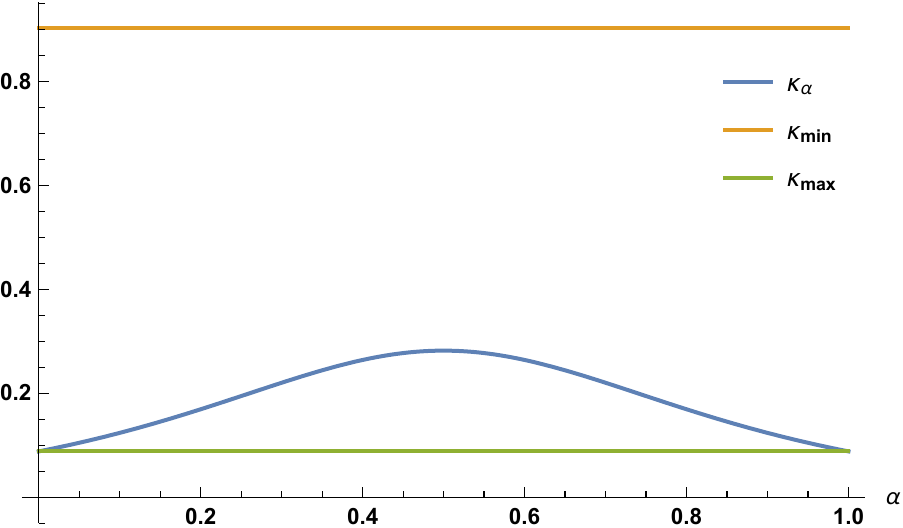}
\caption{$\eta_{\chisq}(\mathcal{E}, \sigma)$ for the family of $\kappa_{\alpha}$ in \eqref{eqn::kappa-chi-alpha}.}
\end{subfigure}
~
\begin{subfigure}[b]{0.45\textwidth}
\includegraphics[width=\textwidth]{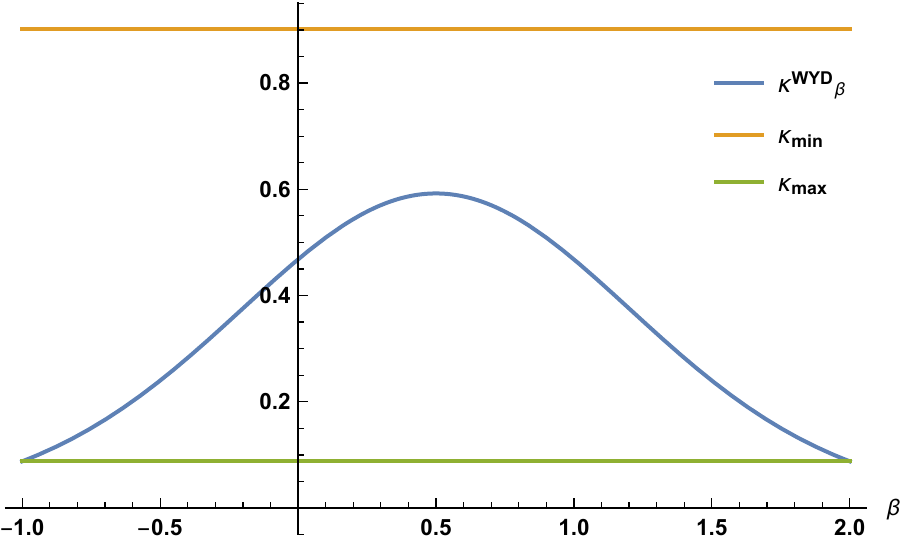}
\caption{$\eta_{\chisq}(\mathcal{E}, \sigma)$ for the family of $\kappa_{\beta}^{WYD}$ in \eqref{eqn::kappa_WYD}. }
\end{subfigure}
\caption{The SDPI constant $\eta_{\chisq}(\mathcal{E}, \sigma)$ with respect to various choices of $\kappa$ for the QC channel $\mathcal{E}$ with $F_1 = \frac{1}{2}\left(\unit_2 + \xi \sigma_X\right)$, $\xi = 0.95$, and $s = 0.95$.}
\label{fig::qc_wyd_chi_alpha}
\end{figure}

\subsection{Depolarizing channel}
The depolarizing channel on a qubit has the following form
\begin{align*}
\mathcal{E}(\rho) = \eps \rho + (1-\eps) \tr(\rho) \frac{\unit_2}{2},
\end{align*}
for $\eps\in [0,1]$. It refers to a physical process in which for a given input state $\rho$, one prepares $\rho$ with probability $\eps$ and prepares the maximal mixed state $\frac{\unit_2}{2}$ with probability $1-\eps$. Easily, we know that $\mathcal{E}(\sigma) = \frac{\unit_2}{2} + \frac{s \eps}{2}\sigma_{Z}$, and $\mathcal{E}(A) = \eps A$ for any $A\in \HH_2^0$. Hence,
\begin{align*}
\frac{\Inner{\mathcal{E}(A)}{\Omega_{\mathcal{E}(\sigma)}^{\kappa} (\mathcal{E}(A))}_{\hs}}{\Inner{A}{\Omega_{\sigma}^{\kappa} (A)}_{\hs} } &= \eps^2\frac{\Inner{A}{\Omega_{\mathcal{E}(\sigma)}^{\kappa}(A)}_{\hs}}{\Inner{A}{\Omega_{\sigma}^{\kappa} (A)}_{\hs}} \myeq{\eqref{eqn::Omega_decomp}} \eps^2 \frac{\frac{4}{1-s^2\eps^2} a_z^2 + c_{s\eps} (a_x^2 + a_y^2)}{\frac{4}{1-s^2} a_z^2 + c_s (a_x^2 + a_y^2)} \\
&= \eps^2 \left(\frac{1-s^2}{1-s^2 \eps^2} + \frac{c_{s\eps} - \frac{1-s^2}{1-s^2\eps^2} c_s}{\frac{4}{1-s^2} \frac{a_z^2}{a_x^2 + a_y^2} + c_s}\right).
\end{align*}

If $c_{s\eps} - \frac{1-s^2}{1-s^2 \eps^2} c_s \ge 0$, then 
\begin{align*}
\eta_{\chisq}(\mathcal{E}, \sigma) = \eps^2 \frac{c_{s\eps}}{c_s} =\eps^2 \frac{\kappa\left(\frac{1+s\eps}{1-s\eps}\right)/(1-s\eps)}{ \kappa\left(\frac{1+s}{1-s}\right)/(1-s)}.
\end{align*}
For fixed $s$ and $\eps$, $\eta_{\chisq}(\mathcal{E}, \sigma)$ might be largely affected by $\kappa$ as well.

\section{Conclusion and outlook}
\label{sec::discussion}

In this paper, we provide a partial solution to the problem of the
tensorization of SDPIs for quantum channels in
\thmref{thm::chisq_tensorize}. In addition, we extend the connection
between the SDPI constant for classical $\chi^2$ divergence and the maximal
correlation to the quantum region in \thmref{thm::sdpi_max_corr}. For
a particular QC channel $\mathcal{E}$ and a special quantum state
$\sigma$, we observe an extreme scenario, in which the SDPI constant
$\eta_{\chisq}(\mathcal{E}, \sigma)$ ranges approximately from $0$ to
$1$ for different $\kappa\in \mathcal{K}$. This implies that choosing
different $\kappa$ might largely affect the rate of contraction of
quantum channels.  \revise{Our numerical experiments (not presented in
  the paper) conducted for both qubit (\ie, $n = 2$) and qudit (with
  $n=3$) systems show that the tensorization property
  \eqref{eq:mainthm} seems to hold for any quantum channel
  $\mathcal{E}$, any reference state $\sigma\in \dsetplus_n$ and at
  least a few weight functions $\kappa$ being tested (\eg{},
  $\kappa_{\min} \equiv \frac{2}{1+x}$,
  $\kappa_{\max} \equiv \frac{1+x}{2x}$ and the family
  $\kappa_{\alpha}$ with $\alpha=\frac{1}{4}$ and
  $\alpha=\frac{3}{4}$). Proving such tensorization properties is an
  interesting future work. }

\revise{Finally, let us comment on the potential generalization of our
  approach, as well as the limitation.  As one might observe, provided
  that one could show \eqref{eqn::monotonicity_sdpiratio}, the
  tensorization of SDPIs is an immediate consequence.  However, it
  seems to be challenging to characterize the class of quantum
  channels that satisfy \eqref{eqn::monotonicity_sdpiratio} in general
  and this is the reason why we restrict to QC channels and the case
  $\kappa \ge \kappa_{1/2}$ in \thmref{thm::chisq_tensorize}.  In
  terms of the validity of \eqref{eqn::monotonicity_sdpiratio}, we
  notice that when $\kappa \le \kappa_{1/2}$ (\eg, $\kappa_{\min}$),
  \eqref{eqn::monotonicity_sdpiratio} does not hold even for QC
  channels.  As mentioned above, numerical experiments seem to suggest
  that the tensorization also holds for $\kappa_{\min}$.  Therefore,
  further understanding of the properties of quantum $\chi^2_{\kappa}$
  divergences is needed to extend our results.  }

\section*{Acknowledgment}

This work is supported in part by the US National Science Foundation
via grants DMS-1454939 and CCF-1910571 and by the US Department of
Energy via grant {DE}-{SC}0019449.  We thank Iman Marvian and Henry
Pfister for helpful discussions. Iman Marvian pointed out the possible
generalization of \thmref{thm::sdpi_max_corr} from the canonical
purification to any general purification. Henry Pfister introduced us
to the topic of the strong data processing inequality for classical noisy
channels.  \revise{We also thank anonymous referees for helpful suggestions.}

\bibliographystyle{plainnat}
\bibliography{reference.bib}

\begin{thebibliography}{24}
\providecommand{\natexlab}[1]{#1}
\providecommand{\url}[1]{\texttt{#1}}
\expandafter\ifx\csname urlstyle\endcsname\relax
  \providecommand{\doi}[1]{doi: #1}\else
  \providecommand{\doi}{doi: \begingroup \urlstyle{rm}\Url}\fi

\bibitem[{Anantharam} et~al.(2013){Anantharam}, {Gohari}, {Kamath}, and
  {Nair}]{anantharam_maximal_2013}
Venkat {Anantharam}, Amin {Gohari}, Sudeep {Kamath}, and Chandra {Nair}.
\newblock On maximal correlation, hypercontractivity, and the data processing
  inequality studied by {Erkip and Cover}, Apr 2013.
\newblock arXiv:1304.6133.

\bibitem[Beigi(2013)]{beigi_new_2013}
Salman Beigi.
\newblock A new quantum data processing inequality.
\newblock \emph{J. Math. Phys.}, 54\penalty0 (8):\penalty0 082202, 2013.
\newblock \doi{10.1063/1.4818985}.

\bibitem[{Beigi} et~al.(2018){Beigi}, {Datta}, and
  {Rouz{\'e}}]{beigi_quantum_2018}
Salman {Beigi}, Nilanjana {Datta}, and Cambyse {Rouz{\'e}}.
\newblock {Quantum reverse hypercontractivity: its tensorization and
  application to strong converses}, Apr 2018.
\newblock arXiv:1804.10100.

\bibitem[Chen et~al.(2017)Chen, Geh{\'e}r, Liu, Moln{\'a}r, Virosztek, and
  Wong]{chen_maps_2017}
Hong-Yi Chen, Gy{\"o}rgy~P{\'a}l Geh{\'e}r, Chih-Neng Liu, Lajos Moln{\'a}r,
  D{\'a}niel Virosztek, and Ngai-Ching Wong.
\newblock Maps on positive definite operators preserving the quantum
  $\chi^2_{\alpha}$-divergence.
\newblock \emph{Lett. Math. Phys.}, 107\penalty0 (12):\penalty0 2267--2290,
  2017.
\newblock \doi{10.1007/s11005-017-0989-0}.

\bibitem[Choi et~al.(1994)Choi, Ruskai, and Seneta]{choi_equivalence_1994}
Man-Duen Choi, Mary~Beth Ruskai, and Eugene Seneta.
\newblock Equivalence of certain entropy contraction coefficients.
\newblock \emph{Linear Algebra Appl.}, 208-209:\penalty0 29--36, 1994.
\newblock \doi{10.1016/0024-3795(94)90428-6}.

\bibitem[Cohen et~al.(1993)Cohen, Iwasa, Rautu, Beth~Ruskai, Seneta, and
  Zbaganu]{cohen_relative_1993}
Joel~E. Cohen, Yoh Iwasa, Gh. Rautu, Mary Beth~Ruskai, Eugene Seneta, and Gh.
  Zbaganu.
\newblock Relative entropy under mappings by stochastic matrices.
\newblock \emph{Linear Algebra Appl.}, 179:\penalty0 211--235, 1993.
\newblock \doi{10.1016/0024-3795(93)90331-H}.

\bibitem[Cubitt et~al.(2015)Cubitt, Kastoryano, Montanaro, and
  Temme]{cubitt_quantum_2015}
Toby Cubitt, Michael Kastoryano, Ashley Montanaro, and Kristan Temme.
\newblock Quantum reverse hypercontractivity.
\newblock \emph{J. Math. Phys.}, 56\penalty0 (10):\penalty0 102204, 2015.
\newblock \doi{10.1063/1.4933219}.

\bibitem[Hiai and Ruskai(2016)]{hiai_contraction_2016}
Fumio Hiai and Mary~Beth Ruskai.
\newblock Contraction coefficients for noisy quantum channels.
\newblock \emph{J. Math. Phys.}, 57\penalty0 (1):\penalty0 015211, 2016.
\newblock \doi{10.1063/1.4936215}.

\bibitem[Hiai et~al.(2013)Hiai, Kosaki, Petz, and Ruskai]{hiai_families_2013}
Fumio Hiai, Hideki Kosaki, D{\'e}nes Petz, and Mary~Beth Ruskai.
\newblock Families of completely positive maps associated with monotone
  metrics.
\newblock \emph{Linear Algebra Appl.}, 439\penalty0 (7):\penalty0 1749--1791,
  October 2013.
\newblock \doi{10.1016/j.laa.2013.05.012}.

\bibitem[Kastoryano and Temme(2013)]{Kastoryano_quantum_2013}
Michael~J. Kastoryano and Kristan Temme.
\newblock Quantum logarithmic {Sobolev} inequalities and rapid mixing.
\newblock \emph{J. Math. Phys.}, 54\penalty0 (5):\penalty0 052202, 2013.
\newblock \doi{10.1063/1.4804995}.

\bibitem[King(2014)]{King_2014}
Christopher King.
\newblock Hypercontractivity for semigroups of unital qubit channels.
\newblock \emph{Commun. Math. Phys.}, 328\penalty0 (1):\penalty0 285--301, May
  2014.
\newblock \doi{10.1007/s00220-014-1982-4}.

\bibitem[Lesniewski and Ruskai(1999)]{lesniewski_monotone_1999}
Andrew Lesniewski and Mary~Beth Ruskai.
\newblock Monotone {Riemannian} metrics and relative entropy on noncommutative
  probability spaces.
\newblock \emph{J. Math. Phys.}, 40\penalty0 (11):\penalty0 5702--5724, 1999.
\newblock \doi{10.1063/1.533053}.

\bibitem[M{\"u}ller-Lennert et~al.(2013)M{\"u}ller-Lennert, Dupuis, Szehr,
  Fehr, and Tomamichel]{muller-lennert_quantum_2013}
Martin M{\"u}ller-Lennert, Fr{\'e}d{\'e}ric Dupuis, Oleg Szehr, Serge Fehr, and
  Marco Tomamichel.
\newblock On quantum {R{\'e}nyi} entropies: {A} new generalization and some
  properties.
\newblock \emph{J. Math. Phys.}, 54\penalty0 (12):\penalty0 122203, December
  2013.
\newblock \doi{10.1063/1.4838856}.

\bibitem[Petz(1988)]{petz_sufficiency_1988}
D{\'e}nes Petz.
\newblock Sufficiency of channels over von {N}eumann algebras.
\newblock \emph{Q. J. Math.}, 39\penalty0 (1):\penalty0 97--108, 1988.
\newblock \doi{10.1093/qmath/39.1.97}.

\bibitem[Petz and Ruskai(1998)]{petz_contraction_1998}
D{\'e}nes Petz and Mary~Beth Ruskai.
\newblock Contraction of generalized relative entropy under stochastic mappings
  on matrices.
\newblock \emph{Infin. Dimens. Anal. Quantum. Probab. Relat. Top.}, 01\penalty0
  (1):\penalty0 83--89, 1998.
\newblock \doi{10.1142/S0219025798000077}.

\bibitem[Polyanskiy and Wu(2016)]{polyanskiy_dissipation_2016}
Yury Polyanskiy and Yihong Wu.
\newblock Dissipation of information in channels with input constraints.
\newblock \emph{IEEE T. Inform. Theory}, 62\penalty0 (1):\penalty0 35--55,
  2016.
\newblock \doi{10.1109/TIT.2015.2482978}.

\bibitem[Polyanskiy and Wu(2017)]{polyanskiy_strong_2017}
Yury Polyanskiy and Yihong Wu.
\newblock Strong data-processing inequalities for channels and {Bayesian}
  networks.
\newblock In Eric Carlen, Mokshay Madiman, and Elisabeth~M. Werner, editors,
  \emph{Convexity and Concentration}, The {IMA} Volumes in Mathematics and its
  Applications, pages 211--249. Springer New York, 2017.
\newblock \doi{10.1007/978-1-4939-7005-6_7}.

\bibitem[Raginsky(2016)]{Raginsky_strong_2016}
Maxim Raginsky.
\newblock Strong data processing inequalities and {$\Phi$}-{Sobolev}
  inequalities for discrete channels.
\newblock \emph{IEEE T. Inform. Theory}, 62\penalty0 (6):\penalty0 3355--3389,
  June 2016.
\newblock \doi{10.1109/TIT.2016.2549542}.

\bibitem[Rouz{\'e} and Datta(2019)]{rouze_concentration_2017}
Cambyse Rouz{\'e} and Nilanjana Datta.
\newblock Concentration of quantum states from quantum functional and
  transportation cost inequalities.
\newblock \emph{J. Math. Phys.}, 60\penalty0 (1):\penalty0 012202, 2019.
\newblock \doi{10.1063/1.5023210}.

\bibitem[Ruskai(1994)]{ruskai_beyond_1994}
Mary~Beth Ruskai.
\newblock Beyond strong subadditivity? {Improved} bounds on the contraction of
  generalized relative entropy.
\newblock \emph{Rev. Math. Phys.}, 06\penalty0 (5):\penalty0 1147--1161, 1994.
\newblock \doi{10.1142/S0129055X94000407}.

\bibitem[Temme et~al.(2010)Temme, Kastoryano, Ruskai, Wolf, and
  Verstraete]{temme_divergence_2010}
K.~Temme, M.~J. Kastoryano, M.~B. Ruskai, M.~M. Wolf, and F.~Verstraete.
\newblock The $\chi^2$-divergence and mixing times of quantum {Markov}
  processes.
\newblock \emph{J. Math. Phys.}, 51\penalty0 (12):\penalty0 122201, 2010.
\newblock \doi{10.1063/1.3511335}.

\bibitem[van Handel(2016)]{van2016probability}
Ramon van Handel.
\newblock Probability in high dimension, 2016.
\newblock available at \url{https://web.math.princeton.edu/~rvan/APC550.pdf}.

\bibitem[Wilde(2013)]{wilde_2013}
Mark~M. Wilde.
\newblock \emph{Quantum Information Theory}.
\newblock Cambridge University Press, 2013.
\newblock \doi{10.1017/CBO9781139525343}.

\bibitem[Xu and Raginsky(2015)]{xu_converses_2015}
Aolin Xu and Maxim Raginsky.
\newblock Converses for distributed estimation via strong data processing
  inequalities.
\newblock In \emph{2015 {ISIT}}, pages 2376--2380, 2015.
\newblock \doi{10.1109/ISIT.2015.7282881}.

\end{thebibliography}

\end{document}